\definecolor{dullmagenta}{rgb}{0.4,0,0.4}   
\definecolor{darkblue}{rgb}{0,0,0.4}
\newcommand{\opnorm}{\@ifstar\@opnorms\@opnorm}
\newcommand{\@opnorms}[1]{%
	\left|\mkern-1.5mu\left|\mkern-1.5mu\left|
	#1
	\right|\mkern-1.5mu\right|\mkern-1.5mu\right|
}
\newcommand{\@opnorm}[2][]{%
	\mathopen{#1|\mkern-1.5mu#1|\mkern-1.5mu#1|}
	#2
	\mathclose{#1|\mkern-1.5mu#1|\mkern-1.5mu#1|}
}
\let\mathbb=\mathds
\newcommandx{\eric}[2][1=]{\todo[inline, author={Eric}, linecolor=yellow,backgroundcolor=yellow!25,bordercolor=yellow,#1]{#2}}
\newcommandx{\ericside}[2][1=]{\todo[author={Eric}, linecolor=yellow,backgroundcolor=yellow!25,bordercolor=yellow,#1]{#2}}
\DeclareMathOperator*{\argmin}{\arg\min}
\DeclareMathOperator{\Tr}{Tr}
\DeclareMathOperator{\tr}{Tr}
\DeclareMathOperator{\e}{\mathrm{e}}
\newcommand{\be}{{\mathbf e}}
\newcommand{\cH}{{\mathcal{H}}}
\newcommand{\cP}{{\mathcal{P}}}
\newcommand{\cD}{{\mathcal{D}}}
\newcommand{\id}{{\mathrm{id}}}
\newcommand{\cX}{{\mathcal{X}}}
\newcommand{\cW}{{\mathcal{W}}}
\newcommand{\cB}{{\mathcal{B}}}
\newcommand{\cU}{{\mathcal{U}}}
\newcommand{\cK}{{\mathcal{K}}}
\newcommand{\cL}{{\mathcal{L}}}
\newcommand{\cF}{{\mathcal{F}}}
\def\0{{\mathbf{0}}}
\def\1{{\mathbf{1}}}
\def\2{{\mathbf{2}}}
\def\3{{\mathbf{3}}}
\def\4{{\mathbf{4}}}
\def\5{{\mathbf{5}}}
\def\6{{\mathbf{6}}}
\def\7{{\mathbf{7}}}
\def\8{{\mathbf{8}}}
\def\9{{\mathbf{9}}}
\def\be{\begin{equation}}
\def\ee{\end{equation}}
\def\bea{\begin{eqnarray}}
\def\eea{\end{eqnarray}}
\def\eps{\varepsilon}
\theoremstyle{plain}
\newtheorem{theo}{Theorem} 
\newtheorem{prop}[theo]{Proposition} 
\newtheorem{lemm}[theo]{Lemma} 
\newtheorem{coro}[theo]{Corollary} 
\theoremstyle{definition}
\theoremstyle{remark}
\newtheorem{remark}{Remark}[section]
\begin{document}
	
\let\origmaketitle\maketitle
\def\maketitle{
	\begingroup
	\def\uppercasenonmath##1{} 
	\let\MakeUppercase\relax 
	\origmaketitle
	\endgroup
}

\title{\bfseries \Large{ Strong converse bounds in quantum network information theory: distributed hypothesis testing and source coding     }}

\author{ {Hao-Chung Cheng$^{1}$, Nilanjana Datta$^{1}$, Cambyse Rouz\'e$^{1,2}$ }}
\address{\small  		
	$^{1}$Department of Applied Mathematics and Theoretical Physics, Centre for Mathematical Sciences\\University of Cambridge, Cambridge CB3 0WA, United Kingdom\\
	$^{2}$Technische Universit{\"a}t M{\"u}nchen, 80333 M{\"u}nchen, Germany}

\email{\href{mailto:HaoChung.Ch@gmail.com}{HaoChung.Ch@gmail.com}, 
\href{mailto:n.datta@statslab.cam.ac.uk}{n.datta@statslab.cam.ac.uk},
\href{mailto:rouzecambyse@gmail.com}{rouzecambyse@gmail.com}}

\date{\today}

\begin{abstract}
We consider a distributed quantum hypothesis testing problem with communication constraints, in which the two hypotheses correspond to two different states of a bipartite
quantum system, multiple identical copies of which are shared between Alice and Bob. They are allowed to perform local operations on their respective systems
and send quantum information to Charlie at limited rates. By doing measurements on the systems that he receives, Charlie needs to infer which of the two different states
the original bipartite state was in, that is, which of the two hypotheses is true. We prove that the Stein exponent for this problem is given by a regularized
quantum relative entropy. The latter reduces to a single letter formula when the alternative hypothesis consists of the products of the marginals of the null hypothesis,
and there is no rate constraint imposed on Bob. Our proof relies on certain properties of the  so-called quantum information bottleneck function.

The second part of this paper concerns the general problem of finding finite blocklength strong converse bounds in quantum network information theory. In the classical case, the analogue of this problem has been reformulated in terms of the so-called image size characterization problem. Here, we extend this problem to the classical-quantum setting and prove a second order strong converse bound for it. As a by-product, we obtain a similar bound for the Stein exponent for distributed hypothesis testing in the special case in which the bipartite system is a classical-quantum system, as well as 
for the task of quantum source coding with compressed classical side information. Our proofs use a recently developed tool from quantum functional inequalities, namely, the tensorization property of reverse hypercontractivity for the quantum depolarizing semigroup.
\end{abstract}
	
\maketitle

\section{Introduction}

Network information theory concerns the study of multi-user information theoretical tasks~\cite{el2011network}, which can be depicted using a network, which connects the different users and typically consists of multiple sources and
channels. Evaluating fundamental limits on the rate of information flow over the network, finding coding methods which achieve these limits, determining capacity regions, and establishing strong converse bounds, are the main problems
studied in this theory. These are challenging problems in the classical case itself, and even more so in the quantum setting. Examples of such tasks include, among others, distributed hypothesis testing, distributed compression and broadcasting. 
In this paper, we focus on the first two of these three examples, with the third being studied in a concurrent paper~\cite{CDR19b}. We first study the task of {\em{distributed hypothesis testing under communication constraints}}, which is elaborated in the next section. The key quantity that we focus on is the so-called Stein exponent. We obtain entropic expressions for it in different scenarios, employing properties of the quantum information bottleneck function~\cite{SCK+17,DHW18}. 

Next, we study the interesting problem of obtaining {\em{second-order strong converse bounds}}. 
For any given task in network information theory, the rates of information flow for which the task can be achieved with asymptotically vanishing error probability, defines a region, called the {\em{achievable rate region}}. A strong converse bound establishes that performing the task at rates lying outside this region leads to an error probability which goes to one in the asymptotic limit. A {\em{second-order}} strong converse bound additionally implies that the convergence to one is exponentially fast in the number of uses of the underlying resources.

We employ a powerful analytical toolkit analogous to the one used in the classical setting~\cite{liu2017beyond,LHV18} to obtain such bounds for a large class of network information tasks, including distributed hypothesis testing, source coding with side information at the decoder, and degraded broadcast channel coding. The method is based on an important functional inequality, namely the tensorization property of the reverse hypercontractivity of classical Markov semigroups. 
In \cite{BDR18}, the authors showed that the generalized quantum depolarizing semigroup satisfies such an inequality (see also \cite{Capel_2018}), and used it to prove finite blocklength, second-order strong converse bounds for the tasks of binary quantum hypothesis testing and point-to-point classical-quantum channel coding. The second half of this paper, which deals with strong converses, can be viewed as a continuation of the work initiated in \cite{BDR18} to the setting of classical-quantum network information theory. We focus, in particular, on the tasks of quantum distributed hypothesis testing with communication constraints and quantum source coding with classical side information. In a companion article \cite{CDR19b}, we show that this method also provides the second-order strong converse bounds for classical-quantum degraded broadcast channels.

\subsection{Distributed hypothesis testing under communication constraints}

 Hypothesis testing is a fundamental task, used for making statistical decisions about experimental data. It has been extensively studied in various seminal papers, including~\cite{NP33, Che52, Che56, Hoe65, Bla74}.
Usually the statistician has access to the entire data, and based on it, makes his inference as to which of a given set of hypotheses is true. Hypothesis testing is used widely and is of prime importance in Information Theory. The simplest version of it is that of binary hypothesis testing, in which the statistician needs to decide between two hypotheses -- the null hypothesis ($\mathsf{H}_0$) and the alternative hypothesis ($\mathsf{H}_1$). There is a tradeoff between the probabilities of the two possible errors that may be incurred: inferring the hypothesis to be $\mathsf{H}_1$ when $\mathsf{H}_0$ is true, or vice versa. These are called the {\emph{type I}} and {\emph{type II}} error probabilities, respectively. The celebrated Stein's lemma~\cite{Che56} provides an expression for the minimal type II error probability when the type I error probability is below a given threshold value.
\smallskip

A variant of the above task is one in which the data is distributed, e.g.~shared between two distant parties (say, Alice and Bob) who are not allowed to communicate with each other. The statistician (say, Charlie) does not have direct access to the data but instead learns about it from Alice and Bob, who can send information to him via noiseless classical channels at prescribed rates (say, $r_1$ and $r_2$, respectively). This variant of hypothesis testing is called {\emph{(bivariate) distributed hypothesis testing under communication constraints} \cite{AC86, Han87, HK89, Han98}.

In this paper, we study quantum versions of the above task. Suppose Alice and Bob share multiple (say $n$) identical copies of a bipartite quantum system $XY$, which is known to be in one of two states $\rho_{XY}$ and $\widetilde{\rho}_{XY}$. The system $X$ is with Alice and the system $Y$ is with Bob. They are allowed to perform local operations on their systems and then send the resulting systems to Charlie (via noiseless quantum channels). Charlie performs a joint measurement on the systems that he receives, in order to infer what the original state of the system $XY$ was. In the context of hypothesis testing, the two hypotheses are given by 
\begin{align} \label{eq:general}
\begin{dcases}
\mathsf{H}_0: \rho_{XY}^{\otimes n} \\
\mathsf{H}_1: \widetilde{\rho}_{XY}^{\otimes n} \\
\end{dcases} \quad {\hbox{with}}\,\, n\in\mathbb{N}.
\end{align}
The local operations that Alice and Bob do, on the systems $X^n$ and $Y^n$ in their possession, are given by linear, completely positive trace-preserving maps ({{i.e.~}}quantum channels) $\mathcal{F}_n\equiv\mathcal{F}_n^{X^n\to W^n}$ and $\mathcal{G}_n\equiv\mathcal{G}_n^{Y^n\to \widetilde{W}^n}$, respectively.
So the state that Charlie receives is one of the following two
\begin{align}
\sigma_{W^n \widetilde{W}^n} :&= \Big( \mathcal{F}_n^{X^n\to W^n} \otimes  \mathcal{G}_n^{Y^n\to \widetilde{W}^n} \Big) \rho_{XY}^{\otimes n}; \label{eq:sigma0} \\
\widetilde{\sigma}_{W^n \widetilde{W}^n}   :&= \Big(  \mathcal{F}_n^{X^n\to W^n} \otimes \mathcal{G}_n^{Y^n\to \widetilde{W}^n} \Big) \widetilde{\rho}_{XY}^{\otimes n}. \label{eq:sigma1}
\end{align}

He performs a binary POVM on the state that he receives, to decide which of the two hypotheses, $\mathsf{H}_0$ and $\mathsf{H}_1$, is true. We denote the probabilities of type-I and type-II errors associated to this hypothesis testing problem as follows:
$\alpha$ and $\beta$, respectively.}
In this paper, we are interested in the trade-off between the type-I and the type-II errors given that the communication from both Alice and Bob to Charlie are limited. This is the setting of \emph{(bivariate) distributed quantum hypothesis testing under communication constraints\footnote{Henceforth, we suppress the phrase {\emph{bivariate}} for simplicity.}.}
Specifically, we assume that
\begin{align}
\log |W^n| \leq n r_1; \quad\log |\widetilde{W}^n| \leq n r_2
\end{align}
for some $r_1, r_2 >0$
(see Figure~\ref{fig:protocol} below). Here, we use $|X|$ to denote the dimension of the Hilbert space $\cH_X$ corresponding to the system $X$. 
\medskip

The operational quantity that we focus on is the following: given any $\eps\in[0,1]$, $r_1,r_2>0$, we define the \emph{quantum Stein exponent} for this hypothesis testing task (hereafter referred to simply as the {\em{Stein exponent}}) as
\begin{align} \label{eq:q_Stein_exponent}
e\left(\eps|r_1, r_2 \right) := \liminf_{n\to\infty} \left\{-\frac1n \log \beta_{r_1,r_2}(n,\eps) \right\},
\end{align}
where
\begin{align} \label{eq:beta_R}
\beta_{r_1,r_2}(n,\eps) := \inf_{ \substack{\mathcal{F}_n:\frac1n\log|W^n|\leq  r_1\\ \mathcal{G}_n:\frac1n\log|\widetilde{W}^n| \leq r_2} } \beta(n,\eps,\mathcal{F}_n,\mathcal{G}_n),
\end{align}
and 
\begin{align}
\beta(n,\eps, \mathcal{F}_n, \mathcal{G}_n ) := \inf_{ \substack{0\leq T \leq \mathds{1} \\ \Tr\left[(\mathds{1}-T) \sigma_{W^n \widetilde{W}^n} \right] \leq \eps } } \Tr\left[T\, \widetilde{\sigma}_{W^n \widetilde{W}^n} \right].
\end{align}
denotes the the optimal type-II error given that the type-I error is at most $\eps$.

\begin{figure}[h!]
	\centering
	\includegraphics[width = 0.90\linewidth]{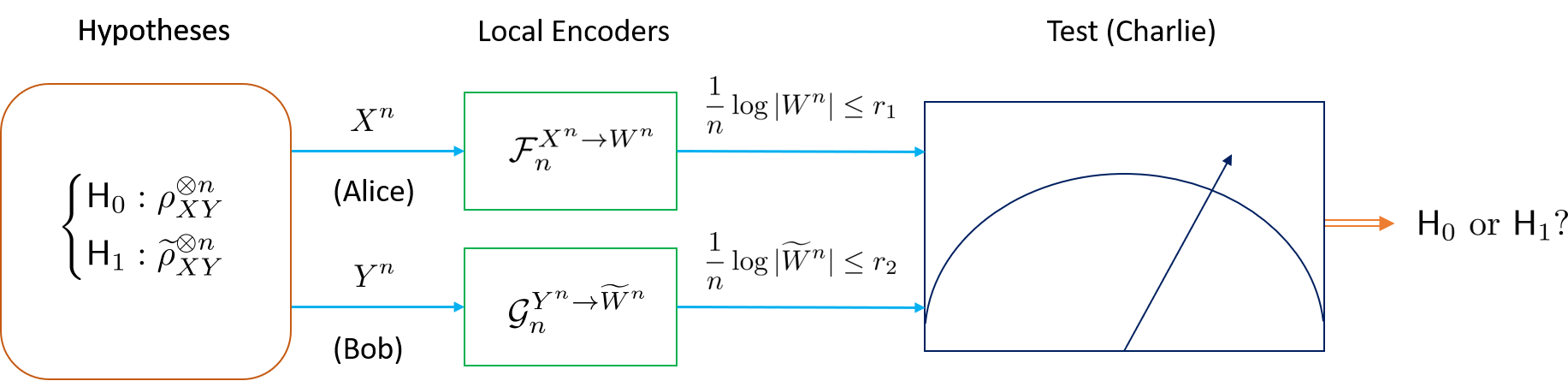}
	\caption{	The scheme of distributed quantum hypothesis testing. Moreover, unlimited entanglement is allowed between Alice and Charlie, and Bob and Charlie.
		\label{fig:protocol}}
\end{figure}
Note that in the absence of the communication constraints, {{i.e.}}~when $r_1$ and $r_2$ are both infinite, the Stein exponent, defined above, reduces to the usual quantum Stein exponent, which is known to be equal to the relative entropy between the two hypotheses by the quantum Stein lemma~\cite{HP91, NO00}. Hence,
\begin{align} \label{eq:qSL}
e(\eps |\infty, \infty) = D({\rho}_{XY}||\widetilde{\rho}_{XY}), \quad \forall \eps \in (0,1),
\end{align}
where  $D(\rho\|\sigma)$  denotes the quantum relative entropy \cite{Ume62} (see \eqref{eq:relative_entropy} in Section~\ref{sec:notation}).

Our results for the tasks of distributed hypothesis testing can be summarized as follows:
\begin{itemize}
\item{Firstly, we show that the Stein exponent is given by a regularized quantum relative entropy (Theorem~\ref{theo:Stein_general}):
\begin{align}
\lim_{\eps\to 0}e\left(\eps|r_1, r_2 \right) := \lim_{n\to \infty} \,\, \sup_{ \substack{\mathcal{F}_n:\frac1n\log|W^n|\leq  r_1\\ \mathcal{G}_n:\frac1n\log|\widetilde{W}^n| \leq r_2} } \frac1n D \left(\sigma_{W^n \widetilde{W}^n}\| \widetilde{\sigma}_{W^n \widetilde{W}^n} \right). \label{res-1}
\end{align}
Note that a single-letter expression for expression on the right-hand side of \eqref{res-1} is not even known in the classical case.
\smallskip

} 
\item{Secondly, we study the case in which the alternative hypothesis is a product state of the marginals, {i.e.}~$\widetilde{\rho}_{XY} = \rho_X\otimes \rho_Y$. We refer to this case as the \emph{distributed quantum hypothesis testing against independence} or simply \emph{testing against independence}.
We establish the following single-letter formula for the corresponding Stein exponent when there is no rate constraint imposed on Bob (Theorem~\ref{theo:Stein_independence}): setting $r_1\equiv r$ we have,
\begin{align}  \label{eq:single1}
\lim_{\eps\to 0} e(\eps| r,\infty) =  \sup_{ \substack{\mathcal{N}^{X\to U} \\ \frac12 I(X';U)_{\tilde{\tau}} \leq r}  } I\left( U;Y\right)_\omega,
\end{align}
where $\mathcal{N}^{X\to U}$ is a quantum channel from systems $X$ to $U$,
$$
\omega_{UYR} := (\mathcal{N}^{X\to U} \otimes \id_{YR}) \psi_{XYR},$$
 with $\psi_{XYR}$ being a purification of $\rho_{XY}$, $R$ being an inaccessible reference system;
$$\tilde{\tau}_{X'U} := (\id_{X'}\otimes \mathcal{N}^{X\to U}) \tau_{X'X},$$ where $\tau_{X'X}$ is a purification of $\rho_X$;
and $I(A;B)$ is the quantum mutual information of a state $\tau_{AB}$ (see \eqref{eq:mutual} in Section~\ref{sec:notation}). We remark that the above quantity is the dual of the so-called \emph{quantum information bottleneck function}~\cite{SCK+17,DHW18}. 
\medskip

\noindent
{\emph{Remark:}} When the rate $r$ is above the von Neumann entropy of $X$, the supremum in \eqref{eq:single1} is attained by an identity map. Hence, our result immediately yields that
\begin{align} 
\lim_{\eps\to 0} e(\eps| r,\infty) = I(X;Y)_\rho\equiv D(\rho_{XY}\|\rho_X\otimes \rho_Y)\,,
\end{align}
which coincides with the expression for usual quantum Stein exponent in the absence of communication constraints.
\smallskip

}
\item{ The statement of the original quantum Stein lemma~\cite{HP91, ON00} holds for all $\eps \in (0,1)$. This in turn implies the so-called \emph{strong converse property} for quantum hypothesis testing, i.e.~if for any test $T_n$, the type II error probability is restricted to be less than or equal to $\e^{-nr}$, with $r$ being greater the the relative entropy between the states corresponding to the two hypotheses, then the associated type I error goes to one as $n \to \infty$ (with $n$ being the number of copies of the states available). 

Our third result, given by Theorem~\ref{theo:sc_cq}, provides a second-order strong converse bound on the Stein exponent for the distributed quantum hypothesis testing task introduced in the previous section, {\emph{in the special case in which the system $X$ is classical}} (associated with a random variable taking values in a finite set $\cX$). It states that 
for every $\eps\in(0,1)$, there exists a $K>0$ such that for all $r>0$,
 \begin{align}\label{strongconverse}
 -\frac1n \log \beta_{r,\infty}(n,\eps) \leq  \sup_{ \substack{\mathcal{N}^{X\to U} \\  I(X';U)_\omega \leq r}  } I\left( U;Y\right)_\omega +
 \frac{K}{\sqrt{n}}\,,
 \end{align}
where 
$$
\omega_{UX'Y} := \left(\mathcal{N}^{X\to U} \otimes {\rm{id}}_{X'} \otimes {\rm{id}}_{Y}\right) \rho_{XX'Y}.
$$
 with $\mathcal{N}^{X\to U}$ being a classical channel (i.e.~a stochastic map)
that maps the random variable $X$ to $U$, and $X'$ being a copy of $X$. See Section~\ref{strongconverseQHT} for details.
}
\end{itemize}
\subsection{Image-size characterization problem and source coding with classical side information} 

The key ingredient of the proof of our result (Theorem~\ref{theo:sc_cq}) on the second-order strong converse bound (Proposition~\ref{prop:variational}) for distributed quantum hypothesis testing, can also be employed to establish strong converse bounds for a more general task, namely, the so-called \textit{image-size characterization problem}. The latter, which provides a unifying framework for the analysis of a wide variety of source and channel network problems, can be roughly explained as follows: let $\cX,\mathcal{Y}$ be two (say finite) sets, and provide $\mathcal{X}^n$, respectively $\mathcal{Y}$, with a non-negative measure $\mu_n$, respectively $\nu_{{Y}}$. Then, given any classical channel $Q_{Y|X}$ and $0<1\eps<1$, we are interested in a lower-bound on the $\nu_Y^{\otimes n}$-measure of any set $\mathcal{B}\subseteq \mathcal{Y}^n$ in terms of its $\eps$-\textit{preimage} under $Q_{Y^n|X^n}$: more precisely, given any $0<\eps,r<1$, we want to find a lower-bound on the following quantity:

\begin{align}\label{eq:ISC}
\min_{\mathcal{B}\subset \mathcal{Y}^n :\,\mu_n\big(\big\{x^n|\,\,Q_{Y^n|X^n=x^n}(\mathcal{B})>1-\eps)  \big\}\big)>r}\,\nu_Y^{\otimes n}(\mathcal{B})\,.
\end{align}

 In Theorem \ref{theoimagesize}, we extend the above definition to the classical-quantum setting, and provide a second order strong converse bound (similar to the one of (\ref{strongconverse})) on \eqref{eq:ISC}. Due to the wide use of the image-size characterization method in classical network information theory, we expect our result to find applications in corresponding classical-quantum settings. As a first application of this, we establish a strong converse 
bound for the task of quantum source coding with compressed classical side information at the decoder \cite{hsieh2016channel, DHW18} (see Figure~\ref{WAK} and Theorem~\ref{theosourcecoding}). 
\medskip
\begin{figure}[h!]
	\centering
	\includegraphics[width = 0.6\linewidth]{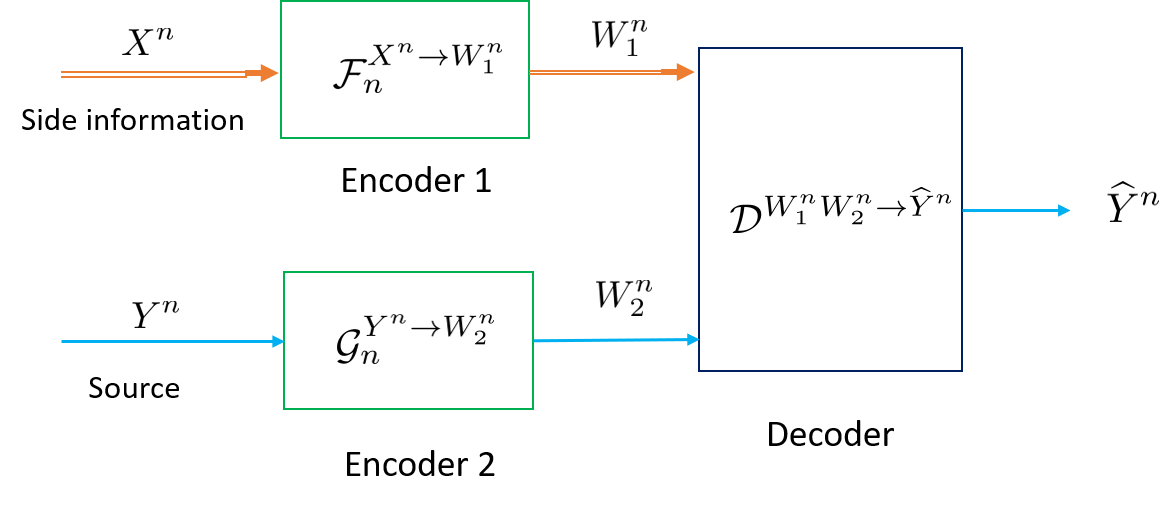}
	\caption{	The scheme of quantum source coding with classical side information.
		\label{WAK}}
\end{figure}

\textbf{Layout of the paper:} In Section~\ref{sec:notation} we introduce the necessary notations and definitions. In Section~\ref{sec:Stein}, we show that the Stein exponent for distributed quantum hypothesis testing under communication constraints, is given by a regularized formula involving the quantum relative entropy. In Section~\ref{sec:single-letter0}, we obtain a single-letter expression for the Stein exponent for the task of testing against independence.
In Section~\ref{sec:sc}, we prove a second-order strong converse bound for the tein exponent when the bipartite states corresponding to the two hypotheses are classical-quantum states. We also obtain similar bounds for the image-size characterization problem, and for the task of quantum source coding with classical side information. Lastly, we conclude this paper with a discussion of the results and future directions in section~\ref{sec:discussions}.

\section{Notations and Definitions} \label{sec:notation}
Throughout this paper, we consider finite-dimensional Hilbert spaces, and discrete random variables which take values in finite sets. The subscript of a Hilbert space (say $B$), denotes the quantum system (say $\cH_B$) to which it is associated. We denote its dimension as $d_B := {\rm{dim}}\,{\mathcal{H}}_B$. 
Let $\mathbb{N}$, $\mathbb{R}$, and $\mathbb{R}_{\geq 0}$ be the set of natural numbers, real number, and non-negative real numbers, respectively.
Let ${\mathcal{B}}({\mathcal{H}})$ denote the algebra of linear operators acting on a Hilbert space ${\mathcal{H}}$, ${\mathcal{P}}({\mathcal{H}}) \subset  {\mathcal{B}}({\mathcal{H}})$ denote the set of positive semi-definite operators, ${\mathcal{D}}({\mathcal{H}}) \subset {\mathcal{P}}({\mathcal{H}})$ the set of quantum states (or density matrices): ${\mathcal{D}}({\mathcal{H}}) :\{ \rho \in {\mathcal{P}}({\mathcal{H}})\,:\, \tr \rho = 1\}$. We will use the notation $\rho^{X}_{Y}$ for a state depending on a random variable $X$, where $Y$ denotes the  register corresponding to the Hilbert space on which the state acts. A quantum operation (or quantum channel) is a superoperator given by a linear completely positive trace-preserving (CPTP) map. A quantum operation ${\mathcal{N}}^{A \to B}$ maps operators in ${\mathcal{B}}({\mathcal{H}}_A)$ to operators in ${\mathcal{B}}({\mathcal{H}}_B)$. We denote the identity superoperator as ${\rm{id}}$. A superoperator $\Phi: {\mathcal{B}}({\mathcal{H}}) \to {\mathcal{B}}({\mathcal{H}})$ is said to be unital if $\Phi(\mathbb{I}) = \mathbb{I}$, where $\mathbb{I}$ denotes the identity operator in ${\mathcal{B}}({\mathcal{H}})$ .

The von Neumann entropy of a state $\rho$ is defined as $S(\rho):= - \tr [\rho \log \rho]$. Here, and henceforth, logarithms are taken to base $2$. 
The quantum relative entropy between a state $\rho \in {\mathcal{D}}({\mathcal{H}})$ and a positive semi-definite operator $\sigma$ is defined as
\begin{align} \label{eq:relative_entropy}
D(\rho||\sigma) &: = \Tr\left[\rho (\log \rho - \log \sigma)\right].
\end{align}
It is well-defined if ${\rm{supp}}\,\rho \subseteq {\rm{supp}}\,\sigma$, and is equal to $+\infty$ otherwise. Here ${\rm{supp}}\,A$ denotes the support of the operator $A$. 
The quantum relative R\'enyi entropy of order $\alpha$~\cite{Pet86} is defined for $\alpha \in (0,1)$ as follows:
\begin{align}
D_\alpha(\rho||\sigma) &:= \frac{1}{\alpha-1} \log \Tr\left( \rho^\alpha \sigma^{1-\alpha} \right).
\end{align}
It is known that $D_\alpha(\rho||\sigma) \to D(\rho||\sigma)$ as $\alpha \to 1$  (see e.g.~\cite[Corollary 4.3]{Tom16}, \cite{Ume62}).
An important property satisfied by these relative entropies is the so-called {\em{data-processing inequality}}, which is given by
$ D_\alpha(\Lambda(\rho)\|\Lambda(\sigma)) \leq D_\alpha(\rho\|\sigma)$ for all $\alpha\in(0,1)$ and quantum operations $\Lambda$.
This induces corresponding data-processing inequalities for the quantities derived from these relative entropies, such as the quantum mutual information information \eqref{eq:mutual} and the conditional entropy \eqref{eq:conditional}.

For a bipartite state $\rho_{AB} \in \mathcal{D}(\mathcal{H}_A \otimes \mathcal{H}_B)$, the quantum mutual information and the conditional entropy are given in terms of the quantum relative entropy as follows:
\begin{align}
I(A;B)_\rho &= D\left( \rho_{AB} \| \rho_A \otimes \rho_B \right); \label{eq:mutual} \\
H(A|B)_\rho &= -D\left( \rho_{AB} \| \mathbb{I}_A \otimes \rho_B \right). \label{eq:conditional}
\end{align}

\section{The Stein exponent for distributed quantum hypothesis testing under communication constraints}  \label{sec:Stein}

\subsection{The general situation}

In this section, we establish bounds on the Stein exponent for the hypotheses given in  ~\eqref{eq:general}.
We first introduce an entropic quantity:
\begin{align} 
\begin{split} \label{eq:theta}
\theta(r_1,r_2) &:= \sup_{n \in\mathbb{N}} \theta_n(r_1,r_2), \\
\theta_n(r_1,r_2) &:= \sup_{ \substack{\mathcal{F}_n:\frac1n\log|W^n|\leq  r_1\\ \mathcal{G}_n:\frac1n\log|\widetilde{W}^n| \leq r_2} } \frac1n D\left( \sigma_{W^n \widetilde{W}^n}\| \widetilde{\sigma}_{W^n \widetilde{W}^n} \right)\,,
\end{split}
\end{align}
where the states $ \sigma_{W^n \widetilde{W}^n}$ and $\widetilde{\sigma}_{W^n \widetilde{W}^n}$ are defined through equations (\ref{eq:sigma0}) and (\ref{eq:sigma0}).
We provide properties of $\theta$ below, which will be useful later.
\begin{lemm} \label{lemm:prop_theta}
	Let $\theta_n$ and $\theta$ be defined in  ~\eqref{eq:theta}. Then, the following holds:
	\begin{enumerate}[(a)]
		\item\label{prop_theta-a} For every $r_1,r_2\geq 0$, the map $n\mapsto \theta_n(r_1,r_2)$ is monotonically increasing. Hence,
		\begin{align}
		\theta(r_1,r_2) = \lim_{n\to \infty} \theta_n(r_1, r_2).
		\end{align}
		
		\item\label{prop_theta-b} The map $(r_1,r_2) \mapsto \theta(r_1,r_2)$ is jointly concave on $[0,\infty)\times[0,\infty)$.
		
		\item\label{prop_theta-c} The map $(r_1,r_2) \mapsto \theta(r_1,r_2)$ is continuous on $(0,\infty)\times (0,\infty)$.
	\end{enumerate}
\end{lemm}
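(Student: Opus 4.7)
The plan is to handle the three parts in order: (a) follows from a tensor-product/superadditivity argument combined with Fekete's lemma, (b) from a time-sharing construction that averages protocols of different blocklengths, and (c) is then an immediate consequence of concavity on an open convex set.

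For (a), the key observation is that given admissible protocols $(\mathcal{F}_n,\mathcal{G}_n)$ and $(\mathcal{F}_m,\mathcal{G}_m)$ at blocklengths $n$ and $m$ with the same rate constraints $(r_1,r_2)$, their tensor product $(\mathcal{F}_n\otimes\mathcal{F}_m,\,\mathcal{G}_n\otimes\mathcal{G}_m)$ is admissible at blocklength $n+m$, since $\log|W^{n+m}|=\log|W^n|+\log|W^m|\leq(n+m)r_1$ and similarly for $\widetilde{W}$. Additivity of the quantum relative entropy under tensor products then yields
\begin{align}
(n+m)\,\theta_{n+m}(r_1,r_2)\geq n\,\theta_n(r_1,r_2)+m\,\theta_m(r_1,r_2),
\end{align}
i.e.~the sequence $(n\,\theta_n(r_1,r_2))_n$ is superadditive. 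Fekete's lemma then gives $\lim_n \theta_n(r_1,r_2)=\sup_n \theta_n(r_1,r_2)=\theta(r_1,r_2)$; in particular $\theta_{kn}(r_1,r_2)\geq \theta_n(r_1,r_2)$ for every $k\in\mathbb{N}$, which furnishes the monotonicity content of (a).

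For (b), fix two rate pairs $(r_1,r_2)$ and $(r_1',r_2')$ and a rational $\lambda=p/(p+q)$ with $p,q\in\mathbb{N}$. For each $n$ choose near-optimal admissible protocols at blocklengths $pn$ and $qn$ for the respective rate pairs, and tensor them to form a protocol at blocklength $(p+q)n$: its output dimensions satisfy
\begin{align}
\log|W^{(p+q)n}|\leq pnr_1+qnr_1'=(p+q)n\bigl(\lambda r_1+(1-\lambda)r_1'\bigr),
\end{align}
with an analogous bound for $\widetilde{W}$. Additivity of the relative entropy and the definition of $\theta_{(p+q)n}$ then yield
\begin{align}
\theta_{(p+q)n}\bigl(\lambda r_1+(1-\lambda)r_1',\,\lambda r_2+(1-\lambda)r_2'\bigr)\geq \lambda\,\theta_{pn}(r_1,r_2)+(1-\lambda)\,\theta_{qn}(r_1',r_2').
\end{align}
Sending $n\to\infty$ and invoking (a) gives the concavity inequality for every rational $\lambda\in[0,1]$. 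Since $\theta$ is non-decreasing in each coordinate (a larger rate constraint enlarges the feasible set of protocols), it is Borel-measurable; combined with rational (in particular midpoint) concavity, the classical fact that a measurable midpoint-concave function on a convex subset of $\mathbb{R}^d$ is concave upgrades this to full joint concavity on $[0,\infty)\times[0,\infty)$.

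Part (c) then follows from the standard convex-analysis fact that any finite concave function on an open convex subset of $\mathbb{R}^d$ is automatically continuous (indeed locally Lipschitz). The main technical hurdle I anticipate is the rate-constraint bookkeeping when combining protocols of different blocklengths in (b), together with the measurability-based upgrade from rational to full concavity; once (b) is established, (c) is essentially free.
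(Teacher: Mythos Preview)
Your proof is correct and follows essentially the same route as the paper: superadditivity of $n\theta_n$ via tensor products and additivity of the relative entropy for (a), a time-sharing construction for (b), and the standard concavity-implies-continuity fact for (c). If anything you are slightly more careful than the paper in (b): the paper only establishes midpoint concavity and then asserts joint concavity without further comment, whereas you explicitly invoke the coordinatewise monotonicity (hence measurability) of $\theta$ to upgrade rational/midpoint concavity to full joint concavity.
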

\begin{proof}
	The proof follows the same reasoning as \cite[Lemma 1]{AC86}, and is provided here for sake of completeness.
	\begin{enumerate}
		\item[\ref{prop_theta-a}] 
	By definition, the map $n\mapsto n\,\theta_n(r_1,r_2)$ is super-additive, i.e.~
		\begin{align}
		&(k+\ell) \theta_{k+\ell}(r_1,r_2) \notag \\
		&= \sup_{\mathcal{F}_{k+\ell} ,\, \mathcal{G}_{k+\ell  } } \left\{  D\left( \sigma_{W^{k+\ell} \widetilde{W}^{k+\ell } }\| \widetilde{\sigma}_{W^{k+\ell} \widetilde{W}^{k+\ell} } \right): |W^{k+\ell}|\leq 2^{(k+\ell) r_1}, \, |\widetilde{W}^{k+\ell}| \leq2^{(k+\ell) r_2} \right\} \notag \\
		&\geq \sup_{ \substack{\mathcal{F}_{k}^{(1)}\otimes \mathcal{F}_\ell^{(2)} \\ \mathcal{G}_{k}^{(1)}\otimes \mathcal{G}_\ell^{(2)} } } \left\{  D\left( \sigma_{W^{k} \widetilde{W}^{k } } \otimes \sigma_{W^{\ell} \widetilde{W}^{\ell } } \| \widetilde{\sigma}_{W^{k} \widetilde{W}^{k} } \otimes  \widetilde{\sigma}_{W^{\ell} \widetilde{W}^{\ell} } \right):|W^{k}|\leq 2^{k r_1},\, |W^{\ell}|\leq 2^{\ell r_1}, \, |\widetilde{W}^{k}|\leq 2^{k r_2}, \, |\widetilde{W}^{\ell}|\leq 2^{\ell r_2} \right\} \notag \\
		&= \sup_{\mathcal{F}_{k} ,\, \mathcal{G}_{k  } } \left\{  D\left( \sigma_{W^{k} \widetilde{W}^{k} }\| \widetilde{\sigma}_{W^{k} \widetilde{W}^{k} } \right): |W^{k}|\leq 2^{k r_1}, \, |\widetilde{W}^{k}| \leq 2^{k r_2} \right\} \notag \\
		&\quad + \sup_{\mathcal{F}_{\ell} ,\, \mathcal{G}_{\ell} } \left\{  D\left( \sigma_{W^{\ell} \widetilde{W}^{\ell} }\| \widetilde{\sigma}_{W^{\ell} \widetilde{W}^{\ell} } \right): |W^{\ell}|\leq  2^{\ell r_1}, \, |\widetilde{W}^{\ell}| \leq 2^{\ell r_2} \right\} \notag \\
		&= k\theta_k(r_1,r_2) + \ell\theta_\ell(r_1,r_2).
		\end{align}
		Here, in the first inequality we restrict the constraint to local maps, and the second equality follows from the additivity of the relative entropy.
		Since this holds for every $k,\ell\in\mathbb{N}$, the assertions in \ref{prop_theta-a} hold.
		
		\item[\ref{prop_theta-b}] For every $n\in\mathbb{N}$, we have
		\begin{align}
		&\theta_{2n}\left( \frac{r_1 + \bar{r}_2}{2}, \frac{r_1 + \bar{r}_2}{2} \right)=
		\sup_{\mathcal{F}_{2n},\, \mathcal{G}_{2n} } \left\{ \frac{1}{2n} D\left( \sigma_{W^{2n} \widetilde{W}^{2n} }\| \widetilde{\sigma}_{W^{2n} \widetilde{W}^{2n} } \right): |W^{2n}|\leq   2^{n(r_1 + \bar{r}_1)}, \, |\widetilde{W}^{2n}| \leq 2^{n(r_2 + \bar{r}_2)} \right\} \notag \\
		&\geq \frac12 \sup_{ \substack{\mathcal{F}_{n}^{(1)}\otimes \mathcal{F}_n^{(2)} \\ \mathcal{G}_{n}^{(1)}\otimes \mathcal{G}_n^{(2)} } } \left\{  D\left( \sigma_{W^{n}_{(1)} \widetilde{W}^{n}_{(1)} } \otimes \sigma_{W^{n}_{(2)} \widetilde{W}^{n }_{(2)} } \| \widetilde{\sigma}_{W^{k}_{(1)} \widetilde{W}^{k}_{(2)} } \otimes  \widetilde{\sigma}_{W^{\ell}_{(1)} \widetilde{W}^{\ell}_{(2)} } \right): \right. 
		\notag \\ &\qquad\qquad\qquad\quad \left. |W^{n}_{(1)}|\leq 2^{n r_1},\, |W^{n}_{(2)}|\leq 2^{n \bar{r}_1}, \, |\widetilde{W}^{k}_{(1)}|\leq 2^{n r_2}, \, |\widetilde{W}^{n}_{(2)}|\leq 2^{n \bar{r}_2} \right\} \notag \\
		&= \frac12\left( \theta_n(r_1,r_2) + \theta(\bar{r}_1,\bar{r}_2) \right).
		\end{align}
		Then, letting $n\to \infty$ on both sides and recalling item~\ref{prop_theta-a} implies the assertion in item~\ref{prop_theta-b}.
		
		\item[\ref{prop_theta-c}] 
		Since a concave function is continuous in its interior, the assertion in item~\ref{prop_theta-c} follows from item~\ref{prop_theta-b}.
	\end{enumerate}
\end{proof}

\begin{theo}[The Stein exponent] \label{theo:Stein_general}
	For the hypotheses in  ~\eqref{eq:general}, the following holds for every $r_1, r_2 > 0$.
	\begin{itemize}
		\item[a)] Achievability (direct part):
		\begin{align} \label{eq:Stein_general0}
		e(\eps|r_1,r_2)\geq \theta(r_1, r_2), \quad \forall\,\eps\in(0,1);
		\end{align}
		
		\item[b)] Optimality (weak converse):
		\begin{align} \label{eq:Stein_general1}
		\lim_{\eps\to0} e(\eps|r_1,r_2)\ \leq \theta(r_1,r_2).
		\end{align}
	\end{itemize}
\end{theo}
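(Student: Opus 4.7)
The plan is to follow the Ahlswede--Csisz\'ar template of \cite{AC86}, with the classical ingredients replaced by their quantum counterparts. Part (a) reduces, via a block-coding argument, to the quantum Stein's lemma of Hiai--Petz and Ogawa--Nagaoka~\cite{HP91, NO00}. Part (b) rests on the data processing inequality (DPI) for the quantum relative entropy under a two-outcome measurement, which delivers the Fano-type bound $(1-\epsilon)(-\log \beta) \leq D(\rho\|\sigma) + h(\epsilon)$, where $h$ denotes the binary entropy and $\beta$ is any type-II error achievable by a test with type-I error at most $\epsilon$.

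For part (a), I invoke the superadditivity in Lemma~\ref{lemm:prop_theta} to write $\theta(r_1,r_2) = \sup_k \theta_k(r_1,r_2)$. Given $\delta > 0$, pick $k \in \mathbb{N}$ together with local channels $\mathcal{F}_k^{X^k \to W^k}$ and $\mathcal{G}_k^{Y^k \to \widetilde{W}^k}$ meeting the block rate constraints $\log|W^k| \leq k r_1$ and $\log|\widetilde{W}^k| \leq k r_2$, and such that $\tfrac{1}{k} D\bigl(\sigma_{W^k \widetilde{W}^k}\bigl\|\widetilde{\sigma}_{W^k \widetilde{W}^k}\bigr) \geq \theta(r_1,r_2) - \delta$. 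For an arbitrary blocklength $n = mk + s$ with $0 \leq s < k$, I apply $\mathcal{F}_k^{\otimes m}$ and $\mathcal{G}_k^{\otimes m}$ to the first $mk$ copies and trace out the remaining $s$; the encoder output dimensions $|W^k|^m \leq 2^{mk r_1} \leq 2^{n r_1}$ (and likewise for $\widetilde{W}$) respect the length-$n$ rate constraints. Since the encoded states are the i.i.d.\ tensor powers $\sigma_{W^k \widetilde{W}^k}^{\otimes m}$ versus $\widetilde{\sigma}_{W^k \widetilde{W}^k}^{\otimes m}$, the quantum Stein's lemma furnishes, for every $\epsilon \in (0,1)$, a Charlie-side test with $-\tfrac{1}{m}\log \beta(m, \epsilon, \mathcal{F}_k^{\otimes m}, \mathcal{G}_k^{\otimes m}) \to D\bigl(\sigma_{W^k\widetilde{W}^k}\bigl\|\widetilde{\sigma}_{W^k\widetilde{W}^k}\bigr)$ as $m \to \infty$. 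Dividing by $n$ and using $m/n \to 1/k$ yields $\liminf_{n\to\infty}\{-\tfrac{1}{n}\log \beta_{r_1,r_2}(n,\epsilon)\} \geq \tfrac{1}{k} D\bigl(\sigma_{W^k\widetilde{W}^k}\bigl\|\widetilde{\sigma}_{W^k\widetilde{W}^k}\bigr) \geq \theta(r_1,r_2) - \delta$; letting $\delta \to 0$ closes the direct part.

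For part (b), fix any admissible pair $(\mathcal{F}_n, \mathcal{G}_n)$ and any test $0 \leq T \leq \mathds{1}$ with $\Tr[(\mathds{1}-T)\sigma_{W^n\widetilde{W}^n}] \leq \epsilon$. Applying DPI for the quantum relative entropy under the two-outcome measurement $\{T, \mathds{1}-T\}$, combined with the elementary inequality $D_b(p\|q) \geq -h(p) + p\log\tfrac{1}{q}$ for the resulting binary relative entropy (and $h(p) \leq h(\epsilon) \vee \log 2$ for $p \geq 1-\epsilon$), gives
\begin{equation}
D\bigl(\sigma_{W^n\widetilde{W}^n}\bigl\|\widetilde{\sigma}_{W^n\widetilde{W}^n}\bigr) \geq (1-\epsilon)\log\frac{1}{\Tr[T\,\widetilde{\sigma}_{W^n\widetilde{W}^n}]} - h(\epsilon).
\end{equation}
Taking the infimum over admissible tests on the right and then the supremum over $(\mathcal{F}_n, \mathcal{G}_n)$ meeting the rate constraints, and using $\theta_n(r_1,r_2) \leq \theta(r_1,r_2)$, produces $(1-\epsilon)\bigl[-\log \beta_{r_1,r_2}(n,\epsilon)\bigr] \leq n\,\theta(r_1,r_2) + h(\epsilon)$. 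Dividing by $n$, taking $\liminf_{n\to\infty}$, and finally sending $\epsilon \to 0$ yields the weak converse.

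No single step is conceptually deep once quantum Stein and DPI are on the shelf; the main care is bookkeeping, specifically verifying that the block construction in part (a) respects the rate constraints for every (not just block-aligned) $n$, and that the supremum over encoders in part (b) may be taken pointwise from the Fano bound. The genuinely harder matter is obtaining a \emph{strong} converse valid for every $\epsilon \in (0,1)$: the $(1-\epsilon)$ factor in the Fano bound degenerates as $\epsilon \to 1$, which is precisely why the later sections of the paper must bring in the reverse hypercontractivity machinery to handle the classical-quantum special case.
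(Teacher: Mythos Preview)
Your proposal is correct and follows essentially the same approach as the paper: quantum Stein's lemma applied to fixed $k$-block encoders for the direct part, and data-processing of the relative entropy through the two-outcome test (yielding the binary Fano bound) for the weak converse. Your treatment of non-block-aligned $n$ via $n=mk+s$ is in fact slightly more careful than the paper's, which tacitly restricts to $n=\ell k$; conversely, note that your displayed bound with $-h(\epsilon)$ is only literally valid for $\epsilon\le 1/2$ (your own parenthetical ``$h(p)\le h(\epsilon)\vee\log 2$'' already flags this), but since the weak converse sends $\epsilon\to 0$ this is immaterial.
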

\begin{proof}
\begin{itemize}
	\item[a)]
	We first prove the achievability part. Fix $\eps\in(0,1)$.
	Let us fix a $k\in\mathbb{N}$ and encoding maps $\mathcal{F}_k$ and $\mathcal{G}_k$. Now, applying quantum Stein's Lemma to the quantum hypothesis testing problem with hypotheses $\mathsf{H}_0:\sigma_{W^k \widetilde{W}^k}$ and $\mathsf{H}_1: \widetilde{\sigma}_{W^k \widetilde{W}^k}$ yields
	\begin{align} \label{eq:Stein_general2}
	\lim_{\ell\to\infty} -\frac{1}{\ell}\log\beta(\ell ,\eps, \mathcal{F}_k,\mathcal{G}_k) = D\left( \sigma_{W^k \widetilde{W}^k}\left\| \widetilde{\sigma}_{W^k \widetilde{W}^k}\right. \right).
	\end{align}
	{{i.e.~}}for every $\delta>0$, there exists some $n_0\in\mathbb{N}$ such that for all $\ell \geq n_0$,
	\begin{align}
	D\left( \sigma_{W^k \widetilde{W}^k}\left\| \widetilde{\sigma}_{W^k \widetilde{W}^k}\right. \right)
	\leq -\frac{1}{\ell}\log\beta(\ell ,\eps, \mathcal{F}_k,\mathcal{G}_k) + \delta
	\end{align}
	
	Recalling the definition  ~\eqref{eq:theta} of $\theta_k$, dividing by $k$ on both sides of the above inequality, and
	taking the supremum over all $\mathcal{F}_k^{X^k\to W^k}$ and $\mathcal{G}_k^{Y^k\to \widetilde{W}^k}$ with constraints $\log |W^k|\leq k r_1$ and $\log |\widetilde{W}^k|\leq k r_2$, we have for all $\ell \geq n_0$,
	\begin{align}
	\theta_k(r_1,r_2) &\leq -\frac{1}{\ell k} \log \inf_{\substack{\mathcal{F}_k:\frac1k\log |W^k|\leq  r_1 \\ \mathcal{G}_k:\frac1k \log |\widetilde{W}^k|\leq r_2 } }  \beta(\ell,\eps, \mathcal{F}_k,\mathcal{G}_k)  + \frac{\delta}{k}.
	\end{align}
	Let $ n = \ell k $. 
	Considering $\beta$ as a function of $n$ (since $k$ is fixed) and allowing the encoding maps to be $\mathcal{F}_n^{X^n\to W^n}$ and $\mathcal{G}_n^{Y^n\to \widetilde{W}^n}$ (which satisfy the constraints $\mathcal{F}_n:\frac1n\log |W^n|\leq  r_1$ and $\mathcal{G}_n:\frac1n \log |\widetilde{W}^n|\leq r_2 $) instead of restricting the encodings to be on the $k$-blocklength systems, the right-hand side of the above inequality can be further upper bounded as follows: for any $n\ge n_0$
	\begin{align}
	\theta_k(r_1,r_2) &\leq 
		-\frac{1}{n} \log \inf_{\mathcal{F}_n,\, \mathcal{G}_n } \left\{ \beta(n,\eps, \mathcal{F}_n,\mathcal{G}_n): \log |W^n|\leq n r_1, \, \log |\widetilde{W}^n|\leq n r_2 \right\} + \frac{\delta}{k}\\
		&= 
	-\frac{1}{n} \log \beta_{r_1,r_2}(n,\eps) + \frac{\delta}{k}\,.
	\end{align} 
	Taking the limit inferior with respect to $n\to\infty$ and letting $\delta\to 0$ on both sides of the above inequality yields 
	\begin{align}
	\theta_k(r_1,r_2) &\leq \liminf_{n\to\infty} -\frac{1}{n} \log \beta_{r_1,r_2}(n,\eps).
	\end{align}
	Since this holds for arbitrary $k\in\mathbb{N}$, we obtain our first claim in  ~\eqref{eq:Stein_general0} by taking supremum over all $k\in\mathbb{N}$.
\item[b)]	Next, we show the optimality.
First, suppose there exists $n\in\mathbb{N}\cup\{\infty\}$, $\mathcal{F}_n^{X^n\to W^n}$, and $\mathcal{G}_n^{Y^n\to \widetilde{W}^n}$ such that $D\left( \sigma_{W^n \widetilde{W}^n}\left\| \widetilde{\sigma}_{W^n \widetilde{W}^n}\right. \right) = \infty$. Then $\theta(r_1, r_2) = \infty$ by the definition given in  ~\eqref{eq:theta}. In this case, the claim in  ~\eqref{eq:Stein_general1} holds trivially. In what follows, we consider $n\in\mathbb{N}$, $\mathcal{F}_n^{X^n\to W^n}$, and $\mathcal{G}_n^{Y^n\to \widetilde{W}^n}$ such that $D\left( \sigma_{W^n\widetilde{W}^n}\left\| \widetilde{\sigma}_{W^n \widetilde{W}^n}\right. \right)<\infty$.
	Let $0\leq T\leq \mathds{1}$ be an arbitrary test, which is to be specified later.
	Using the data-processing inequality of the quantum relative entropy $D(\cdot\|\cdot)$ with respect to the quantum channel
	\begin{align}
	\Lambda: \rho\mapsto \Tr\left[T\rho \right] \oplus \Tr\left[ (\mathds{1}-T) \rho \right], \quad \forall \, \rho\in \mathcal{D}(\cH_{W^n\widetilde{W}^n})\,,
	\end{align}
we get
	\begin{align}
	D\left( \sigma_{W^n \widetilde{W}^n}\left\| \widetilde{\sigma}_{W^n \widetilde{W}^n}\right. \right) &\geq D\left( \Lambda\left(\sigma_{W^n \widetilde{W}^n}\right) \left\| \Lambda\left(\widetilde{\sigma}_{W^n \widetilde{W}^n}\right)\right. \right)\\
	&= (1-\alpha) \log \frac{1-\alpha}{\beta} + \alpha\log \frac{\alpha}{1-\beta},
	\end{align}
	where $\alpha = \Tr\left[(\mathds{1}-T) \sigma_{W^n \widetilde{W}^n} \right]$ and $\beta= \Tr\left[ T \,\widetilde{\sigma}_{W^n \widetilde{W}^n}\right]$ are the type-I and the type-II errors, respectively.
	Now, we choose $\mathcal{F}_n^{X^n\to W^n}$, $\mathcal{G}_n^{Y^n\to \widetilde{W}^n}$, and $T$ such that  $\log |W^n|\leq n r_1$, $\log |\widetilde{W}^n|\leq n r_2$, $\alpha\leq \eps$, and $\beta = \beta_{r_1,r_2}(n,\eps)$. 
	From the definitions  ~\eqref{eq:theta} of $\theta$ and $\theta_n$, it follows that
	\begin{align}
	\theta(r_1,r_2) &\geq \theta_n(r_1,r_2) \geq \frac1n D\left( \sigma_{W^n \widetilde{W}^n}\left\| \widetilde{\sigma}_{W^n \widetilde{W}^n}\right. \right) \\
		&\geq \frac1n\left[ (1-\alpha) \log \frac{1-\alpha}{\beta_{r_1,r_2}(n,\eps)} + \alpha \log \frac{\alpha}{1-\beta_{r_1,r_2}(n,\eps) } \right]\\
		&\geq -\frac{1-\eps}{n} \log \beta_{r_1,r_2}(n,\eps) - \frac{\alpha}{n} \log \left[ 1- \beta_{r_1,r_2}(n,\eps) \right] - \frac{h(\alpha)}{n} \\
	&\geq -\frac{1-\eps}{n} \log \beta_{r_1,r_2}(n,\eps)- \frac{h(\alpha)}{n},
	\end{align}
	where we denote by $h(p) := -p\log p - (1-p)\log(1-p)$ the binary entropy function.
	Taking the limit inferior with respect to $n\to\infty$ and letting $\eps \to 0$ completes our proof in  ~\eqref{eq:Stein_general1}.
	\end{itemize}
\end{proof}

Theorem~\ref{theo:Stein_general} immediately yields the following result for testing against independence.
Let the following binary hypotheses.
\begin{align} \label{eq:independence}
\begin{dcases}
\mathsf{H}_0: \rho_{XY}^{\otimes n} \\
\mathsf{H}_1: \left(\rho_X\otimes \rho_Y\right)^{\otimes n} \\
\end{dcases}\,, \quad \forall\, n\in\mathbb{N}.
\end{align}

\begin{coro}[The Stein exponent for testing against independence] \label{coro:Stein_independence}
	Given the hypotheses of  ~\eqref{eq:independence}, it follows that for every $r_1, r_2\geq 0$, 
	\begin{align}
	\lim_{\eps\to 0} e(\eps|r_1,r_2) 
	= \sup_{n\in\mathbb{N}} \sup_{ \substack{ \mathcal{F}^{X^n\to W},\, \log|W|\leq nr_1 \\
			\mathcal{G}^{Y^n\to \widetilde{W}^n},\, \log|\widetilde{W}^n|\leq nr_2 }} \frac1n I\big( W^n;\widetilde{W}^n\big)_\sigma, 
	\end{align}
	where $\sigma$ is the state given in \eqref{eq:sigma0}, and 
	$I(A;B)_\rho := D(\rho_{AB}\|\rho_A\otimes \rho_B)$.
\end{coro}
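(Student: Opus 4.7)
The plan is to deduce this corollary directly from Theorem~\ref{theo:Stein_general} by observing that, in the ``testing against independence'' setting, the relative entropy appearing in the definition of $\theta(r_1,r_2)$ collapses to a quantum mutual information.

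First I would combine the achievability bound~\eqref{eq:Stein_general0} and the weak converse~\eqref{eq:Stein_general1}: since $e(\eps|r_1,r_2)$ is monotone non-increasing in $\eps$, we get
\begin{align}
\lim_{\eps\to 0}e(\eps|r_1,r_2) = \theta(r_1,r_2) = \sup_{n\in\mathbb{N}} \sup_{\substack{\mathcal{F}_n:\frac1n\log|W^n|\leq r_1 \\ \mathcal{G}_n:\frac1n\log|\widetilde{W}^n|\leq r_2}} \frac1n D\!\left(\sigma_{W^n\widetilde{W}^n}\big\|\widetilde{\sigma}_{W^n\widetilde{W}^n}\right),
\end{align}
where I also use Lemma~\ref{lemm:prop_theta}\ref{prop_theta-a} to rewrite $\theta$ as a supremum rather than a limit.

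Next I would substitute the specific form of the alternative hypothesis $\widetilde{\rho}_{XY} = \rho_X\otimes \rho_Y$ into the definitions \eqref{eq:sigma0}--\eqref{eq:sigma1}. Because the encoding maps $\mathcal{F}_n^{X^n\to W^n}$ and $\mathcal{G}_n^{Y^n\to\widetilde{W}^n}$ act only on Alice's and Bob's systems respectively, we have
\begin{align}
\widetilde{\sigma}_{W^n\widetilde{W}^n} = \mathcal{F}_n(\rho_X^{\otimes n})\otimes \mathcal{G}_n(\rho_Y^{\otimes n}) = \sigma_{W^n}\otimes \sigma_{\widetilde{W}^n},
\end{align}
where the last equality uses that the marginals of $\sigma_{W^n\widetilde{W}^n}= (\mathcal{F}_n\otimes\mathcal{G}_n)(\rho_{XY}^{\otimes n})$ on the $W^n$ and $\widetilde{W}^n$ registers are precisely $\mathcal{F}_n(\rho_X^{\otimes n})$ and $\mathcal{G}_n(\rho_Y^{\otimes n})$ (obtained by partial-tracing, using that the maps are trace-preserving and act locally on the marginals of the product state $\rho_{XY}^{\otimes n}$).

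Finally, by the very definition of quantum mutual information in~\eqref{eq:mutual},
\begin{align}
D\!\left(\sigma_{W^n\widetilde{W}^n}\big\|\sigma_{W^n}\otimes\sigma_{\widetilde{W}^n}\right) = I(W^n;\widetilde{W}^n)_\sigma,
\end{align}
and inserting this into the expression for $\theta(r_1,r_2)$ yields the claimed identity. There is no real obstacle here: the argument is a short unpacking of definitions, with the only substantive input being Theorem~\ref{theo:Stein_general} combined with the locality of the encoders, which ensures that the alternative hypothesis state after encoding is exactly the product of the two marginals of the null hypothesis state.
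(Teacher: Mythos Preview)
Your approach is correct and matches the paper's, which simply states that the corollary ``immediately yields'' from Theorem~\ref{theo:Stein_general}; the substance of your argument---that locality of the encoders forces $\widetilde{\sigma}_{W^n\widetilde{W}^n}=\sigma_{W^n}\otimes\sigma_{\widetilde{W}^n}$, so the relative entropy becomes $I(W^n;\widetilde{W}^n)_\sigma$---is exactly the intended unpacking.

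One small correction: your monotonicity claim has the wrong sign. Enlarging $\eps$ relaxes the type-I constraint, so $\beta_{r_1,r_2}(n,\eps)$ is non-increasing in $\eps$, and hence $e(\eps|r_1,r_2)$ is non-\emph{decreasing} in $\eps$. Fortunately you do not need monotonicity at all: part~(a) of Theorem~\ref{theo:Stein_general} gives $e(\eps|r_1,r_2)\ge\theta(r_1,r_2)$ for every $\eps\in(0,1)$, so $\liminf_{\eps\to0}e(\eps|r_1,r_2)\ge\theta(r_1,r_2)$, and combined with part~(b) this already forces the limit to exist and equal $\theta(r_1,r_2)$.
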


\subsection{Single-letterization of the Stein exponent when there is no rate constraint for Bob} \label{sec:single-letter0}

In this section, we consider a scenario in which Charlie has access to the full information that Bob has, i.e.~$r_2 = \infty$. Moreover, we allow Alice and Charlie to have access to unlimited prior shared entanglement, given by a state $\Phi_{T_X T_C}$. Then, we prove that the Stein exponent in this scenario admits a single-letter formula.

\begin{theo}[A Single-Letter Formula] \label{theo:Stein_independence}
	Given the hypotheses of
	 ~\eqref{eq:independence}, it follows that for every $r> 0$
	\begin{align} \label{eq:Stein_independence0}
	\lim_{\eps\to 0} e(\eps| r,\infty) = \theta( r,\infty) = \sup_{ \substack{\mathcal{N}^{X\to U} \\ \frac12 I(U;YR)_{\omega} \leq r}  } I\left( U;Y\right)_\omega,
	\end{align}
	where 
	$\omega_{UYR} = (\mathcal{N}^{X\to U} \otimes \id_{YR}) \psi_{XYR} ,$
	and $\mathcal{N}^{X\to U}$ denotes a quantum channel from $X$ to $U$. 
	
\end{theo}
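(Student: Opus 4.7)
The plan is to build on Theorem~\ref{theo:Stein_general} and Corollary~\ref{coro:Stein_independence}, which already identify $\lim_{\eps\to 0} e(\eps|r,\infty)$ with $\theta(r,\infty)$ and reduce $\theta(r,\infty)$ to a regularized supremum of quantum mutual informations. Since $r_2=\infty$ it is optimal to take $\mathcal{G}_n=\id_{Y^n}$, and since the alternative hypothesis factorizes, the relative entropy in~\eqref{eq:theta} equals $I(W^n;Y^n)_{\sigma_n}$ with $\sigma_n=(\mathcal{F}_n\otimes\id_{Y^n})(\rho_{XY}^{\otimes n})$. What remains is the single-letterization of $\sup_n\sup_{\mathcal{F}_n}\tfrac{1}{n}I(W^n;Y^n)_{\sigma_n}$.

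For the converse, fix $n$ and $\mathcal{F}_n$ with $\log|W^n|\leq nr$, and set $\Omega_{W^n(YR)^n}=(\mathcal{F}_n\otimes\id_{(YR)^n})(\psi_{XYR}^{\otimes n})$. For each $i\in\{1,\ldots,n\}$, I would introduce the single-letter channel $\mathcal{N}_i^{X\to W^n(YR)^{i-1}}$ that internally prepares $\psi_{XYR}^{\otimes i-1}\otimes\rho_X^{\otimes n-i}$, plugs its input into the $i$-th $X$-slot, applies $\mathcal{F}_n$, and outputs $W^n$ together with the internally generated $(YR)^{i-1}$. Concatenating these with a classical time-sharing flag $Q$ uniform on $[n]$ yields a single-letter channel $\mathcal{N}$ with output $U=(Q,W^n,(YR)^{Q-1})$. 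Using the independence of the copies of $\psi_{XYR}$ and the mutual-information chain rule, one obtains
\begin{align}
I(U;YR)_\omega=\tfrac{1}{n}\,I(W^n;(YR)^n)_\Omega\leq \tfrac{2}{n}\log|W^n|\leq 2r,
\end{align}
where the middle inequality uses $I(A;B)\leq 2\log|A|$, while the data-processing inequality (discarding $R^{i-1}$) combined with an analogous chain rule gives $I(U;Y)_\omega\geq \tfrac{1}{n}I(W^n;Y^n)_\Omega$. Taking the supremum over $\mathcal{F}_n$ and $n$ establishes the converse inequality.

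For achievability, fix any single-letter $\mathcal{N}^{X\to U}$ with $\tfrac{1}{2}I(U;YR)_\omega\leq r$. The plan is to apply $\mathcal{N}^{\otimes n}$ to $X^n$ and then compress the output $U^n$ into a smaller register $W^n$ via a quantum map $\mathcal{C}_n^{U^n\to W^n}$. The rate $\tfrac{1}{2}I(U;YR)$ is precisely the one achievable by the fully quantum Slepian--Wolf / state redistribution primitive~\cite{DHW18} for encoding a quantum source $U$ in the presence of a reference $YR$ that purifies $X$. The composite encoder $\mathcal{F}_n=\mathcal{C}_n\circ\mathcal{N}^{\otimes n}$ then meets $\log|W^n|\leq n(r+\delta)$ while $\tfrac{1}{n}I(W^n;Y^n)_{\sigma_n}\to I(U;Y)_\omega$ as $n\to\infty$, since the compression approximately preserves the correlations with $Y^n$. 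Letting $\delta\to 0$ and invoking the continuity of $r\mapsto\theta(r,\infty)$ from Lemma~\ref{lemm:prop_theta} closes the proof.

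The main obstacle I expect is the achievability step, which requires a quantum source-coding primitive whose optimal rate is exactly $\tfrac{1}{2}I(U;YR)$ and which simultaneously preserves the classical mutual information with $Y$. This is precisely the content of the quantum information bottleneck theory of~\cite{SCK+17,DHW18}, so the proof largely reduces to a careful appeal to that machinery together with a continuity argument to absorb the $\delta$-gap. The converse is conceptually cleaner, but its correctness hinges on the subtle decision to include the unphysical reference copies $R^{i-1}$ in the auxiliary register: it is exactly this choice that converts the dimension bound $\log|W^n|\leq nr$ into the mutual-information constraint $\tfrac{1}{2}I(U;YR)\leq r$, rather than the strictly stronger bound $\log|U|\leq r$.
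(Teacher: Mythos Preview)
Your proposal is essentially correct, and the achievability half is the same idea as the paper's---the paper invokes the Quantum Reverse Shannon Theorem (channel simulation with free entanglement) to have Alice transmit at rate $\tfrac12 I(U;YR)_\omega$, which is operationally equivalent to your ``apply $\mathcal{N}^{\otimes n}$ then compress via state redistribution'' description. Your converse, however, is genuinely different and more self-contained. The paper does \emph{not} single-letterize by an auxiliary-variable chain-rule argument; instead it passes to the complementary quantities $I(Y^n;R^nE^n)_\sigma$ and $I(W^n;Y^nR^n)_\sigma$ via purification identities, and then invokes Lemma~\ref{lemm:regularized}, whose proof rests on the additivity of the region $\mathcal{T}(\psi)$ established in \cite[Theorem~2]{DHW18} together with the convexity of $F_{\mathrm q}$. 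Your route avoids that external machinery entirely: the choice $U=(Q,W^n,(YR)^{Q-1})$ plus the chain rule and product structure of $(YR)^n$ directly yields both $\tfrac12 I(U;YR)_\omega=\tfrac{1}{2n}I(W^n;(YR)^n)\le r$ and $I(U;Y)_\omega\ge \tfrac1n I(W^n;Y^n)$. The trade-off is that the paper's argument makes the connection to the quantum information bottleneck function explicit, whereas yours is shorter but hides that structure. One small point to tidy: since the model in this section allows shared entanglement $\Phi_{T_XT_C}$, your converse should let $\mathcal{F}_n$ act on $T_XX^n$ and include $T_C$ in the auxiliary $U$; the bound $I(W^n;(YR)^n\,|\,T_C)\le 2\log|W^n|$ and the independence $I(T_C;(YR)^n)=0$ then carry your argument through unchanged.
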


Theorem~\ref{theo:Stein_independence} immediately yields the following Corollary :
\begin{coro}
	Given the hypotheses of  ~\eqref{eq:independence}, it follows that for every $r\geq H(X)_\rho:= -\Tr\left[\rho_X\log \rho_X \right]$,
	\begin{align}
	\lim_{\eps\to 0} e(\eps| r,\infty) =  I\left( X;Y\right)_\rho. 
	\end{align}
\end{coro}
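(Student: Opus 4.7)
The plan is to apply Theorem~\ref{theo:Stein_independence} directly: show that when $r\geq H(X)_\rho$ the identity channel $\mathcal{N}^{X\to U}=\mathrm{id}^{X\to U}$ is feasible for the variational problem on the right-hand side of \eqref{eq:Stein_independence0}, and that no channel can do better than it. This gives both the lower and the upper bound and pins down the supremum.

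For the upper bound, the plan is to invoke the data-processing inequality of the quantum mutual information. For any channel $\mathcal{N}^{X\to U}$ and $\omega_{UYR}=(\mathcal{N}^{X\to U}\otimes \mathrm{id}_{YR})\psi_{XYR}$, tracing out $R$ yields $\omega_{UY}=(\mathcal{N}^{X\to U}\otimes \mathrm{id}_{Y})\rho_{XY}$, hence $I(U;Y)_\omega \leq I(X;Y)_\rho$. Therefore, regardless of the constraint $\tfrac12 I(U;YR)_\omega\leq r$, the supremum in \eqref{eq:Stein_independence0} is bounded from above by $I(X;Y)_\rho$.

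For the lower (achievability) bound, I take $U\cong X$ and $\mathcal{N}^{X\to U}=\mathrm{id}$ so that $\omega_{UYR}=\psi_{XYR}$ remains pure. Then I compute
\begin{align}
I(U;YR)_\omega = I(X;YR)_\psi = H(X)_\psi + H(YR)_\psi - H(XYR)_\psi = 2H(X)_\rho,
\end{align}
where I used that $\psi_{XYR}$ is pure (so $H(XYR)_\psi=0$ and $H(YR)_\psi=H(X)_\psi=H(X)_\rho$). Consequently $\tfrac12 I(U;YR)_\omega = H(X)_\rho \leq r$, so the identity channel is feasible, and it realises $I(U;Y)_\omega=I(X;Y)_\rho$. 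Combined with the upper bound, this yields $\theta(r,\infty)=I(X;Y)_\rho$, which by \eqref{eq:Stein_independence0} is exactly the claim.

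There is essentially no obstacle here: the only thing worth double-checking is that the purification $\psi_{XYR}$ of $\rho_{XY}$ plays consistently the same role as in the statement of Theorem~\ref{theo:Stein_independence} (so that $\omega_{UYR}$ stays pure under $\mathcal{N}=\mathrm{id}$), and that the entropy identities for pure tripartite states are applied correctly. All remaining steps are immediate from Theorem~\ref{theo:Stein_independence} and the data-processing inequality.
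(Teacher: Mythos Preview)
Your proof is correct and matches the paper's approach: the paper states the corollary as an immediate consequence of Theorem~\ref{theo:Stein_independence}, noting (in the introduction) that for $r\ge H(X)_\rho$ the supremum is attained by the identity channel, and your argument spells out exactly this---feasibility of $\mathcal{N}=\mathrm{id}$ via $\tfrac12 I(X;YR)_\psi=H(X)_\rho$ for the pure state $\psi_{XYR}$, together with the data-processing bound $I(U;Y)_\omega\le I(X;Y)_\rho$.
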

Before proving Theorem~\ref{theo:Stein_independence}, we need to introduce some quantities that relate our problem to the so-called \emph{quantum information bottleneck method} (see \cite{GS16,SCK+17,DHW18}). Consider the following quantity, defined in \cite{DHW18}, which is dual to the \emph{quantum information bottleneck function}: given $r\ge 0$
\begin{align}
I^\text{q}_Y(r) := \sup_{ \substack{\mathcal{N}^{X\to U} \\ I(U;YR)_{\omega} \leq r}  } I\left( U;Y\right)_\omega\,.
\end{align}

The assertion in Theorem~\ref{theo:Stein_independence} is equivalent to
\begin{align} \label{eq:Stein_independence00}
\lim_{\eps\to 0} e(\eps| r,\infty) \equiv \theta( r,\infty) = I_Y^\text{q}(2r), \quad \forall\, r>0\,.
\end{align}
Let us denote the Stinespring isometry of the encoding map $\mathcal{F}_n^{X^n\to W^n}$ by $\mathcal{U}_{\mathcal{F}_n}\equiv \mathcal{U}_{\mathcal{F}_n}^{X^n \to W^n E^n}$.
Then it can be easily verified that
\begin{align}
I(W^n;Y^n)_\sigma &= 2H(Y^n)_\sigma - I(Y^n;R^n E^n)_\sigma \\
&= 2nH(Y)_\rho - I(Y^n;R^n E^n)_\sigma\,, 
\end{align}
where the mutual information is evaluated for the state $\sigma_{E^nY^nR^n}$ which is a reduced state of
\begin{align} \label{eq:sigma}
\sigma_{W^nE^nY^nR^n} := \left(\mathcal{U}_{\mathcal{F}_n}^{X^n\to W^n E^n}\otimes \id_{Y^nR^n} \right) \psi_{XYR}^{\otimes n}\,,
\end{align}
with $\psi_{XYR}$ being a purification of $\rho_{XY}$, so that $H(W^nY^n)=H(E^nR^n)$ and $H(W^n)=H(E^nR^nY^n)$. Similarly, denoting the Stinespring isometry of the channel $\mathcal{N}^{X\to U}$ by $\mathcal{U}_{\mathcal{N}}\equiv \mathcal{U}_{\mathcal{N}}^{X \to UV}$, we have
\begin{align}
I(U;Y)_\omega = 2 H(Y)_\rho - I(Y;RV)_\omega
\end{align}
for the state
$\omega_{UVYR} := \left(\mathcal{U}_\mathcal{N}^{X\to UV}\otimes \id_Y\right) \psi_{XYR}.$
Then, by the definition of $\theta$ given in \eqref{eq:theta}, the claim in \eqref{eq:Stein_independence00} is equivalent to 
\begin{align} \label{eq:Stein_independence1}
\inf_{ \substack{ n\in\mathbb{N} \\ \mathcal{F}_n^{X^n\to W^n},\, \log|W^n|\leq n r} } \frac1n\, I\left( Y^n;R^n E^n\right)_{\sigma}
=
F_\text{q}(2r),
\end{align}
where for any $b\geq 0$,
\begin{align} \label{eq:Fq}
F_\text{q}(b) := \inf_{ \substack{\mathcal{N}^{X\to U} \\ I(U;YR)_{\omega} \leq b}  } I\left( Y;RV\right)_\omega = 2 H(Y)_\rho - I_Y^\text{q}(b).
\end{align}
We need the following property of $F_\text{q}(b)$ in order to establish our claim  ~\eqref{eq:Stein_independence1}.
\begin{lemm} \label{lemm:regularized}
	For every $n\in\mathbb{N}$ and $b\geq 0$, it holds that
	\begin{align} \label{eq:regularized0}
	F^n_\textnormal{q}(b) := \inf_{ \substack{\mathcal{F}_n^{X^n\to W^n} \\ \frac1n I(W;Y^n R^n)_{\sigma} \leq b}  } \frac1n I\left( Y^n; R^n E^n \right)_{\sigma} = F_\textnormal{q}(b),
	\end{align}
	where the state $\sigma_{W^nE^nY^nR^n}$ is given in  ~\eqref{eq:sigma}.
\end{lemm}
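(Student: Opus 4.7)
The plan is to split into the two inequalities $F_\textnormal{q}^n(b)\le F_\textnormal{q}(b)$ and $F_\textnormal{q}^n(b)\ge F_\textnormal{q}(b)$. The first is a straightforward tensorization: for any single-letter $\mathcal{N}^{X\to U}$ feasible for $F_\textnormal{q}(b)$, the encoder $\mathcal{F}_n:=\mathcal{N}^{\otimes n}$ produces $\sigma_{W^nE^nY^nR^n}=\omega_{UVYR}^{\otimes n}$, so additivity of the quantum mutual information on product states preserves both the constraint $\tfrac1n I(W^n;Y^n R^n)_\sigma=I(U;YR)_\omega\le b$ and the objective $\tfrac1n I(Y^n;R^n E^n)_\sigma=I(Y;RV)_\omega$; infimizing over $\mathcal{N}$ yields $F_\textnormal{q}^n(b)\le F_\textnormal{q}(b)$.

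For the reverse inequality I would single-letterize the multi-letter strategy via time-sharing. Given a feasible $\mathcal{F}_n^{X^n\to W^n}$, define a single-letter channel $\mathcal{N}^{X\to U}$ as follows: draw a classical register $T$ uniform on $\{1,\dots,n\}$; conditional on $T=t$, prepare the pure ancilla $\bigotimes_{j\ne t}\psi_{X_jY_jR_j}$, identify the input $X$ with $X_t$ (so the external purifier $YR$ coincides with $Y_tR_t$), apply the Stinespring isometry $\mathcal{U}_{\mathcal{F}_n}^{X^n\to W^n E^n}$, and set $U:=(W^n,Y^{t-1},R^{t-1},T)$. The Stinespring complement is then $V:=(E^n,Y_{t+1}^n,R_{t+1}^n)$, and the resulting state $\omega_{UVYR}$ is nothing but the multi-letter state $\sigma_{W^nE^nY^nR^n}$ adjoined with the classical register $T$.

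The two central computations will then be
\begin{align*}
I(U;YR)_\omega &= \tfrac1n\sum_{t=1}^n I\bigl(W^n Y^{t-1}R^{t-1};\,Y_tR_t\bigr)_\sigma = \tfrac1n I(W^n;Y^n R^n)_\sigma \le b,\\
I(Y;RV)_\omega &= \tfrac1n\sum_{t=1}^n I\bigl(Y_t;\,R_t,E^n,Y_{t+1}^n,R_{t+1}^n\bigr)_\sigma \le \tfrac1n I(Y^n;R^n E^n)_\sigma.
\end{align*}
The first line uses the quantum chain rule together with the independence $I(Y^{t-1}R^{t-1};Y_tR_t)_\sigma=0$ (inherited from $\psi_{XYR}^{\otimes n}$ and the fact that $\mathcal{F}_n$ leaves $Y^nR^n$ untouched), which lets the forward chain rule collapse the sum. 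The second uses strong subadditivity to adjoin $R^{t-1}$ into the second slot of each summand at no cost, yielding $I(Y_t;R_t,E^n,Y_{t+1}^n,R_{t+1}^n)\le I(Y_t;R^n E^n\mid Y_{t+1}^n)$, followed by the reverse chain rule $I(Y^n;R^nE^n)_\sigma=\sum_t I(Y_t;R^n E^n\mid Y_{t+1}^n)_\sigma$. Combined, they give $F_\textnormal{q}(b)\le I(Y;RV)_\omega \le \tfrac1n I(Y^n;R^n E^n)_\sigma$, and taking an infimum over admissible $\mathcal{F}_n$ closes the reverse inequality.

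The hard part is the asymmetric placement of the i.i.d.\ copies between $U$ and $V$: the past copies $Y^{T-1}R^{T-1}$ must sit inside $U$ so that the constraint telescopes via the forward chain rule, while the future copies $Y_{T+1}^nR_{T+1}^n$ must sit inside $V$ so that the objective telescopes via the reverse chain rule. Verifying that this splitting is compatible with one valid Stinespring dilation $\mathcal{U}_\mathcal{N}^{X\to UV}$ (with the classical index $T$ handled cleanly, e.g.\ by copying it into $V$ for purity), and invoking the genuine quantum strong subadditivity rather than a classical surrogate for the objective bound, is the only non-routine bookkeeping.
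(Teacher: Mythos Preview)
Your proof is correct, but takes a genuinely different route from the paper for the harder direction $F_\textnormal{q}^n(b)\ge F_\textnormal{q}(b)$. The paper does not construct a single-letter channel at all; instead it reformulates both $F_\textnormal{q}$ and $F_\textnormal{q}^n$ as infima over a rate region $\mathcal{T}(\psi)$ (respectively $\mathcal{T}(\psi^{\otimes n})$) and then invokes two black-box results from \cite{DHW18}: the additivity $\mathcal{T}(\psi^{\otimes n})=\mathcal{T}(\psi)+\cdots+\mathcal{T}(\psi)$ (Theorem~2 there) and the convexity of $b\mapsto F_\textnormal{q}(b)$ (Theorem~1 there). These combine via Jensen to give $F_\textnormal{q}^n(b)=\inf_{\sum b_i=nb}\tfrac1n\sum_i F_\textnormal{q}(b_i)\ge F_\textnormal{q}(b)$.

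Your argument is more elementary and self-contained: you build an explicit single-letter $\mathcal{N}^{X\to U}$ by time-sharing over a uniform index $T$, placing the past copies $Y^{T-1}R^{T-1}$ in $U$ and the future copies $Y_{T+1}^nR_{T+1}^n$ in $V$, so that the forward chain rule telescopes the constraint exactly and the reverse chain rule (with one application of data-processing to drop $R^{t-1}$) bounds the objective. This is essentially how one would prove the additivity of $\mathcal{T}(\psi)$ in the first place, so you are in effect re-deriving the needed consequence of \cite{DHW18} directly rather than quoting it. The paper's route is shorter once those results are in hand and makes the role of convexity transparent; your route avoids the external dependence and the rate-region formalism entirely, at the cost of the Stinespring bookkeeping you flag (which is indeed routine: copying $T$ into $V$ and padding the variable-dimension registers with fixed pure states makes $\mathcal{U}_\mathcal{N}$ a bona fide isometry and leaves all the mutual informations unchanged).
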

\begin{proof}
	We first show that the direction `$\leq$' in \eqref{eq:regularized0} follows from (i) the fact that the set of channels $\mathcal{F}_n^{X^n\to W^n}$ includes the composition of local operations, and (ii) by additivity of the mutual information under tensor product.

More precisely, let $W^n = U_1\ldots U_n$ and $E^n = V_1\ldots V_n$ and let 
$$\mathcal{U}_{\mathcal{F}_n}^{X^n \to W^nE^n}= \bigotimes_{i=1}^n \mathcal{U}^{X_i \to U_iV_i},$$  
where $\mathcal{U}^{X_i \to U_iV_i}\equiv \mathcal{U}_{\mathcal{N}}^{X_i \to U_iV_i}$ denotes the isometry of $\mathcal{N}^{X_i \to U_i}$. Then defining the state $\omega_{UVYR} := \left(\mathcal{U}_\mathcal{N}^{X\to UV}\otimes \id_Y\right) \psi_{XYR}$, we have that
	\begin{align}
	\sigma_{W^nY^nR^n E^n} &=  \left(\mathcal{U}_{\mathcal{F}_n}^{X^n\to W^nE^n}\otimes \id_{Y^n} \right) \psi_{XYR}^{\otimes n} \\
	&=  \left(\bigotimes_{i=1}^n \mathcal{U}_{\mathcal{N}}^{X_i\to U_iV_i}\otimes \id_{Y^n} \right) \psi_{XYR}^{\otimes n} \\
	&= (\omega_{UVYR})^{\otimes n}\,.
	\end{align}
	Additivity of the mutual information under tensor products then yields the following
	\begin{align}
	\begin{split} \label{eq:regularized1}
	\frac1n I(W^n;Y^n R^n)_{\sigma} &= \frac1n I(U_1\ldots U_n; Y^n R^n)_{\omega^{\otimes n}} = I(U;YR)_\omega; \\
	\frac1n I\left( Y^n; R^n E^n\right)_{\sigma} &= \frac1n I\left( Y^n; R^n V_1\ldots V_n \right)_{\omega^{\otimes n}} = I(Y;RV)_\omega.
	\end{split}
	\end{align}
	This implies that
	\begin{align}
	F^n_\textnormal{q}(b)
	&\leq \inf_{  \substack{ \mathcal{N}^{X \to U}: \mathcal{F}_n^{X^n\to W^n }= (\mathcal{N}^{X \to U})^{\otimes n} \\ \frac1n I(W^n;Y^n R^n)_{\sigma} \leq b}  } \frac1n I\left( Y^n; R^n E^n\right)_{\sigma} \\
	&=  \inf_{ \substack{\mathcal{N}^{X\to U} \\ I(U;YR)_{\omega} \leq b}  } I\left( Y;RV\right)_\omega \\
	&= F_\text{q}(b)\,,
	\end{align}
	which proves the direction `$\leq $'.

	To show the other direction, i.e.~`$\geq$', we will employ an additivity property proved in Ref.~\cite{DHW18} which is described below.
	Define the set $\mathcal{T}(\psi)$ for any pure state $\psi\equiv \psi_{XYR}$:
	\begin{align}\label{eq:set_T}
	\mathcal{T}(\psi) := \left\{(Q_X, Q_Y): \exists\, \mathcal{U}^{X\to U V} \text{ isometry s.t. } 
	Q_X \geq I(U;YR)_\omega, \, 
	Q_Y \geq I(Y;RV)_\omega, \, 
	\right\},
	\end{align}
	where $	\omega_{UVYR} := (\mathcal{U}^{X\to U V} \otimes \id_{YR}) \psi_{XYR}$.
	Recalling the definitions of $F_\text{q}(b)$ and $F_\text{q}^n(b)$ given in  ~\eqref{eq:Fq} and \eqref{eq:regularized0}, one can rewrite them as
	\begin{align}
	F_\text{q}(b) &= \inf_{Q_Y:\,(b,Q_Y) \in \mathcal{T}(\psi) } Q_Y; \label{eq:regularized2}\\
	F_\text{q}^n(b) &= \inf_{ Q_{Y^n}:\,(nb, Q_{Y^n}) \in \mathcal{T}(\psi^{\otimes n}) } \frac1n Q_{Y^n}. \label{eq:regularized3}
	\end{align}
	
	On the other hand, Theorem~2 in \cite{DHW18} states that for any pure state $\psi_{XYR}$ and $n\in\mathbb{N}$,
	\begin{align}
	\mathcal{T}(\psi^{\otimes n}) = \underbrace{\mathcal{T}(\psi) + \cdots + \mathcal{T}(\psi)}_{n \text{ terms}}\,,
	\end{align}
	where the `$+$' refers to the element-wise sum of sets\footnote{We note that the set the set $\mathcal{T}(\psi)$ in \eqref{eq:set_T} is slightly different from the one introduced in Theorem~2 of \cite{DHW18}. Namely, we do not have the additional factor $2$ in front of $Q_X$ and $Q_Y$. It can be verified that the additivity of the set holds for all positive factors. Hence, we choose factor $1$ for the purpose of our proof.}.
	This means that any element $(nb, Q_{Y^n}) \in \mathcal{T}(\psi^{\otimes n}) $ satisfies
	\begin{align} \label{eq:regularized4}
	\begin{dcases}
	nb = \sum_{i=1}^n b_i \\
	Q_{Y^n} = \sum_{i=1}^n Q_{Y_i}\\
	\end{dcases}, \quad \forall\, (b_i, Q_{Y_i}) \in \mathcal{T}(\psi), \; i\in[n].
	\end{align}
	Using this fact on  ~\eqref{eq:regularized3}, we have
	\begin{align}
	F_\text{q}^n(b) &= \inf_{ \substack{(b_i, Q_{Y_i}) \in \mathcal{T}(\psi), \; \forall i\in[n] \\ nb = \sum_{i=1}^n b_i } } \frac1n \sum_{i=1}^n Q_{Y_i} \\
	&= \inf_{(b_i)_{i=1}^n:\; nb = \sum_{i=1}^n b_i } \inf_{Q_{Y_i}:\, (b_i, Q_{Y_i}) \in \mathcal{T}(\psi) } \,\frac1n \sum_{i=1}^n Q_{Y_i} \\
	&= \inf_{(b_i)_{i=1}^n:\; nb = \sum_{i=1}^n b_i } \frac1n\sum_{i=1}^n F_\text{q}(b_i) \\
	&\geq \inf_{(b_i)_{i=1}^n:\; nb = \sum_{i=1}^n b_i } \,\sum_{i=1}^n F_\text{q}\left( \frac1n \sum_{i=1}^n b_i\right) \\
	&= F_\text{q}(b),
	\end{align}
	where the third equality follows from the definition of $F_\text{q}(b)$ given in  ~\eqref{eq:regularized2}, and the sole inequality follows from the convexity of $b\mapsto F_\text{q}(b)$, which is proved in Theorem~1 of \cite{DHW18}. This  completes our claim.
\end{proof}

Now, we are ready to prove Theorem~\ref{theo:Stein_independence}.
\begin{proof}[Proof of Theorem~\ref{theo:Stein_independence}]
	
	(Achievability): 
	To prove the direction `$\leq$' in  ~\eqref{eq:Stein_independence1},
	it suffices to show that for every channel $\mathcal{N}^{X\to U}$ satisfying $\frac12 I(U;YR)_\omega \leq r$ and every $\delta >0$, there exist an $n\in\mathbb{N}$ and an encoding map $\mathcal{F}_n^{X^n\to W^n}$ such that the following hold:
	\begin{align}
	\frac1n \log |W^n| &\leq r + \delta, \label{eq:Stein_independence_ach1} \\
	\frac1n I(Y^n; R^n E^n)_\sigma &\leq I(Y;RV)_\omega + \delta.  \label{eq:Stein_independence_ach2}
	\end{align}
	
	To prove this, we employ a quantum state splitting protocol (or equivalently the Quantum Reverse Shannon Theorem (QRST)~\cite{6757002}) as follows. Note that we make explicit use of the entangled state shared between Alice and Charlie, which we denote as $\Phi_{T_XT_C}$, with $T_X$ being with Alice and $T_C$ being with Charlie.
	Alice generates the system $W^n$ by acting locally on $T_X X^n$, and then sends $W^n$ to Charlie. Hence, the encoding map is given by $\mathcal{F}_n^{T_X X^n \to W^n}$, and we denote its Stinespring isometry by $\mathcal{U}_{\mathcal{F}_n}^{T_X X^n \to W^n E^n}$.
	Charlie first applies an isometry on the received system $W^n$ and his share of entanglement $T_C$, which we denote by $\mathcal{V}^{W^n T_C \to U^n}$. 
	The aim is to have the final state
	\begin{align}
	\tau_{U^nY^nR^n E^n} := \left( \mathcal{V}^{W^n T_C \to U^n} \circ \mathcal{U}_{\mathcal{F}_n}^{T_X X^n \to W^n E^n} \otimes \id_{Y^n R^n} \right) \left( \psi_{XYR}^{\otimes n}\otimes \Phi_{T_X T_C} \right) \approx 
	\omega_{UVYR}^{\otimes n}
	\end{align}
	close to the state $\omega_{UVYR}^{\otimes n}$
	According to the Quantum Reverse Shannon Theorem~\cite{6757002}, to achieve this aim, Alice needs to send qubits at a rate $\frac1n \log |W^n| = \frac12 I(U;YR)_\omega + \delta$ to Charlie for sufficiently large $n\in\mathbb{N}$.
	By the additivity of the mutual information, we obtain
	\begin{align*}
	\frac1n I(Y^n; R^n E^n)_{\omega^{\otimes n}} = I(Y;RV)_\omega.
	\end{align*}
	 Since the mutual information is continuous with respect to the underlying state, {{there must exist}} sufficiently large $n\in\mathbb{N}$ such that
	 \begin{align}
	\frac1n I(Y^n; R^n E^n)_\tau 
	 &\leq \frac1n I(Y^n; R^n E^n)_{\omega^{\otimes n}} + \delta \\
	 &= I(Y;RV)_\omega  + \delta.
	\end{align}
	Noting that $\frac1nI(Y^n; R^n E^n)_\sigma = \frac1n I(Y^n; R^n E^n)_\tau$, 
	inequality~\eqref{eq:Stein_independence_ach2} is thus proved.
	
	On the other hand, by the assumption of $\frac12 I(U;YR)_\omega \leq r$ for the channel $ \mathcal{N}^{X\to U}$,  ~\eqref{eq:Stein_independence_ach1} is also satisfied, i.e.~
	\begin{align}
	\frac1n \log |W^n| = \frac12 I(U;YR)_\omega + \delta \leq r + \delta.
	\end{align}
	Hence, we have proved the `$\leq$' of \eqref{eq:Stein_independence1}.
	
	(Optimality/Converse): We move on to the other direction `$\geq$' on  ~\eqref{eq:Stein_independence1}.
	Let $n\in\mathbb{N}$ and $\mathcal{F}^{X^n \to W^n}$ be arbitrary. Denote by $ {Q}_X = \frac1n \log |W^n|$ and $ {Q}_Y = \frac1n I(Y^n;R^nE^n)_\sigma$.
	We then have to show that
	\begin{align} \label{eq:Stein_independence_opt3}
	{Q}_Y \geq \inf_{ \substack{\mathcal{N}^{X\to U} \\ \frac12 I(U;YR)_{\sigma} \leq  {Q}_X}  } I\left( Y;RV\right)_\sigma.
	\end{align} 
	
	To that end, we first give a lower bound on $ {Q}_X$:	
	\begin{align}
	n {Q}_X 
	&\geq H(W^n)_\sigma \\
	&= \frac12 I(W^n;E^n)_\sigma + \frac12 I(W^n; T_C Y^n R^n)_\sigma \\
	&\geq \frac12 I(W^n; T_C Y^n R^n)_\sigma\\
	&= \frac12 I(W^n;Y^n R^n | T_C) + \frac12 I(W^n; T_C)_\sigma \\
	&\geq \frac12 I(W^n;Y^n R^n | T_C)_\sigma \\
	&=  \frac12 I(W^n T_C;Y^n R^n )_\sigma - \frac12 I(T_C; Y^n R^n)_\sigma \\
	&\geq  \frac12 I(W^n;Y^n R^n )_\sigma. \label{eq:Stein_independence_opt1}
	\end{align}
	where the fist inequality follows from $H(W^n)\leq \log |W^n|$, the second line follows from the fact that for any pure state of a tripartite system $ABE$,
	$H(A) = \frac12 I(A;B) + \frac12 I(A;E).
	$
	The third and the fifth line are due to the positivity of mutual information. 
	In the fourth and the sixth lines, we use the chain rule
	$I(A;BC) = I(A;B) + I(A;C|B)$;
	the last line holds because $T_C$ is uncorrelated with $Y^n R^n$, and we use the data-processing inequality with respect to partial trace.
	
	Next, we prove the desired lower bound given in  ~\eqref{eq:Stein_independence_opt3}: 
	\begin{align}
	{Q}_Y &= \frac1n I(Y^n; R^n E^n)_\sigma \\
	&\geq F_\text{q}^n\left( \frac1n I(W^n;Y^n R^n)_\sigma \right) \\
	&\geq F_\text{q}^n\left( 2 Q_X \right) \\
	&= F_\text{q}(2 Q_X)  \\
	&=  \inf_{ \substack{\mathcal{N}^{X\to U} \\ \frac12 I(W;YR)_{\sigma} \leq Q_X}  } I\left( U;Y\right)_\sigma,
	\end{align}
	where the first inequality follows from the definition of $F_\text{q}^n(b)$ given in  ~\eqref{eq:regularized0} in the first inequality. The second inequality follows from inequality \eqref{eq:Stein_independence_opt1} and the fact that $b\mapsto F_\text{q}^n(b)$ is monotonically decreasing. In the fourth line, we apply Lemma~\ref{lemm:regularized}.
	Hence, we prove our assertion of Theorem~\ref{theo:Stein_independence}.
\end{proof}

\section{Second order converses in quantum network theory via reverse hypercontractivity} \label{sec:sc}

In classical network information theory, the so-called {\em{Blowing-Up Lemma (BUL) method}} has proved to be very useful in yielding strong converse bounds for various tasks for which traditional methods such as single-shot and type class analysis are known to fail (see \cite{AGK76,Tan14,CK11,raginsky2013concentration}). The BUL method relies on the concentration of measure phenomenon (see \cite{ledoux2001concentration,BLM13,marton1966simple,raginsky2013concentration}), which implies that the tail probability of a sequence being far (in Hamming distance) from a correctly decoded sequence is sub-Gaussian. Using this fact, the decoding sets can be slightly \textit{blown up} in such a way that the probability of decoding error can be made arbitrarily small for sufficiently large blocklengths. The connection of those sets with the achievable rate of the task under study, generically expressed in terms of an entropic quantity, is then made by means of data-processing inequality. Despite being widely applicable, this method suffers two major drawbacks: first, it only yields suboptimal second-order terms (typically of order $\mathcal{O}(\sqrt{n}\,\log^{\frac{3}{2}}(n))$). Secondly, the argument is restricted to finite alphabets. 

Recently, another approach based on functional inequalities related to the phenomenon of concentration of measure has been developed in \cite{LHV18,liu2017beyond}. There, instead of \textit{blowing-up} decoding sets, the idea is to work with indicator functions of those sets and \textit{smooth} them out by perturbing them through the action of a Markov semigroup. As opposed to the blowing-up lemma, here one uses a variational formulation for the entropic expression governing the rate of the task under study, which involves an optimization over a class of (typically positive) functions. The connection to the decoding sets is then made by choosing the optimizing function to be the smoothed version of the indicator function over that set. The advantage of this method is that it avoids the use of the data-processing inequality, which is responsible for the weaker second-order term achieved by the blowing-up method. Not only does this method provide the right second-order term (typically $\mathcal{O}(\sqrt{n})$) in the strong converse bound, but the finer control provided by this Markovian approach can be extended to general alphabets, Gaussian channels and channels with memory. Moreover, and as we will see now, the functional analytical nature of the smoothing-out method is more easily generalizable to the quantum setting where the set of allowed tests (or POVMs) is strictly larger than the one of those that are diagonalizable in the basis of decoded sequences.

In this section, we derive second-order, finite blocklength strong converse bounds for the task of bipartite hypothesis testing under communication constraints which was introduced in Section \ref{sec:Stein}, as well as for the task of quantum source coding with classical side information at the decoder. As mentioned before, our techniques rely on the tensorization property of the reverse hypercontractivity for the quantum generalized depolarizing semigroup. For sake of clarity, we provide a brief introduction to the techniques that are being used in the next subsection (see \cite{trunck} and the references therein for more details).

\subsection{Quantum Markov semigroups and reverse hypercontractivity}

In this section, we employ a powerful analytical tool, namely, the so-called {\em{quantum reverse hypercontractivity}} of a certain quantum Markov semigroup (QMS) and its tensorization property. Let us introduce these concepts and the relevant results in brief. For more details see e.g.~\cite{BDR18} and references therein. 

We recall that a \textit{quantum Markov semigroup} (QMS) $(\Phi_t)_{t\ge 0}$ on the algebra $\mathcal{B}(\cH)$ of linear operators on the Hilbert space $\cH$ is a family of quantum channels (in the Heisenberg picture), that consists of completely positive, unital maps $\Phi_t:\mathcal{B}(\cH)\to\mathcal{B}(\cH)$ such that
\begin{itemize}
	\item[(i)] For any $t,s\ge 0$, $\Phi_{t+s}=\Phi_t\circ \Phi_s$;
	\item[(i)] $\Phi_0={\rm{id}}$, where ${\rm{id}}$ denotes the identity map on $\mathcal{B}(\cH)$ ;
	\item[(ii)] The family $(\Phi_t)_{t\ge 0}$ is strongly continuous at $t=0$.
\end{itemize}
When $\cH$ is finite dimensional, there exists a linear map $\mathcal{L}:\mathcal{B}(\cH)\to\mathcal{B}(\cH)$ such that $$\mathcal{L}=\frac{d\Phi_t}{dt}\,\,\,\,\,\Leftrightarrow\,\,\,\,\, \Phi_t:=\e^{-t\mathcal{L}}\,.$$
We further assume that the QMS is \textit{primitive}, that is, there exists a unique full-rank state $\sigma$ which is invariant under the evolution: $$\forall t\ge 0,\,\,\,\Phi_t(\sigma)=\sigma\,.$$

The QMS that we consider is the so-called {\em{generalized quantum depolarizing semigroup}}. In the Heisenberg picture, for any state $\sigma >0$ on a Hilbert space $\cH$, the generalized quantum depolarizing semigroup with invariant state $\sigma$ is defined by a one-parameter family of linear completely positive (CP) unital maps $\left(\Phi_t\right)_{t \geq 0}$, such that for any $X \in \cB(\cH)$,
\begin{align}\label{eq-gqds}
\Phi_t(X) = \e^{-t}X + (1-\e^{-t}) \tr(\sigma X) \,\mathbb{I},
\end{align}
In the Schr\"odinger picture, the corresponding QMS is given by the family of CPTP maps $\left(\Phi^\star_t\right)_{t \geq 0}$, such that
$$ \tr(Y \Phi_t (X)) = \tr( \Phi^\star_t(Y) X), \quad \forall, \,\, X,Y \in \cB(\cH).$$
The action of $\Phi^\star_t$ on any state $\rho \in \cD(\cH)$ is that of a generalized depolarizing channel, which keeps the state unchanged
with probability $e^{-t}$, and replaces it by the state $\sigma$ with probability $(1 - e^{-t})$:
$$ \Phi^\star_t(\rho) = \e^{-t}\rho + (1-\e^{-t}) \sigma\,.$$
Note that $ \Phi^\star_t(\sigma) = \sigma$ for all $t\geq 0$, and that $\sigma$ is the unique invariant state of the evolution.
\medskip

To state the property of quantum reverse hypercontractivity, we define, for any $X\in\mathcal{B(H)}$, the non-commutative weighted $L_p$ norm {with respect to} the state $\sigma \in \mathcal{D}(\cH)$, for any $p \in \mathbb{R}\backslash\{0\}$\footnote{ {For $p<1$, these are pseudo-norms, since they do not satisfy the triangle inequality. For $p<0$, they are only defined for $X>0$ and for a non-full rank state by taking them equal to $\big( \Tr\big[ \big| \sigma^{-\frac{1}{2{p}}} X^{-1} \sigma^{-\frac{1}{2 {p}}} \big|^{{-p}} \big]\big)^{1/p}$}.}:
\begin{align} \label{eq:wLp}
\left\| X \right\|_{p, \sigma} := \left( \Tr\left[ \left| \sigma^{\frac{1}{2{p}}} X \sigma^{\frac{1}{2 {p}}} \right|^{{p}} \right] \right)^{\frac{1}{{p}}}.
\end{align} 
\smallskip
A QMS $\left(\Phi_t\right)_{t \geq 0}$ is said to be reverse $p$-contractive for $p <1$, if
\begin{align}\label{contract}
|| \Phi_t(X)||_{p,\sigma} \geq ||X||_{p,\sigma}, \quad \forall \, X >0.
\end{align}
The generalized quantum depolarizing semigroup can be shown to satisfy a stronger inequality: $\forall$ $ p<q <1$,
\begin{align}\label{QRHC}
|| \Phi_t(X)||_{p,\sigma} &\geq ||X||_{q,\sigma}, \quad \forall \, X >0,
\end{align}
for
\begin{align}\label{tcond}
t &\geq \frac{1}{4 \alpha_1 (\cL)} \log \left(\frac{p-1}{q-1}\right),
\end{align}
where $\alpha_1 (\cL)>0$ is the so called {\em{modified logarithmic Sobolev constant}}, and $\cL$ denotes
the generator of the generalized quantum depolarizing semigroup, which is defined through the relation $\Phi_t(X) = e^{-t\cL}(X)$ and is given by
$$\cL(X) = X - \tr(\sigma X) \mathbb{I}.$$
The inequality \eqref{QRHC} is indeed stronger than \eqref{contract} since the map $p \mapsto ||X||_{p,\sigma}$ is non-decreasing.
\medskip

In the context of this paper, instead of the generalized quantum depolarizing semigroup defined through \eqref{eq-gqds}, we need to consider the QMS $\left(\Phi_{t, x^n}\right)_{t \geq 0}$,
with $\Phi_{t, x^n}$ being a CP unital map acting on $\cB(\cH^{\otimes n})$, and being labelled by sequences
$x^n \equiv (x_1,x_2, \ldots, x_n) \in \cX^n$, where $\cX$ is a finite set. For any $x \in \cX$, let $\rho^x \in \cD (\cH)$. Further, let
\begin{align}
\rho^{x^n} &:= \rho^{x_1} \otimes \cdots \otimes \rho^{x_n} \,\in \cD (\cH^{\otimes n}).
\end{align}
Then,
\begin{align}
\Phi_{t,x^n} &:= \Phi_{t,x_1} \otimes \cdots \otimes \Phi_{t,x_n},
\end{align}
where $(\Phi_{t,x})_{t\ge 0}$ is a generalized quantum depolarizing semigroup with invariant state $\rho^x$.
We denote by $\cK_{x^n}=\sum_{i=1}^n\widehat{\cL}_{x_i}$ the generator of $(\Phi_{t,x^n})_{t\ge 0}$ where $\widehat{\cL}_{x_i}={\rm{id}}^{\otimes i-1} \otimes  {\cL}_{x_i} \otimes {\rm{id}}^{\otimes n-i}$, with ${\cL}_{x_i}$ being the generator of the generalized quantum depolarizing semigroup $(\Phi_{t,x_i})$. The following quantum reverse hypercontractivity of the above tensor product of generalized quantum depolarizing semigroup was established in~\cite{BDR18} ( { {See~\cite{mossel2013reverse} for its classical counterpart, as well as \cite{CKMT15} for its extension to doubly stochastic QMS}}):
\begin{theo}[Quantum reverse hypercontractivity for tensor products of depolarizing semigroups {\cite[Corollary 17, Theorem 19]{BDR18}}] \label{lemm:RHC}
	For the QMS $\left(\Phi_{t, x^n}\right)_{t \geq 0}$ introduced above, for any $\mathsf{p}\leq \mathsf{q}< 1$ and for any $t$ satisfying $t\geq \log \frac{\mathsf{p}-1}{\mathsf{q}-1}$, the following inequality holds:
	\begin{align}
	\left\| \Phi_{t,x^n}(G_n) \right\|_{\mathsf{p},\rho^{x^n}}
	\geq \left\| G_n \right\|_{\mathsf{q},\rho^{x^n}}, \quad \forall\, G_n>0.
	\end{align}
	In other words, $\alpha_1(\cK_{x^n})\ge \frac{1}{4}$.
\end{theo}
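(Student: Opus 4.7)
The plan is to derive the stated reverse hypercontractivity inequality in two stages. First, I would establish a reverse modified logarithmic Sobolev inequality (MLSI) for the generator $\cK_{x^n}$ with constant $\alpha_1(\cK_{x^n})\geq 1/4$. Second, I would integrate this differential inequality along a carefully chosen time-dependent $L_p$ flow to recover the integral form stated in the theorem.

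For the single-site reverse MLSI attached to $\cL_x(X)=X-\tr(\rho^x X)\mathbb{I}$, I would exploit the closed-form expression $\Phi_{t,x}(X)=\e^{-t}X+(1-\e^{-t})\tr(\rho^x X)\mathbb{I}$ to reduce the weighted $L_p$ entropy production to a tractable expression. Differentiating $t\mapsto \|\Phi_{t,x}(G)\|_{p,\rho^x}^p$ at $t=0$ and expressing the derivative in a spectral basis of $\rho^x$ reduces the claim to a one-variable convexity inequality in the eigenvalues, yielding the reverse MLSI with a universal constant. The factor $1/4$ appearing ultimately in the theorem (rather than the $1/2$ familiar from forward hypercontractivity) reflects the $p(p-1)$ prefactor arising when one differentiates $\|\cdot\|_{p,\sigma}^p$ in $p$ inside the reverse range $p<1$.

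The second step is the tensorization from single sites to $\cK_{x^n}=\sum_i\widehat{\cL}_{x_i}$. Since the invariant state $\rho^{x^n}=\rho^{x_1}\otimes\cdots\otimes\rho^{x_n}$ is a product, the weighted $L_p$ quasi-norm and its associated entropy factorize along the tensor structure. I would invoke an exact (as opposed to quasi-) factorization of the weighted entropy with respect to product states, analogous to the one developed in \cite{BDR18,Capel_2018}, expressing the global entropy as a sum of local entropies each controllable by the single-site MLSI. Because the reference state is an honest tensor product (and not merely an approximately product state), no correction term appears and the MLSI constant is preserved, giving $\alpha_1(\cK_{x^n})\geq \min_i\alpha_1(\cL_{x_i})\geq 1/4$.

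Finally, with the MLSI in hand, the integral RHC follows by a standard interpolation argument. Setting $F(t):=\|\Phi_{t,x^n}(G_n)\|_{\mathsf{p}(t),\rho^{x^n}}$ with $\mathsf{p}(0)=\mathsf{q}$ and $\mathsf{p}(t)$ evolving so that $\mathsf{p}(t)-1=(\mathsf{q}-1)\,\e^{4\alpha_1(\cK_{x^n})\,t}$, the MLSI translates into $F'(t)\geq 0$; integrating up to the time $t_0=\frac{1}{4\alpha_1(\cK_{x^n})}\log\frac{\mathsf{p}-1}{\mathsf{q}-1}\leq \log\frac{\mathsf{p}-1}{\mathsf{q}-1}$, at which $\mathsf{p}(t_0)=\mathsf{p}$, yields the announced inequality. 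I expect the tensorization step to be the principal obstacle: for $p<1$ the norms $\|\cdot\|_{p,\sigma}$ are only pseudo-norms (and for $p<0$ are defined via inversion of the argument), so the chain-rule arguments familiar from the classical setting or from forward hypercontractivity require substantial modification to handle non-commutativity of the single-site generators while preserving positivity of the objects being differentiated along the flow.
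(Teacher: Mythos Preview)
The paper does not give its own proof of this theorem; it is quoted verbatim from \cite[Corollary~17, Theorem~19]{BDR18} and used as a black box. Your proposal correctly reconstructs the strategy of that reference: establish a single-site reverse MLSI for the depolarizing generator with constant $1/4$, tensorize it via an exact factorization of the $\sigma$-weighted entropy for product invariant states, and then integrate along the Gross-type flow $\mathsf{p}(t)-1=(\mathsf{q}-1)\e^{4\alpha_1 t}$ to obtain the reverse hypercontractive inequality. So at the level of architecture your sketch matches the cited proof.

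One small point worth tightening: your integration step produces the inequality at the hitting time $t_0=\tfrac{1}{4\alpha_1(\cK_{x^n})}\log\tfrac{\mathsf{p}-1}{\mathsf{q}-1}\leq \log\tfrac{\mathsf{p}-1}{\mathsf{q}-1}$, whereas the statement asks for it at all $t\geq \log\tfrac{\mathsf{p}-1}{\mathsf{q}-1}$. You should either (i) run the flow with the weaker constant $1/4$ in place of $4\alpha_1(\cK_{x^n})$ so that the hitting time is exactly $\log\tfrac{\mathsf{p}-1}{\mathsf{q}-1}$, or (ii) invoke the reverse $\mathsf{p}$-contractivity $\|\Phi_{s,x^n}(\cdot)\|_{\mathsf{p},\rho^{x^n}}\geq \|\cdot\|_{\mathsf{p},\rho^{x^n}}$ for $s\geq 0$ (which is itself a consequence of the MLSI) to pass from $t_0$ to any larger $t$. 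Either fix is routine.
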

The following two inequalities play key roles in our proofs.
\begin{lemm}
	[Araki-Lieb-Thirring inequality {\cite{Ari76, LT76}}] \label{lemm:ALT}
	For any $A,B\in \mathcal{P}(\mathcal{H})$, and $r\in[0,1]$,
	\begin{align}
	\Tr\left[ B^{\frac{r}{2}} A^r B^{\frac{r}{2}} \right] \leq \Tr\left[ \left( B^\frac12 A B^\frac12 \right)^r\, \right].
	\end{align}
\end{lemm}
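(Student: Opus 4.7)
The plan is to reduce the claim to the stronger log-majorization of Araki,
\begin{equation}
\prod_{i=1}^k \lambda_i^\downarrow\bigl(B^{r/2} A^r B^{r/2}\bigr) \;\le\; \prod_{i=1}^k \lambda_i^\downarrow\bigl((B^{1/2} A B^{1/2})^r\bigr), \qquad k = 1, \dots, \dim \mathcal{H},
\end{equation}
where $\lambda_i^\downarrow(\cdot)$ denotes the $i$-th largest eigenvalue. Once this is established, the trace inequality follows immediately upon taking $k = \dim\mathcal{H}$: log-majorization between vectors of non-negative reals implies weak majorization, and the sums of all eigenvalues equal the traces of the corresponding positive operators.

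To establish the log-majorization, I would first recast both sides in terms of singular values. Using the identity $\lambda_i^\downarrow(X^*X) = s_i(X)^2$ applied to $X = A^{r/2} B^{r/2}$ on the left and $X = A^{1/2} B^{1/2}$ on the right, together with $\lambda_i^\downarrow(M^r) = \lambda_i^\downarrow(M)^r$ for $M \geq 0$, the log-majorization becomes
\begin{equation}
\prod_{i=1}^k s_i\bigl(A^{r/2} B^{r/2}\bigr) \;\le\; \prod_{i=1}^k s_i\bigl(A^{1/2} B^{1/2}\bigr)^{r}, \qquad k = 1, \dots, \dim\mathcal{H}.
\end{equation}
I would then invoke the antisymmetric tensor power $\Lambda^k$, which satisfies the multiplicativity $\Lambda^k(XY) = \Lambda^k(X)\Lambda^k(Y)$, commutes with the functional calculus on positive operators ($\Lambda^k(A^s) = \Lambda^k(A)^s$, via the spectral theorem), and $\|\Lambda^k(X)\|_\infty = \prod_{i=1}^k s_i(X)$. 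Applying $\Lambda^k$ to $A$ and $B$ reduces every $k$ at once to the Cordes-type operator-norm inequality
\begin{equation}
\|C^{r/2} D^{r/2}\|_\infty \;\le\; \|C^{1/2} D^{1/2}\|_\infty^{r}, \qquad C, D \geq 0, \; r \in [0,1].
\end{equation}

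The final and most technical step is proving this Cordes inequality, which I would do by complex interpolation. For fixed unit vectors $v, w \in \mathcal{H}$, the function $z \mapsto \langle v, C^{z/2} D^{z/2} w \rangle$ is analytic and bounded on the strip $\{ 0 \le \Re z \le 1\}$; at $\Re z = 0$ it is bounded by $1$ since $C^{iy/2}$ and $D^{iy/2}$ are unitary, and at $\Re z = 1$ it is bounded by $\|C^{1/2} D^{1/2}\|_\infty$ after absorbing the imaginary unitary factors into $v$ and $w$. The Hadamard three-lines lemma applied at $z = r$, followed by taking the supremum over $v, w$, yields the claimed bound. The main obstacle is checking the bounded-growth hypothesis on the strip rigorously (one can first assume $C, D > 0$ and pass to the limit by continuity of the functional calculus), but since both Cordes's inequality and the Araki–Lieb–Thirring inequality are classical facts, the authors opt to simply cite the original references \cite{Ari76, LT76} rather than reproduce the argument.
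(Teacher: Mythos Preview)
The paper gives no proof of this lemma; it is stated with citations to the original references, exactly as you anticipate in your closing remark. Your sketch via Araki's log-majorization---reducing to the Cordes operator-norm inequality $\|C^{r/2}D^{r/2}\|_\infty \le \|C^{1/2}D^{1/2}\|_\infty^{r}$ through the antisymmetric tensor power trick, and then proving Cordes by three-lines interpolation---is the standard modern route and is correct. (A slightly shorter alternative at the last step is to derive Cordes from the L\"owner--Heinz theorem: normalize so that $\|C^{1/2}D^{1/2}\|_\infty=1$, i.e.\ $D^{1/2}CD^{1/2}\le\mathbb{I}$, hence $C\le D^{-1}$ on the relevant subspace; operator monotonicity of $t\mapsto t^r$ for $r\in[0,1]$ then gives $C^r\le D^{-r}$, whence $\|C^{r/2}D^{r/2}\|_\infty\le 1$.)
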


\begin{lemm}
	[Reverse H\"older's inequality {\cite[Lemma 1]{BDR18}}] \label{lemm:RHI}
	Let $A\geq 0$ and $B>0$. Then, for any $p<1$ with H\"older conjugate $\hat{p} := 1/(1-1/p)$, we have
	\begin{align}
	\langle A, B \rangle_\sigma \geq \left\| A \right\|_{p,\sigma} \left\| B \right\|_{\hat{p},\sigma}.
	\end{align}
\end{lemm}

We are now ready to state and prove the first main result of this section, namely a second-order finite blocklength strong converse bound for the task of bipartite quantum hypothesis testing under communication constraints, as defined in Section \ref{sec:Stein}.

\subsection{Distributed quantum hypothesis testing when there is no rate constraint for Bob}\label{strongconverseQHT}

Consider the binary hypotheses given in  ~\eqref{eq:independence}. 
In this section, we further make the following assumption: $$\rho_{XY} = \sum_{x} Q_X(x)|x\rangle \langle x|\otimes \rho_Y^x $$ is a classical-quantum (c-q) state, where $x$ takes values in a finite set $\mathcal{X}$, and $Q_X$ denotes a probability distribution on $\cX$. For any $x \in \cX$, the state $\rho_Y^x$ can 
be viewed as the output of a c-q channel $\Lambda\equiv\Lambda^{X \to Y}$:
\begin{align}\label{cqxy}
\Lambda(x) = \rho_Y^x \in \mathcal{D}(\mathcal{H}_Y), \quad \forall x \in \mathcal{X}.
\end{align}

In this section we use $X$ and $U$ to denote both random variables (taking values in finite sets $\cX$ and $\cU$, respectively), as well as quantum systems whose associated Hilbert spaces, $\cH_X$ and $\cH_U$, have complete orthonormal bases $\{|x\rangle\}$ and $\{|u\rangle\}$ labelled by the values taken by these random variables. We also consider $X'$ to be a copy of the random variable $X$, and consider the following extension of the state $\rho_{XY}$:
\begin{align}
\rho_{XXY} := \sum_{x} Q_X(x)|x\rangle \langle x|\otimes|x\rangle \langle x|\otimes \rho_Y^x  \in \cD(\cH_X \otimes \cH_{X} \otimes \cH_Y). 
\end{align}
For a stochastic map (i.e.~classical channel) $\mathcal{N}^{X\to U}$, we define the state
\begin{align}\label{omega}
\omega_{UXY} := \left(\mathcal{N}^{X\to U} \otimes {\rm{id}}_{X} \otimes {\rm{id}}_{Y}\right) \rho_{XXY}.
\end{align}

Theorem~\ref{theo:sc_cq} gives the second-order strong converse for the Stein's exponent in this case. Note that without loss of generality, we can assume $Q_X$ to have full support\footnote{This is because the statement of the theorem holds trivially if $Q_X$ does not have full support.}.

\begin{theo}
	[Strong converse bound for the Stein exponent] \label{theo:sc_cq}
	For any $r>0$ and $\eps\in(0,1)$, 
	\begin{align}
	-\frac1n \log &\,\beta_{r,\infty}(n,\eps)\nonumber \\
	&\leq \sup_{ \substack{\mathcal{N}^{X\to U} \\  I(X';U)_\omega \leq r}  } I\left( U;Y\right)_\omega + 
	\left( 2 \log ( \gamma \eta ) \sqrt{ 3 \eta \log \frac{4|\mathcal{X}|}{1-\eps} } + 2 \sqrt{ 2 \gamma \log \frac{4}{1-\eps} } \right) \frac{1}{\sqrt{n}} +\frac{ 2}{n}\, \log \frac{4}{1-\eps},\label{eq-theo}
	\end{align}
		for $n> 3\eta \log \frac{4|\mathcal{X}|}{1-\eps}$, 
	where 
	$\eta := \max_{x \in \cX} \frac{1}{Q_X(x)}$
	and $\gamma:= \max_x \left\| \rho_Y^x \rho_Y^{-1} \right\|_\infty$. In the above, $\omega \equiv \omega_{UX'Y}$ is the state defined in (\ref{omega}).
\end{theo}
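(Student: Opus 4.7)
The plan is to adapt the reverse-hypercontractivity smoothing technique of Liu-Han-Verd\'u and its quantum generalization in \cite{BDR18} to the classical-quantum distributed setting of the theorem. Since $X$ is classical and Bob is unconstrained, any admissible encoder $\mathcal{F}_n$ is a stochastic map $x^n\mapsto\mathcal{F}_n(w|x^n)$ with $|W|\le 2^{nr}$, and Charlie's test decomposes as $T=\sum_w|w\rangle\!\langle w|\otimes T_w$ with $0\le T_w\le\mathds{1}$ on $\cH_Y^{\otimes n}$. After this decomposition the type-I constraint and type-II error read
\begin{align}
\sum_{x^n,w} Q_X^n(x^n)\,\mathcal{F}_n(w|x^n)\,\tr[(\mathds{1}-T_w)\,\rho_Y^{x^n}]\le \eps,\qquad \beta=\sum_w P_W(w)\,\tr[T_w\,\rho_Y^{\otimes n}],
\end{align}
and the task is to lower bound $\beta$ uniformly in $T$ and $\mathcal{F}_n$.

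The central analytic step is a two-step interplay between the reverse H\"older inequality (Lemma~\ref{lemm:RHI}) and the quantum reverse hypercontractivity of Theorem~\ref{lemm:RHC}. First, after regularizing $T_w\to T_w+\delta\mathds{1}$ to ensure strict positivity, reverse H\"older yields $\tr[T_w\rho_Y^{\otimes n}]\ge \|T_w\|_{p,\rho_Y^{\otimes n}}$ for some $p<1$. Next, the $x^n$-dependent tensor-product depolarizing semigroup $\Phi_{t,x^n}=\bigotimes_i\Phi_{t,x_i}$, with local invariant state $\rho_Y^{x_i}$, is used to convert this norm into one weighted by $\rho_Y^{x^n}$ at a new index $q<1$, replacing $T_w$ by the smoothed test $\Phi_{t,x^n}(T_w)$. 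Switching the reference state from $\rho_Y^{\otimes n}$ to $\rho_Y^{x^n}$ is performed via the Araki-Lieb-Thirring inequality (Lemma~\ref{lemm:ALT}) and is the source of the multiplicative constant $\gamma=\max_x\|\rho_Y^x\rho_Y^{-1}\|_\infty$. A second application of reverse H\"older, now paired with the type-I constraint and controlled by a Chebyshev concentration on the empirical type of $X^n$ (where the factor $\eta=\max_xQ_X(x)^{-1}$ and $\log|\cX|$ enter via the $O(\sqrt{\eta\log|\cX|/n})$ deviation of empirical averages of $\log \rho_Y^{x_i}\rho_Y^{-1}$ around their mean), closes the estimate and yields the single-shot inequality
\begin{align}
-\frac{1}{n}\log\beta \;\le\; \frac{1}{n}\,I(W;Y^n)_\sigma+\frac{C(\eta,\gamma,\eps,|\cX|)}{\sqrt n}+\frac{2}{n}\log\frac{4}{1-\eps},
\end{align}
with $C$ in exactly the form appearing in \eqref{eq-theo}.

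The last step is single-letterization of $\tfrac{1}{n}I(W;Y^n)_\sigma$. Since $(X_i,Y_i)$ are i.i.d.\ under $\mathsf{H}_0$, $Y_i\perp Y^{i-1}$ and the chain rule gives $I(W;Y^n)_\sigma=\sum_i I(W,Y^{i-1};Y_i)_\sigma$. On the constraint side, $\tfrac{1}{n}\log|W^n|\ge\tfrac{1}{n}H(W)\ge\tfrac{1}{n}I(X^n;W)_\sigma=\tfrac{1}{n}\sum_i I(X_i;W|X^{i-1})_\sigma$; since $X_i\perp(X^{i-1},Y^{i-1})$ and $Y^{i-1}$ is obtained from $X^{i-1}$ by a quantum channel, the identity $I(X_i;W,Y^{i-1})+I(X_i;X^{i-1}|W,Y^{i-1})=I(X_i;W|X^{i-1})$ yields $I(X_i;W|Y^{i-1})=I(X_i;W,Y^{i-1})\le I(X_i;W|X^{i-1})$. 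Introducing a uniform time-sharing random variable $J\in[n]$ independent of the rest and setting $U:=(W,Y^{J-1},J)$, $X':=X_J$, $Y:=Y_J$ produces a state $\omega_{UX'Y}$ of the form \eqref{omega} with $\mathcal{N}^{X\to U}$ a classical channel, satisfying $I(X';U)_\omega\le r$ and $\tfrac{1}{n}I(W;Y^n)_\sigma= I(U;Y)_\omega$. Taking the supremum over admissible $\mathcal{N}$ reproduces the right-hand side of \eqref{eq-theo}.

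The step I expect to be hardest is the simultaneous calibration of the four parameters $(p,q,t,\delta)$ of the reverse-hypercontractivity step together with the type-I tolerance $\eps$, so that the accumulated error terms combine into precisely the $O(1/\sqrt n)$ remainder advertised in \eqref{eq-theo} rather than the weaker $O(n^{-1/2}\log^{3/2}n)$ that the blowing-up method would give. A secondary subtlety is ensuring that $\gamma$ and $\eta$ appear only \emph{inside} the logarithm of the subleading term: this forces the Araki-Lieb-Thirring bound to be inserted after the hypercontractivity step, and the empirical-type concentration for $X^n$ to be inserted before (not after) the reverse H\"older step.
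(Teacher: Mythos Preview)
Your proposal names the right ingredients (reverse H\"older, tensorized reverse hypercontractivity for the $x^n$-indexed depolarizing semigroups, Araki--Lieb--Thirring, and a type-concentration step), but the way you assemble them does not produce the intermediate bound you claim, and two structural pieces of the actual argument are missing.

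\textbf{The intermediate quantity is not $\tfrac{1}{n}I(W;Y^n)_\sigma$.} Reverse hypercontractivity applies to the \emph{tensor-product} states $\rho_Y^{x^n}$, not to the mixture $\sigma_{Y^n}^w=\sum_{x^n}P(x^n|w)\rho_Y^{x^n}$, so the machinery cannot output $D(\sigma_{Y^n}^w\|\rho_Y^{\otimes n})$ or its average $I(W;Y^n)$. What the method actually yields (this is the paper's Proposition~\ref{prop:variational}) is, for each Lagrange parameter $c\ge 1$,
\[
\sum_{x^n}\mu_n(x^n)\big(\tr[\rho_Y^{x^n}T_n]\big)^{c(1+1/t)}\ \le\ \big(\tr[\rho_Y^{\otimes n}\Psi_t^{\otimes n}(T_n)]\big)^{c}\,\e^{\Delta(\mu_n,\Lambda^{\otimes n},\rho_Y^{\otimes n},c)},
\]
where $\Delta(\mu,\Lambda,\nu,c)=\sup_{P\ll\mu}\{cD(\sum_xP(x)\rho_Y^x\|\nu)-D(P\|\mu)\}$. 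This identity requires Petz's variational formula for the quantum relative entropy, a step absent from your outline. The rate enters not through $I(W;Y^n)$ but through a crude counting bound $\sum_w(\cdots)\le |\mathcal W^n|\max_w(\cdots)$, which in turn only works after an \emph{expurgation} step (also missing from your outline) that replaces $T$ by a test $\tilde T$ with uniformly bounded conditional type-II error $\mathrm{Pr}_{\tilde T}\{\mathsf H_0|\mathsf H_1,w\}\le\beta/\eps'$ for every $w$.

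\textbf{The single-letterization you propose does not match the theorem.} Your auxiliary $U=(W,Y^{J-1},J)$ contains the \emph{quantum} register $Y^{J-1}$, so it does not arise from a classical channel $\mathcal N^{X\to U}$ as the statement requires. The paper instead single-letterizes the $n$-letter quantity $\Delta(\mu_n,\Lambda^{\otimes n},\rho_Y^{\otimes n},c)$ directly: using the chain rule, the c-q Markov structure $Y_i-X^{i-1}-Y^{i-1}$, and conditioning on the \emph{classical} past $X^{i-1}$, one obtains $\Delta(\mu_n,\cdot,\cdot,c)\le n\Delta^\star(Q_X,\Lambda,\rho_Y,c)+O(\sqrt n)$ with $\Delta^\star=\sup_{\mathcal N}\{cI(U;Y)_\omega-I(U;X)_\omega\}$ over classical $\mathcal N^{X\to U}$. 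Only after this does one take $\inf_{c\ge 1}$ to recover the constrained supremum on the right of~\eqref{eq-theo}. In short, the Lagrangian parameter $c$ is carried through the entire argument and optimized at the very end; your plan of first reaching $I(W;Y^n)$ and then chain-ruling is not how the bound is obtained.
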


\noindent
\begin{remark}Note that in the classical setting, Ahlswede and Csisz\'ar proved the strong converse property by means of the blowing-up lemma \cite{AC86}. A better bound was recently found by \cite{LHV18} using the smoothing-out method. As mentioned at the beginning of this section, we follow the latter method.
	\end{remark}

\begin{proof}
	
	Using the Lagrange multiplier method, the first-order term on the right-hand side of (\ref{eq-theo}) can be re-written as
	\begin{align}
	\theta(r, \infty) 
	& := \inf_{c>0} \sup_{ \mathcal{N}^{X \to U} } \left\{ I(U;Y)_\omega - \frac{1}{c}\left(I(U;X')_\omega - r \right)
	\right\} 
\label{first}
\end{align}
Note that $I(U;Y)_\omega \leq I(U;X)_\omega$, by the data-processing inequality for the mutual information under the c-q channel $\Lambda^{X \to Y}$ defined in \eqref{cqxy}. This implies that $\forall$ $c \leq 1$,
$$I(U;Y)_\omega - \frac{1}{c}\left(I(U;X)_\omega \right)\leq 0.$$
The supremum of this quantity is equal to zero and is attained by a stochastic channel $\mathcal{N}^{X \to U}$ for which $U$ is independent of $X$ and $Y$. This allows us to restrict the infimum in (\ref{first}) to $c \geq1$: 
\begin{align}
\theta(r) 	&= \inf_{c\geq1} \sup_{ \mathcal{N}^{X \to U} } \left\{ I(U;Y)_\omega - \frac{1}{c}\left(I(U;X)_\omega - r \right)
	\right\}, \label{eq:first-order_c-q}
	\end{align}
	
	Hence, to prove the theorem, it suffices to show that, for every (possibly random) classical encoder $\mathcal{F}_n^{ {X}^n\to {W}^n}$, i.e.~a stochastic map defined through the conditional probabilities $\{P_{W^n|X^n}(w|x^n)\}$, with the random variable ${W}^n$ taking values $w \in \cW^n$, 
such that $\frac1n\log |\mathcal{W}^n| \leq r$, and any c-q test $T_{W^n Y^n} := \sum_{w\in\mathcal{W}} |w\rangle\langle w| \otimes T_{Y^n}^w$ for $0\leq T_{Y^n}^w \leq \mathds{1}_{Y^n}$ satisfying 
	\begin{align} \label{eq:prarameter_c-q}
	\begin{dcases}
	\mathrm{Pr}_T \{\mathsf{H}_1|\mathsf{H}_0\}:=\tr\big(\sigma_{W^nY^n}\,(\mathds{1}-T_{W^nY^n})  \big) \leq \eps \\
	\mathrm{Pr}_T \{\mathsf{H}_0|\mathsf{H}_1\} := \tr\big(\sigma_{W^n}\otimes\rho_{Y^n}\,T_{W^nY^n} \big) = \beta
	\end{dcases}\,\,,
	\end{align}
	(here, we have put a superscript to highlight the dependence on $T$),
	the following holds for all $c\geq1$,
	\begin{align} \label{eq:sc_c-q}
		-\frac1n \log \beta &\leq 
	\sup_{ \mathcal{N}^{X \to U} } \left\{ I(U;Y)_\sigma - \frac{1}{c} I(U;X)_\sigma +\frac{1}{c\,n}\,\log |\mathcal{W}^n| \right\} + \frac{K_\eps}{\sqrt{n}}+\frac{2}{n}\,\log\frac{4}{1-\eps}.
	\end{align}
	for $K_\eps :=  2 \log ( \gamma \eta ) \sqrt{ 3 \eta \log \frac{4|\mathcal{X}|}{1-\eps} } + 2 \sqrt{ 2 \gamma \log \frac{4}{1-\eps} }$.
	\medskip
We divide the proof of (\ref{eq:sc_c-q}) into three steps, given below.	
\smallskip

	\noindent\textbf{Step 1.}
	For an arbitrary $\eps' \in (0,1-\eps)$, one can use an expurgation argument to construct a new test $\tilde{T}_{W^n Y^n}:= \sum_w |w\rangle\langle w| \otimes \tilde{T}_{Y^n}^w $ such that for all $w\in \mathcal{W}^n$ 
	\begin{align} \label{eq:new_parameter_c-q}
	\begin{dcases} 
	\mathrm{Pr}_{\tilde{T}}\{\mathsf{H}_1|\mathsf{H}_0\} \leq \eps + \eps' \\
	\mathrm{Pr}_{\tilde{T}}\{\mathsf{H}_0|\mathsf{H}_1, w\}:=    \tr\big(\rho_{Y^n}\tilde{T}^{w}_{Y^n}) \big)\leq  \frac{\beta}{\eps'}\,.
	\end{dcases}
	\end{align}
Here $\mathrm{Pr}_{\tilde{T}}\{\mathsf{H}_0|\mathsf{H}_1, w\}$ denotes the type-II error under the condition that the decoder (Charlie) receives the classical index $w$, i.e.~the random variable $W^n$ takes the value $w$.

	The proof of the expurgation method is similar to the one used in the classical case~\cite{LHV18} and is deferred to Appendix~\ref{sec:expurgation}.
	\\~\\~
	\noindent\textbf{Step 2.}
	From  ~\eqref{eq:new_parameter_c-q}, we have
	\begin{align}
	1-\eps - \eps' 
	&\leq \mathrm{Pr}_{\tilde{T}}\{\mathsf{H}_0|\mathsf{H}_0\} \\
	&= \sum_{x^n} Q_{X}^{\otimes n}(x^n) \sum_{ w\in \mathcal{W}^n} P_{W^n|X^n}(w|x^n)  \Tr\left[ \rho_{Y^n}^{x^n} {\tilde{T}}_{Y^n}^w \right] \\
	&\leq  \sum_{x^n} \mu_n(x^n) \sum_{ w\in\mathcal{W}^n} P_{W^n|X^n}(w|x^n)  \Tr\left[ \rho_{Y^n}^{x^n} {\tilde{T}}_{Y^n}^w \right]  + \delta,
	\end{align}
	where we introduce a new measure 
\label{mun}
\begin{align}
\mu_n &:=   Q_{X}^{\otimes n}|_{\mathcal{C}_n} := Q_{X}^{\otimes n} \mathrm{1}_\{x^n \in \mathcal{C}_n\},
\end{align} 
for some set $\mathcal{C}_n \subseteq \mathcal{X}^n$ satisfying 
	$Q_{X}^{\otimes n} [\mathcal{C}_n] := \sum_{x^n \in \cX^n} Q_{X}^{\otimes n}(x^n) \geq 1-\delta$, for some $\delta >0$.
	The set $\mathcal{C}_n$ and the constant $\delta$ is specified later at Step 3.
	
	For any $t>0$ and $c\geq 1$, we have
	\begin{align}
	&(1-\eps-\eps' - \delta)^{ c(1+\frac1t)}\nonumber \\
	&\leq \left(  \sum_{x^n} \mu_n(x^n) \sum_{ w\in \mathcal{W}^n } P_{W^n|X^n}(w|x^n)  \Tr\left[ \rho_{Y^n}^{x^n} \tilde{T}_{Y^n}^w \right]  \right)^{c(1+\frac1t)}\nonumber \\
	&\leq  \sum_{x^n} \mu_n(x^n) \sum_{ w\in \mathcal{W}^n } P_{W^n|X^n}(w|x^n) \left( \Tr\left[ \rho_{Y^n}^{x^n}\tilde{T}_{Y^n}^w\right]  \right)^{c(1+\frac1t)}\nonumber \\
	&\leq \sum_{ w\in \mathcal{W}^n } \sum_{x^n} \mu_n(x^n)  \left(  \Tr\left[ \rho_{Y^n}^{x^n} \tilde{T}_{Y^n}^w \right]  \right)^{c(1+\frac1t)}\nonumber \\
	&\leq |\mathcal{W}^n| \sum_{x^n} \mu_n(x^n)  \left( \Tr\left[ \rho_{Y^n}^{x^n} \tilde{T}_{Y^n}^{w^\star} \right]  \right)^{ c(1+\frac1t) }\,, \label{eq:c-q1}
	\end{align}
	for some $w^*\in\mathcal{W}^n$,	where the second inequality follows from Jensen's inequality, since $x \mapsto x^{c(1+\frac1t)}$ is convex.
	Now, let
	\begin{align}
&	\Delta(\mu,\Lambda^{X\to Y}, \rho_Y,c) := \sup_{\gamma_X\ll \mu} \left\{ c\, D\left(	 \sum_x \rho_{Y}^x \,\gamma_X(x) \|\rho_Y\right) - D(\gamma_X\|\mu)
	\right\} \label{eq:d_cq} \,,
	\end{align}
	where the supremum in  ~\eqref{eq:d_cq} is taken over all probability measures $\gamma_X$ on $\mathcal{X}$ and $\Lambda $ is the c-q channel defined in \eqref{cqxy}.
	Then, we apply Proposition~\ref{prop:variational}, given below, to the right-hand side of~\eqref{eq:c-q1} to obtain
	\begin{align}
	&(1-\eps-\eps' - \delta)^{ c(1+\frac1t) } \leq 
	|\mathcal{W}^n| \e^{ \Delta(\mu_n, \Lambda^{\otimes n}, \rho_Y^{\otimes n}, c) }  \left( \Tr \left[ \rho_Y^{\otimes n} \Psi_{t}^{\otimes n}(\tilde{T}^{w^\star}_{Y^n}) \right] \right)^c, \label{eq:c-q1.5}
	\end{align}
	where, given $\rho_Y=\sum_{x\in\cX} Q_X(x)\,\rho^x_Y $ and  $\gamma:=\max_{x \in\cX}\left\| \rho_Y^x \rho_Y^{-1} \right\|_\infty$,
	\begin{align} \label{eq:Psi}
	\Psi_t(T):= \e^{-t} T + \gamma (1-\e^{-t})\Tr\left[\rho_Y T\right]\mathds{1} \quad \forall \,\, T \in \cB(\cH_Y).
	\end{align}
	
	On the other hand, one can estimate
	\begin{align}
	\left( \Tr \left[ \rho_Y^{\otimes n} \Psi_{t}^{\otimes n}(\tilde{T}^{w^\star}_{Y^n}) \right] \right)^c
	&= \left( \e^{-t}  + \gamma (1-\e^{-t}) \right)^{cn}  \left( \Tr \left[ \rho_Y^{\otimes n} \tilde{T}_{Y^n}^{w^\star} \right] \right)^c\nonumber \\
	&\leq\e^{c(\gamma-1)nt}\left( \Tr \left[ \rho_Y^{\otimes n} \tilde{T}_{Y^n}^{w^\star} \right] \right)^c, \label{eq:c-q2}
	\end{align}
where the last inequality follows from the fact that $e^{\gamma t} - 1 \geq \gamma(\e^t-1)$ for $\gamma = \max_x \left\| \rho_Y^x \rho_Y^{-1} \right\|_\infty \geq 1$ (see e.g.~the proof of Theorem 29 in \cite{BDR18}).

	Combining \eqref{eq:c-q1.5} and \eqref{eq:c-q2} yields
	\begin{align}
	&(1-\eps-\eps' - \delta)^{ c(1+\frac1t)} \notag \\
	&\leq   |\mathcal{W}^n| \e^{ \Delta(\mu_n, \Lambda^{\otimes n}, \rho_Y^{\otimes n}, c) } \e^{c(\gamma-1)nt}\left( \Tr \left[ \rho_Y^{\otimes n} \tilde{T}_{Y^n}^{w^\star} \right] \right)^c \nonumber\\
	&\leq   |\mathcal{W}^n| \e^{ \Delta(\mu_n, \Lambda^{\otimes n}, \rho_Y^{\otimes n}, c) } \e^{c(\gamma-1)nt} \left( \frac{\beta}{\eps'} \right)^c, \label{eq:c-q3}
	\end{align}
	where the last inequality \eqref{eq:c-q3} follows from the construction provided in  ~\eqref{eq:new_parameter_c-q}.
	\\~\\~
	\noindent\textbf{Step 3.}
	Let
	\begin{align}
		\Delta^\star(Q_X, \Lambda , \rho_Y,c) := \sup_{ \mathcal{N}^{X\to U}} \left\{ c I(U;Y)_\omega - I(U;X)_\omega
	\right\}\,,  \label{eq:d*_cq}
	\end{align}
	where the optimization is made over the states 
	\begin{align} \label{eq:omega}
	\omega_{UXY}=\sum_{x\in\cX}Q_X(x) |x\rangle\langle x| \otimes \sum_u P_{U|X}(u|x) |u\rangle\langle u | \otimes \rho^x_Y
	\end{align} 
	(i.e.~$\omega_{UXY}$ is a c-c-q Markov chain). We claim that there exists some set $\mathcal{C}_n \subset \mathcal{X}^n$ with $Q_{X}^{\otimes n}[\mathcal{C}_n] \geq 1-\delta$ such that
	\begin{align}
	&\Delta(\mu_n,\Lambda^{\otimes n}, \rho_Y^{\otimes n}, c) 
	\leq n \Delta^\star(Q_X, \Lambda , \rho_Y, c) + \log (\eta^{c+1}) \cdot \sqrt{  {3n\eta} \log \frac{|\mathcal{X}|}{\delta} }. \label{eq:c-q4}
	\end{align}
	We remark that  ~\eqref{eq:c-q4} can be proved by following similar idea in \cite{LHV18}, the fact that conditioning reduces entropy, and the Markovian property of $Y_i - (X_1,\ldots, X_{i-1}) - (Y_1, \ldots, Y_{i-1})$ under the memoryless c-q channel $\Lambda^{\otimes n}$.
	The proof of  ~\eqref{eq:c-q4} is deferred to Appendix~\ref{sec:single-letter}.
	
	Lastly, choose 
	\begin{align}
	\eps' = \frac{1-\eps}{2},\quad 
	\delta = \frac{1-\eps}{4}, \quad
	t = \sqrt{\frac{-\log \frac{1-\eps}{4}}{\gamma n}}.
	\end{align}
	Combining  ~\eqref{eq:c-q3} and \eqref{eq:c-q4} gives the desired  ~\eqref{eq:sc_c-q}, which completes the proof.
\end{proof}

It remains to prove Proposition~\ref{prop:variational}, given below, which we used to obtain the inequality (\ref{eq:c-q1.5}).
	
	\begin{prop} \label{prop:variational}
		For any (unnormalized) probability measure $\mu_n$ on $\mathcal{X}^n$, 
$\rho_Y \in \mathcal{D}(Y)$, $0\leq T_n \leq \mathds{1}_{Y^n}$,	$c>1$ and $t>0$, it follows that
		\begin{align} \label{eq:Key}
		 \left( \Tr \left[ \rho_Y^{\otimes n} \Psi_{t}^{\otimes n}(T_n) \right] \right)^c  \e^{\Delta(\mu_n,\Lambda^{\otimes n}, \rho_Y^{\otimes n},c)}
		\geq \sum_{x^n}\mu_n(x^n) \left( \Tr\left[ \rho_{Y^n}^{x^n} T_n \right] \right)^{c(1+\frac1t) },
		\end{align}
		where $\Psi_t$ is given by \eqref{eq:Psi}, and $\Lambda\equiv \Lambda^{X\to Y}$ is the c-q channel defined in \eqref{cqxy}.
	\end{prop}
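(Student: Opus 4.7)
The strategy combines three ingredients from the paper's toolkit: the quantum reverse hypercontractivity of the tensorised generalised depolarising semigroup (Theorem~\ref{lemm:RHC}), the reverse H\"older inequality (Lemma~\ref{lemm:RHI}), and the Araki--Lieb--Thirring inequality (Lemma~\ref{lemm:ALT}), together with a classical Donsker--Varadhan-type variational argument on the outer sum over $x^n\in\mathcal{X}^n$.

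The plan is first to decouple the classical sum from the quantum content. By the classical Donsker--Varadhan variational formula,
\[
\log\sum_{x^n}\mu_n(x^n)\bigl(\Tr[\rho_{Y^n}^{x^n}T_n]\bigr)^{c(1+\frac{1}{t})}=\sup_{\gamma\ll\mu_n}\Bigl\{c\bigl(1+\tfrac{1}{t}\bigr)\,\mathbb{E}_\gamma\log\Tr[\rho_{Y^n}^{X^n}T_n]-D(\gamma\|\mu_n)\Bigr\}.
\]
Introducing the mixed state $\rho_{Y^n}^\gamma:=\sum_{x^n}\gamma(x^n)\rho_{Y^n}^{x^n}$ and restricting the supremum in the definition of $\Delta(\mu_n,\Lambda^{\otimes n},\rho_Y^{\otimes n},c)$ to the same classical measure $\gamma$, the proposition reduces to a single-state quantum inequality which must hold for every such $\rho_{Y^n}^\gamma$ and relates $\Tr[\rho_{Y^n}^\gamma T_n]$, $\Tr[\rho_Y^{\otimes n}\Psi_t^{\otimes n}(T_n)]$ and the relative entropy $D(\rho_{Y^n}^\gamma\|\rho_Y^{\otimes n})$.

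Setting $\sigma:=\rho_Y^{\otimes n}$, I would next rewrite $\Tr[\rho_{Y^n}^\gamma T_n]=\langle\sigma^{-1/2}\rho_{Y^n}^\gamma\sigma^{-1/2},T_n\rangle_\sigma$ in the weighted KMS inner product and apply a non-commutative H\"older-type inequality at the conjugate pair $p=1+t$, $\hat p=1+\frac{1}{t}$. The factor attached to $\rho_{Y^n}^\gamma$ then computes as an exponential of a R\'enyi-type divergence of $\rho_{Y^n}^\gamma$ against $\sigma$, with the Araki--Lieb--Thirring inequality managing the operator-power rearrangements. The remaining $L_{\hat p}$-norm of $T_n$ against $\sigma$ is attacked via the quantum reverse hypercontractivity of the generalised depolarising semigroup with invariant $\rho_Y$ on $\cH_Y^{\otimes n}$ (Theorem~\ref{lemm:RHC}), which transports the norm to one involving $\Phi_t^{\otimes n}(T_n)$; a final reverse H\"older step (Lemma~\ref{lemm:RHI}) then converts this into the expectation $\Tr[\sigma\Psi_t^{\otimes n}(T_n)]$, the extra factor $(\e^{-t}+\gamma(1-\e^{-t}))^n$ being precisely what distinguishes $\Psi_t$ from the plain depolarising map and arising from the bound $\rho_Y^x\leq\gamma\rho_Y$.

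I expect the main obstacle to be the simultaneous calibration of the exponents $(p,\hat p)$ and of the depolarising time $t$ so that (i) the $c$-th power on the left-hand side of \eqref{eq:Key} matches the output of H\"older, (ii) the perturbed map $\Psi_t$ (carrying the constant $\gamma=\max_x\|\rho_Y^x\rho_Y^{-1}\|_\infty$) rather than the plain $\Phi_t$ emerges, and (iii) the R\'enyi-type divergence generated by H\"older is correctly converted into the ordinary relative entropy $D$ appearing in the definition of $\Delta$. The reverse-hypercontractive time constraint $t\geq \log\frac{p-1}{q-1}$ of Theorem~\ref{lemm:RHC} is what binds these three requirements together, and the inequality $\rho_{Y^n}^\gamma\leq \gamma^n\sigma$ inherited from $\rho_Y^x\leq\gamma\rho_Y$ is what allows the R\'enyi-versus-KL gap to be absorbed into the $\gamma$-dependent tail of $\Psi_t$.
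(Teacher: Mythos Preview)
Your overall strategy---split off the classical $x^n$-sum via Donsker--Varadhan and then prove a single-state quantum inequality for $\rho_{Y^n}^\gamma$ using H\"older and reverse hypercontractivity---does not go through as written, and differs in a crucial way from the paper's route.

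The paper does \emph{not} pass to the mixture $\rho_{Y^n}^\gamma$. It works \emph{pointwise} in $x^n$: for each fixed $x^n$ it uses the depolarising semigroup $\Phi_{t,x^n}$ whose invariant state is the \emph{product} state $\rho_{Y^n}^{x^n}=\bigotimes_i\rho_Y^{x_i}$, applies the reverse H\"older inequality (Lemma~\ref{lemm:RHI}) with $A=\Gamma^{-1}_{\rho_{Y^n}^{x^n}}(\rho_Y^{\otimes n})$ and $B=\Phi_{t,x^n}(T_n)$ at $p\in(0,\tfrac12]$, $\hat p\in[-1,0)$, and then invokes RHC (Theorem~\ref{lemm:RHC}) at exponents $\hat p\le q<1$. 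Araki--Lieb--Thirring turns the $A$-factor into $\exp(-cD_{1-p}(\rho_{Y^n}^{x^n}\|\rho_Y^{\otimes n}))$, and one finally takes $p\to 0$ so that $D_{1-p}\to D$ \emph{exactly}. Only after averaging in $x^n$ does one identify $\sum_{x^n}\mu_n(x^n)\e^{cD(\rho_{Y^n}^{x^n}\|\rho_Y^{\otimes n})}=\e^{\Delta}$ via the variational formula for $\Delta$ (Proposition~\ref{prop:vaiational}).

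Your proposal runs into three concrete obstructions. First, you fix the H\"older pair $p=1+t$, $\hat p=1+\tfrac1t$, both strictly larger than $1$; but Theorem~\ref{lemm:RHC} is a statement about exponents \emph{below} $1$ and gives nothing at $\hat p>1$. Second, after Donsker--Varadhan and Jensen you are left with $\rho_{Y^n}^\gamma=\sum_{x^n}\gamma(x^n)\rho_{Y^n}^{x^n}$, which is generically \emph{not} a product state; the tensorised RHC and the ALT manipulations both rely on the product structure and cannot be applied to $\rho_{Y^n}^\gamma$ in the way you describe. Third, the reduction forces you to prove
\[
(1+\tfrac1t)\,\mathbb{E}_\gamma\log\Tr[\rho_{Y^n}^{X^n}T_n]\;\le\;\log\Tr[\rho_Y^{\otimes n}\Psi_t^{\otimes n}(T_n)]+D(\rho_{Y^n}^\gamma\|\rho_Y^{\otimes n}),
\]
with the \emph{convex} relative entropy $D(\rho_{Y^n}^\gamma\|\cdot)$ rather than the larger $\mathbb{E}_\gamma D(\rho_{Y^n}^{X^n}\|\cdot)$; this is strictly stronger than what the pointwise argument yields, and the bound $\rho_{Y^n}^\gamma\le\gamma^n\rho_Y^{\otimes n}$ does not close that gap. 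In the paper the R\'enyi-versus-$D$ issue is handled by taking a limit ($p\to 0$), not by absorbing a remainder into the constant $\gamma$.

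The fix is precisely to abandon the mixed-state reduction: argue pointwise in $x^n$ with the semigroup whose invariant state is $\rho_{Y^n}^{x^n}$, use reverse H\"older at $p\in(0,1)$ (so that $\hat p<0$ and RHC applies), let $p\to 0$, and only afterwards average and invoke the variational identity for $\Delta$.
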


The proof of Proposition \ref{prop:variational} makes use of the Araki-Lieb-Thirring inequality, Lemma~\ref{lemm:ALT}.


	\begin{proof}[Proof of Proposition~\ref{prop:variational}]
	The result is proved using the tensorization of the reverse hypercontractivity inequality for the depolarizing semigroup together with a variational formulation of the quantum relative entropy: 
		We firstly show that
		\begin{align} \label{eq:Key0}
		\left( \Tr \left[ \rho_Y^{\otimes n} \Psi_{t}^{\otimes n}(T_n) \right] \right)^c 
		\sum_{x^n} \mu_n(x^n) \e^{c D(\rho_{Y^n}^{x^n}\| \rho_Y^{\otimes n})}
		\geq  \sum_{x^n}\mu_n(x^n) \left( \Tr\left[ \rho_{Y^n}^{x^n} T_n \right] \right)^{c(1+\frac1t) },
		\end{align}
		and secondly claim that
		\begin{align}
		\sum_{x^n} \mu_n(x^n) \e^{c D(\rho_{Y^n}^{x^n}\| \rho_Y^{\otimes n})}
		= 	\e^{\Delta(\mu_n,\Lambda^{\otimes n}, \rho_Y^{\otimes n},c)}
		\end{align}
		to complete the proof. Define, for all $x^n \in \mathcal{X}^n$,
		\begin{align}
		\label{eq:Phi}
		\Phi_{t,x^n} := \bigotimes_{i=1}^n \Phi_{t,x_i}\,, 
		\end{align}
	where, for any $x\in\cX$:	
	\begin{align}
		\Phi_{t,x}(T) := \e^{-t} T + (1-\e^{-t}) \Tr\left[ \rho_{Y}^x T \right] \mathds{1}_Y\,.
			\end{align}
		Employing the Reverse H\"older's inequality (Lemma~\ref{lemm:RHI}) with $ A = \Gamma^{-1}_{\rho_{Y^n}^{x^n}}(\rho_Y^{\otimes n})$,   $B= \Phi_{t,x^n}(T_n)$, and $p\in(0,1]$, we obtain
		\begin{align}
		\left( \Tr \left[ \rho_Y^{\otimes n} \Phi_{t,x^n}(T_n) \right] \right)^c &\geq \left\| \Gamma^{-1}_{\rho_{Y^n}^{x^n}}(\rho_Y^{\otimes n}) \right\|_{p, \rho_{Y^n}^{x^n} }^c \left\| \Phi_{t,x^n}(T_n) \right\|_{\hat{p}, \rho_{Y^n}^{x^n} }^c. \label{eq:Key2}
		\end{align}
		Applying Araki-Lieb-Thirring inequality (Lemma~\ref{lemm:ALT}), with $A = \rho_Y^{\otimes n}$, $B = (\rho_{Y^n}^{x^n})^{(1-p)/p}$ and $r=p\in(0,\frac12]$, it holds that
		\begin{align}
		\left\| \Gamma^{-1}_{\rho_{Y^n}^{x^n}}(\rho_Y^{\otimes n}) \right\|_{p, \rho_{Y^n}^{x^n} }^c
		\geq \e^{- c D_{1-p}(\rho_{Y^n}^{x^n}\|\rho_Y^{\otimes n})}, \label{eq:Key3}
		\end{align}
		where  $D_{1-p}(A\|B) := -\frac1p \log \Tr\left[ A^p B^{1-p} \right]$.
		
		To lower bound the second term in  ~\eqref{eq:Key2}, 
		we employ the quantum reverse hypercontractivity (cf.~Theorem~\ref{lemm:RHC})  with $\tau = \rho_{Y^n}^{x^n}$, $  \hat{p} \in [-1,0)$ and any $q \in [0,1)$ satisfying $\frac{1-q}{1-\hat{p}} = \e^{-t}$ to obtain
		\begin{align}
		\left\| \Phi_{t,x^n}(T_n) \right\|_{\hat{p}, \rho_{Y^n}^{x^n} }^c
		&\geq \left\| T_n \right\|_{q, \rho_{Y^n}^{x^n} }^c \\
		&= \left( \Tr\left[ \left( (\rho_{Y^n}^{x^n})^{\frac{1}{2q}} T_n  (\rho_{Y^n}^{x^n})^{\frac{1}{2q}} \right)^q  \right] \right)^{\frac{c}{q}} \\
		&\geq \left( \Tr\left[ \rho_{Y^n}^{x^n} T_n^q  \right] \right)^{\frac{c}{q}} \label{eq:Key4}\\		
		&\geq \left( \Tr\left[  \rho_{Y^n}^{x^n} T_n    \right] \right)^{\frac{c}{q}}, \label{eq:Key5}
		\end{align}
		where we used Araki-Lieb-Thirring inequality (Lemma~\ref{lemm:ALT}) again in  ~\eqref{eq:Key4}, and the last inequality \eqref{eq:Key5} holds since $0\leq T_n \leq \mathds{1}_{Y^n}$.
		
	The superoperator $(\Psi_{t}^{\otimes n} - \Phi_{t,x^n})$, where $\Phi_{t,x^n}$ is the superoperator defined through \eqref{eq:Phi},  is positivity-preserving since $\rho_Y^x \leq \mathbb{I}_B$ for all $x\in\mathcal{X}$. This can be proved by induction in $n$, as in the proof of \cite[Theorem 29]{BDR18}). Using this fact,  we obtain the following upper bound on the right-hand side of  ~\eqref{eq:Key2}: for $c >1$,
		\begin{align}
		\left( \Tr \left[ \rho_Y^{\otimes n} \Phi_{t,x^n}(T_n) \right] \right)^c &\leq \left( \Tr \left[ \rho_Y^{\otimes n} \Psi_{t}^{\otimes n}(T_n) \right] \right)^c.\label{eq:Key6}
		\end{align}
	
		Combining  ~\eqref{eq:Key2}, \eqref{eq:Key3}, \eqref{eq:Key5}, and \eqref{eq:Key6}, taking averaging over all $x^n \in \mathcal{X}^n$ with respect to the measure $\mu_n$,  we have
		\begin{align} \label{eq:Key7}
		\left( \Tr \left[ \rho_Y^{\otimes n} \Psi_{t}^{\otimes n}(T_n) \right] \right)^c 
		\sum_{x^n} \mu_n(x^n) \e^{c D_{1-p}(\rho_{Y^n}^{x^n}\| \rho_Y^{\otimes n})}
		\geq  \sum_{x^n}\mu_n(x^n) \left( \Tr\left[ \rho_{Y^n}^{x^n} T_n \right] \right)^{\frac{c}{q} }.
		\end{align}
		Taking $p\to 0$, and $\frac{1}{q}\to\frac{1}{ 1- \e^{-t}} \leq 1+\frac1t$, 
		the above inequality leads to the  ~\eqref{eq:Key0}.
		
		Next, using the variational formula~\cite{Pet88} of the quantum relative entropy of $D(\rho_{Y^n}^{x^n}\|\rho_{Y}^{\otimes n})$, we obtain
		\begin{align}
		\sum_{x^n} \mu_n(x^n) \e^{c D(\rho_{Y^n}^{x^n}\| \rho_Y^{\otimes n})} &=
		\sum_{x^n} \mu_n(x^n) \e^{ \sup_{G_n>0} \left\{c \Tr\left[ \rho_{Y^n}^{x^n} \log G_n \right] - c \log \Tr\left[ \e^{ \log \rho_Y^{\otimes n} + \log G_n } \right] \right\}} \\
		&= \sup_{G_n>0} \sum_{x^n} \mu_n(x^n) \e^{ c \Tr\left[ \rho_{Y^n}^{x^n} \log G_n \right] - c \log \Tr\left[ \e^{ \log \rho_Y^{\otimes n} + \log G_n } \right] } \\
		&= \sup_{G_n>0} \frac{\sum_{x^n} \mu_n(x^n) \e^{ c \Tr\left[ \rho_{Y^n}^{x^n} \log G_n \right]}}{ \left( \Tr\left[ \e^{ \log \rho_Y^{\otimes n} + \log G_n } \right]\right)^c  }.
		\end{align}
		Hence,
		\begin{align}
		\log \left(\sum_{x^n} \mu_n(x^n) \e^{c D(\rho_{Y^n}^{x^n}\| \rho_Y^{\otimes n})}\right) &= \sup_{G_n>0} \log\left( \sum_{x^n} \mu_n(x^n) \e^{ c \Tr\left[ \rho_{Y^n}^{x^n} \log G_n \right]}\right)  - c \log\left( \Tr\left( \e^{ \log \rho_Y^{\otimes n} + \log G_n } \right) \right) \\
		&= \Delta(\mu_n,\Lambda^{\otimes n}, \rho_Y^{\otimes n} , c)
		\end{align}
		by invoking the variational formula for $\Delta$, Proposition~\ref{prop:vaiational}, given in Appendix~\ref{sec:variational}.
		This completes the proof.		
		
	\end{proof}

\subsection{Classical-quantum second order image size characterization method}\label{cqimagesize}

The intuition behind the proof of Theorem \ref{theo:sc_cq} can be summarized as follows: given some classical encoder $\cF_n^{X^n\to W^n}$, with corresponding conditional probability distribution $P_{W^n|X^n}$, and a test $T_{W^nY^n}:=\sum_{\omega\in\cW^n}|\omega\rangle\langle\omega|\otimes T^w_{Y^n}$ such that $\mathrm{Pr}_T \{\mathsf{H}_0|\mathsf{H}_0\}:=\mathbb{E}_{(X^n,W^n)}[\tr(\rho_{Y^n}^{X^n}\,T_{Y^n}^{W^n})]\ge 1-\eps$ for some given $\eps\in(0,1)$, how small can $\mathrm{Pr}_T \{\mathsf{H}_0|\mathsf{H}_1\}:=\mathbb{E}_{W^n}[\tr(\rho_Y^{\otimes n}\,T_{Y^n}^{W^n})]$ be made? In other words, we are interested in the following optimization problem:
\begin{align*}
\min_{0\le T_{W^nY^n}\le \mathds{1}_{W^nY^n}:\,\mathbb{E}_{(X^n,W^n)}[\tr(\rho_{Y^n}^{X^n}\,T_{Y^n}^{W^n})]\ge 1-\eps}\,\mathbb{E}_{W^n}[\tr(\rho_Y^{\otimes n}\,T_{Y^n}^{W^n})]\,.
\end{align*}
Here, $X^n$ and $W^n$ refer to the corresponding registers when added as subscripts of a state $\rho$, whereas they refer to the associated classical random variables $X^n,W^n$ when added as superscripts. Then, combining (\ref{eq:c-q1}) and (\ref{eq:c-q1.5}) , we showed that the above problem can be reduced to the one of finding a lower bound on the following quantity:
\begin{align*}
\min_{0\le T_{Y^n}\le \mathds{1}_{Y^n} :\,|\mathcal{W}^n|\,\mathbb{E}_{X^n}[\tr(\rho_{Y^n}^{X^n}T_{Y^n})]\ge 1-\eps}\,\tr(\rho_{Y}^{\otimes n}T_{Y^n})\,.
\end{align*}
Note that the tests $T_{Y^n}$ over which we optimize do not depend on the register $W^n$ any longer. Then, by Markov's inequality, such a lower bound can be found by further lower bounding
\begin{align*}
\min_{0\le T_{Y^n}\le \mathds{1}_{Y^n} :\,\mathbb{P}(\tr(\rho_{Y^n}^{X^n}T_{Y^n})\ge 1-\eps)\ge \frac{1}{|\mathcal{W}^n|}}\,\tr(\rho_{Y}^{\otimes n}T_{Y^n})\,.
\end{align*}

This is an instance of what we will call a \textit{classical-quantum image size characterization problem}. Such optimization problems are directly related to the problem of determining the achievable rate region of distributed source coding, as we will see in Section \ref{secsource}. For more information on the classical image size characterization problem in this context, we refer to \cite[Chapter 15]{CK11}.
More generally, let $\Lambda:\cX\to \cD(\cH_Y)$ be a c-q channel, with $\Lambda(x)=\rho^x_Y$, $\sigma\in\cD(\cH_Y)$ and $X^n$ a random variable corresponding to a positive measure $\mu_n$ on $\cX^n$. We are interested in lower bounding the probability $\tr(\sigma^{\otimes n} T_{Y^n})$, for a given test $0\le T_{Y^n}\le \mathds{1}_{Y^n} $ on $\cH_{Y^n}$, in terms of the probability $\mathbb{P}_{\mu_n}(\tr(\rho_{Y^n}^{X^n}T_{Y^n})\ge 1-\eps)$. Using the method of Lagrange multipliers, this amounts to finding an upper bound on 
\begin{align*}
\sup_{0\le T_{Y^n}\le  \mathds{1}_{Y^n}   }\log\big( \mathbb{P}_{\mu_n}( \tr(\rho_{Y^n}^{X^n}T_{Y^n})\ge 1-\eps    ) -c\log\tr(\sigma^{\otimes n} T_{Y^n}) \big)\,,
\end{align*}	
for a given $c>0$. Next, define the following generalizations of the quantities $\Delta$ and $\Delta^*$ defined in (\ref{eq:d_cq}), resp. defined in (\ref{eq:d*_cq}):
\begin{align}
&	\Delta(\mu_X,\Lambda, \sigma,c) := \sup_{P_X\ll \mu} \big\{ c\, D\big(	 \sum_x \rho_{Y}^{x} \,\mu_X(x) \|\sigma\big) - D(P_X\|\mu_X)
\big\} \label{eq:d_cqgen} \,,\\
&	\Delta^\star(\mu_X, \Lambda, \sigma,c) := \sup_{\mathcal{N}^{X\to U}} \left\{ c D( {\sigma}_{Y|U} \| \sigma | P_U) - D(P_{X|U} \| \mu_X |P_U )\right\} \,.\label{eq:dstar_cqgen} 
\end{align}
where for example $D( {\sigma}_{Y|U} \| \sigma | P_U) := \sum_u P_U(u) D(\sigma_{Y|U=u}\| \sigma)$,
and given the conditional distribution $P_{U|X}$ corresponding to the channel $\mathcal{N}^{X\to U}$, the conditional distribution $P_{X|U}$ is defined as follows: 
\begin{align}
P_{X|U}(x|u):=\frac{ P_{UX}(u,x)}{P_U(u)}=\frac{P_{UX}(u,x)}{\sum_{y}\,P_{U|X}(u|y)\mu_X(y)}\,,
\end{align}
and $\sigma_{Y|U}:= \sum_x P_{X|U}(x) \rho_{Y}^x$. Combining Proposition \ref{prop:variational} together with (\ref{eq:c-q4}), we find the following second order strong converse to the c-q image size characterization problem, which generalizes Theorem 4.5 and Corollary 4.7 of~
\cite{LHV18}:
\begin{theo}\label{theoimagesize}
	Let $\Lambda:\cX\to \cD(\cH_Y)$ be the classical-quantum channel which outputs state $\Lambda(x)=\rho_Y^x$ for every $x\in\cX$.
	
	(i)	For any $X^n\sim\mu_n\in\cP_+(\cX^n)$, $0\le T_{Y^n}\le  \mathds{1}_{Y^n}$ on $\cH_{Y}^{\otimes n}$, $\delta\in (0,1)$ and $c>0$:
	\begin{align}\label{imagesizepbsecondorder}
	\log \mathbb{P}_{X^n}\big( \tr(\rho_{Y^n}^{X^n} T_{Y^n} )\ge \delta  \big)&-c\log\tr(\sigma^{\otimes n}  T_{Y^n})\\
	&\le \Delta(\mu_n,\Lambda^{\otimes n},\sigma^{\otimes n},c)+2\,c\sqrt{\log\frac{1}{\delta}}\,\sqrt{n(\gamma-1)}+c\,\log\frac{1}{\delta}\,,\nonumber
	\end{align}	
	where $\gamma$ is defined in Theorem \ref{theo:sc_cq}. 
	\medskip
	
	(ii) Let now $Q_X$ be a probability distribution on $\cX$ and define $\eta$ as in Theorem \ref{theo:sc_cq}. Then, for any $\eps\in(0,1)$ and $n>3\eta\log\frac{|\cX|}{\eps}$, there exists a set $\mathcal{C}_n\subset \cX^n$ with $Q_X^{\otimes n}(\mathcal{C}_n)\ge 1-\delta$ such that, for $\mu_n:=Q_X^{\otimes n}|_{\mathcal{C}_n}$, any test $T_{Y^n}$, $c>0$ and $\delta\in(0,1)$
	\begin{align}\label{simgleletter}
	\log \mathbb{P}_{X^n}\big( \tr(\rho_{Y^n}^{X^n} T_{Y^n} )\ge \delta  \big)-c\log\tr(\sigma^{\otimes n}  T_{Y^n})\le n\Delta^*(Q_X,\Lambda,\sigma,c)+A\,\sqrt{n}+c\,\log\frac{1}{\delta}\,,
	\end{align}
	where 
	\begin{align}\label{AA}
	A:=\log(\gamma^c\eta^{c+1})\sqrt{3\eta\log\frac{|\cX|}{\eps}}+2c\,\sqrt{(\gamma-1)\log\frac{1}{\delta}}\,.
	\end{align}
\end{theo}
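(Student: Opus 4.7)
The plan is to follow essentially the same Markov-inequality-plus-functional-inequality chain as in the proof of Theorem~\ref{theo:sc_cq}, but skipping the expurgation step (Step~1 there), which was needed only to split away the encoder register $W^n$. Here the test acts directly on $Y^n$, so one can proceed to the variational bound immediately.

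\emph{Part (i).} Write $p:=\mathbb{P}_{X^n}(\tr(\rho_{Y^n}^{X^n}T_{Y^n})\ge\delta)$. For any $t>0$, I apply Markov's inequality to the non-negative quantity $\tr(\rho_{Y^n}^{X^n}T_{Y^n})$ raised to the power $c(1+1/t)$:
$$p\cdot \delta^{c(1+1/t)}\;\le\;\sum_{x^n}\mu_n(x^n)\,\bigl(\tr(\rho_{Y^n}^{x^n}T_{Y^n})\bigr)^{c(1+1/t)}.$$
Proposition~\ref{prop:variational}, applied with $\sigma$ in place of $\rho_Y$ and $\gamma=\max_x\|\rho_Y^x\sigma^{-1}\|_\infty$ (the proof carries over verbatim, since what is used about $\gamma$ is only $\gamma\sigma\ge \rho_Y^x$), then bounds the right-hand side by
$$e^{\Delta(\mu_n,\Lambda^{\otimes n},\sigma^{\otimes n},c)}\,\bigl(\tr(\sigma^{\otimes n}\Psi_t^{\otimes n}(T_{Y^n}))\bigr)^c.$$
The remaining factor is evaluated via the Heisenberg-dual identity $\Psi_t^*(\sigma)=(e^{-t}+\gamma(1-e^{-t}))\sigma$, which yields $\tr(\sigma^{\otimes n}\Psi_t^{\otimes n}(T_{Y^n}))=(e^{-t}+\gamma(1-e^{-t}))^n\tr(\sigma^{\otimes n}T_{Y^n})\le e^{n(\gamma-1)t}\tr(\sigma^{\otimes n}T_{Y^n})$, using the elementary inequality $e^{\gamma t}-1\ge\gamma(e^t-1)$ exactly as in the proof of Theorem~\ref{theo:sc_cq}. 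Taking logarithms and optimizing in $t$ with $t_\star=\sqrt{\log(1/\delta)/((\gamma-1)n)}$ produces the $2c\sqrt{(\gamma-1)n\log(1/\delta)}$ term while leaving the residual $c\log(1/\delta)$; this is precisely~(\ref{imagesizepbsecondorder}).

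\emph{Part (ii).} To single-letterize, I invoke the typical-set estimate~(\ref{eq:c-q4}) (proved in Appendix~\ref{sec:single-letter}), which produces, for each $\eps\in(0,1)$ with $n>3\eta\log(|\cX|/\eps)$, a subset $\mathcal{C}_n\subset\cX^n$ with $Q_X^{\otimes n}(\mathcal{C}_n)\ge 1-\eps$ such that, for $\mu_n=Q_X^{\otimes n}|_{\mathcal{C}_n}$,
$$\Delta(\mu_n,\Lambda^{\otimes n},\sigma^{\otimes n},c)\;\le\;n\,\Delta^*(Q_X,\Lambda,\sigma,c)+\log(\gamma^c\eta^{c+1})\sqrt{3n\eta\log(|\cX|/\eps)}.$$
Plugging this bound into~(\ref{imagesizepbsecondorder}) recovers~(\ref{simgleletter}) with the constant~(\ref{AA}).

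The routine part is Part~(i): Markov, Proposition~\ref{prop:variational}, the dual computation on $\Psi_t$, and a one-parameter scalar optimization. The real technical work sits in Part~(ii), where the single-letterization~(\ref{eq:c-q4}) is obtained by chaining the variational formula defining $\Delta$ across the $n$ coordinates via the Markov structure $Y_i-(X_1,\ldots,X_{i-1})-(Y_1,\ldots,Y_{i-1})$ of the memoryless c-q channel $\Lambda^{\otimes n}$, and then controlling the empirical marginal of $\mu_n$ through a concentration bound on a suitably chosen typical set $\mathcal{C}_n$; this is where the factor $\sqrt{3\eta\log(|\cX|/\eps)}$ and the blocklength restriction $n>3\eta\log(|\cX|/\eps)$ originate.
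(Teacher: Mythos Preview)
Your proposal is correct and follows essentially the same route as the paper: Markov's inequality combined with Proposition~\ref{prop:variational} (with $\sigma$ in place of $\rho_Y$), the bound $\tr(\sigma^{\otimes n}\Psi_t^{\otimes n}(T_{Y^n}))\le e^{n(\gamma-1)t}\tr(\sigma^{\otimes n}T_{Y^n})$, and optimization in $t$ for Part~(i); then the single-letterization of Theorem~\ref{theo:single-letter_LHV18} for Part~(ii). Your explicit remark that Proposition~\ref{prop:variational} carries over verbatim to general $\sigma$ because only the operator inequality $\gamma\sigma\ge\rho_Y^x$ is used is a helpful clarification that the paper leaves implicit.
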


\begin{proof}
	The proof of (\ref{imagesizepbsecondorder}) follows simply from Proposition \ref{prop:variational} after using that $$\tr(\rho_Y^{\otimes n}  \Psi_t^{\otimes n}(T_{Y^n}) )^c\le \e^{c(\gamma-1)nt}\tr(\rho_Y^{\otimes n} T_{Y^n})\,,$$ as well as Markov's inequality, so that
	\begin{align*}
	\sum_{x^n\in\cX^n}\,\mu_n(x^n)\,(\tr(\rho^{x^n}_{Y^n}T_{Y^n}))^{c(1+\frac{1}{t})}&=\mathbb{E}_{X^n}(\tr(\rho_{Y^n}^{X^n}T_{Y^n})^{c({1+\frac{1}{t}})})\\
	&\ge \delta^{c(1+\frac{1}{t})}\mathbb{P}_{X^n}(\tr(\rho_{Y^n}^{X^n}T_{Y^n})\ge \delta)
	\end{align*}
	and optimization over $t\ge 0$. Inequality (\ref{simgleletter}) follows directly from the single-letterization of Theorem \ref{theo:single-letter_LHV18}.
	
\end{proof}	

\subsection{Source coding with classical side information at the decoder}\label{secsource}

A very similar problem to the one of distributed hypothesis testing considered in Section \ref{strongconverseQHT} is the one of classical source coding with side information at the decoder, also known as the Wyner-Ahlswede-K\"orner (WAK) problem \cite{Wyn75,AGK76,hsieh2016channel,DHW18}: Let $Y^n$ be a quantum source of corresponding Hilbert space $\cH_Y^{\otimes n}\simeq \cH_{Y^n}$, and consider a classical i.i.d. register $X^n$ modeling the available side information. Here we consider the memoryless setting, where $\rho_{X^nY^n}=\rho_{XY}^{\otimes n}=\sum_{x^n\in\cX^n}|x^n\rangle\langle x^n|\otimes \rho^{x^n}_{Y^n}$ is an i.i.d. classical-quantum state. The source $Y^n$ and side information $X^n$ are then compressed separately through encoders denoted by the quantum channel $ \mathcal{G}_{n}^{Y^n\to W_2^n}:\mathcal{B}(\cH_Y^{\otimes n})\to \mathcal{B}(\cH_{W_2^n})$, resp. the classical encoding $ \cF_{n}^{X^n\to W_1^n}$ of corresponding transition map $P_{W_1^n|X^n}$. The decoding is modeled by the map $\cD^{W_1^nW_2^n\to \hat{Y}^n}$, which can equivalently be described as a family $(\mathcal{D}^{W_2^n\to \hat{Y}^n}_{w_1})_{w_1\in\mathcal{W}_1^n}$ of quantum channels. An $(\eps,n)$-code is then defined as any tuple $(\mathcal{G}_{n}^{Y^n\to W_2^n}, \cF_{n}^{X^n\to W_1^n},\cD^{W_1^nW_2^n\to \hat{Y}^n})$ such that the average square fidelity\footnote{We note that this criterion can be replaced by a more standard average fidelity criterion without changing the first order term, since $1-F\le 1-F^2\le 2(1-F)$.}

\begin{align}\label{aversquafid}
\mathbb{E}\big[F^2(\rho_{Y^n}^{X^n},\cD_{W_1^n}^{W_2^n\to \hat{Y}^n}\circ \mathcal{G}_n^{Y^n\to W_2^n} (\rho_{Y^n}^{X^n})      )\big]\ge 1-\eps\,,
\end{align}
where we recall that the fidelity between two states $\rho,\sigma$ is defined as 
\begin{align*}
F(\rho,\sigma):=\|\sqrt{\rho}\sqrt{\sigma}\|_1^2\equiv \big(\tr(\sqrt{\sqrt{\rho}{\sigma}\sqrt{\rho}})\big)^2\,.
\end{align*}

The achievability part for this can be proved using classical joint typicality encoding \cite{el2011network} to construct $\mathcal{F}_n^{X^n\to W_1^n}$ followed by a coherent state merging protocol to construct the encoder $\mathcal{G}_n^{Y^n\to W_2^n}$ together with the decoder $\mathcal{D}^{W_1^nW_2^n\to \hat{Y}^n}$ \cite{Horodecki2005}. Its proof is provided for completeness:

\begin{theo}[Achievability]\label{firstorder}
	
	Let $\cX$ be a finite alphabet, and $Y$ a quantum system with $|Y|<\infty$.  Then, there exist encoding maps $\cF_n^{X^n\to W_1^n}$, $\mathcal{G}_n^{Y^n\to W_2^n}$ and decoder $\cD^{W_1^nW_2^n\to Y^n}$ such that the average fidelity converges to $0$ as $n\to\infty$ and for each $n$,
	\begin{align}
	\frac{\log |W_2^n|}{n}\ge \inf_{U:\,U-X-Y}\,\Big(\frac{1}{2}(H(Y)_\omega+H(Y|U)_\omega):\,I(U;\,X)_\omega\le \frac{1}{n}\log|\mathcal{W}_1^n|   \Big),
	\end{align}	
	where $\omega_{UY}:=(\mathcal{N}^{X\to U}\otimes \id_{Y})(\rho_{XY})$ and $\mathcal{N}^{X\to U}$ is a classical encoder.
\end{theo}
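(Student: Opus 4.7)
The plan is to build the code in two stages that mirror the classical Wyner--Ahlswede--K\"orner construction: (i) a Wyner-type classical covering of the side information $X^n$ using an auxiliary codebook on the alphabet $\cU$, which effectively hands the decoder a typical sequence $u^n$; and (ii) a quantum source-coding scheme for $Y^n$ that uses $u^n$ as classical side information at the decoder, at the symmetric rate $\frac{1}{2}(H(Y)_\omega+H(Y|U)_\omega)$.

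Fix $\delta>0$ and a conditional distribution $P_{U|X}$ realizing, up to $\delta$, the infimum on the right-hand side, and write $\omega_{UY}=(\mathcal{N}^{X\to U}\otimes \id_Y)\rho_{XY}$. For the classical stage, generate a random Wyner codebook of $|\mathcal{W}_1^n|=\lceil 2^{n(I(U;X)_\omega+\delta)}\rceil$ sequences $U^n(w_1)$ drawn i.i.d. from $P_U^{\otimes n}$, and let $\cF_n^{X^n\to W_1^n}$ be the classical stochastic encoder that outputs the smallest $w_1$ for which $(X^n,U^n(w_1))$ is jointly $\delta$-typical with respect to $P_{UX}$, and a fixed default otherwise. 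By the classical covering lemma, the encoding failure probability vanishes as $n\to\infty$, and conditional on success the pair $(X^n,U^n(W_1^n))$ is close in total variation to $P_{UX}^{\otimes n}$.

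For the quantum stage I would use the classically assisted Schumacher/state-merging scheme of~\cite{hsieh2016channel,DHW18}: conditioned on $u^n=U^n(w_1)$, the conditional typical subspace of $\bigotimes_{i=1}^n\rho_Y^{u_i}$ has dimension $2^{nH(Y|U)_\omega+o(n)}$, whereas the marginal $\rho_Y^{\otimes n}$ has Schumacher-typical dimension $2^{nH(Y)_\omega+o(n)}$. Alice's encoder $\cG_n^{Y^n\to W_2^n}$ coherently compresses $Y^n$ into a subspace of dimension $2^{n\cdot\frac{1}{2}(H(Y)_\omega+H(Y|U)_\omega)+o(n)}$, obtained by a random-unitary sub-sampling inside the typical subspace of $\rho_Y^{\otimes n}$ that is tuned so that its inverse can be implemented as a $u^n$-dependent isometry on the decoder side. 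The decoder $\cD^{W_1^nW_2^n\to \hat Y^n}$ first reads $W_1^n$ to recover $u^n=U^n(W_1^n)$ and then inverts the encoder using the isometry indexed by $u^n$.

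The main obstacle is to control the composite error in the average squared fidelity~\eqref{aversquafid}: since the Wyner codeword $U^n(W_1^n)$ is only jointly typical with $X^n$ rather than independent of the true data given $X^n$, the quantum compression cannot directly quote the standard i.i.d.\ fidelity guarantees. I would close the gap in three steps: (i) the strong conditional typicality lemma shows that $\rho_{Y^n}^{X^n}$ lies, with overlap $1-o(1)$, in the $U^n(W_1^n)$-conditionally typical subspace; (ii) the gentle measurement lemma converts this support bound into a fidelity bound for the disturbance caused by the compression isometry on the purification $|\psi\rangle_{XYR}^{\otimes n}$; (iii) monotonicity of fidelity under the decoding channel propagates the bound to the reconstructed state. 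Averaging over the random codebook and a standard expurgation/derandomization argument then yield a deterministic $(\eps_n,n)$-code with $\eps_n\to 0$ and the claimed rate constraints.
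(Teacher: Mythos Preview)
Your plan is essentially the same as the paper's: a classical Wyner-type covering codebook on $\cU$ to compress the side information $X^n$ into an index $W_1^n$, followed by a quantum state-merging (fully quantum Slepian--Wolf) protocol for $Y^n$ that attains the qubit rate $\tfrac{1}{2}(H(Y)_\omega+H(Y|U)_\omega)$ using $U^n(W_1^n)$ at the decoder.

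The only structural difference is in how the error is controlled. The paper does not unpack the quantum stage: it first argues that the state produced by the Wyner encoder, $(\cF_n^{X^n\to W_1^n}\otimes\id_{Y^n})\rho_{X^nY^n}$, is close in trace norm to the i.i.d.\ state $\omega_{UY}^{\otimes n}$, and then invokes the coherent state-merging protocol (citing~\cite{horodecki2007quantum,khanian2018distributed}) as a black box on that i.i.d.\ state. Your steps~(i)--(iii) instead propose to re-derive the fidelity guarantee by hand via conditional typical projectors and gentle measurement; this is more work but is exactly what underlies the black-box result. The paper's route is cleaner because once the trace-norm closeness to $\omega_{UY}^{\otimes n}$ is established, continuity of fidelity transports the asymptotic FQSW guarantee directly, and you avoid having to worry about the fact that the encoder $\cG_n$ cannot depend on $u^n$ (the FQSW encoder is already $u^n$-independent by construction).
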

\begin{proof}
	Let $U$ be a random variable such that $U-X-Y$ forms a c-c-q Markov chain. Given $Q_X=\frac{\log|\mathcal{W}_1^n|}{n}$, draw $2^{nQ_X}$ i.i.d. sequences $U^n(m)$, $m\in[2^{nQ_X}]$. Then, define the encoding $\cF_{n}^{X^n\to W_1^n}$ as follows: given a sequence $x^n\in\cX^n$, find $m$ such that $(x^n,U^n(m))$ belongs to the $\eps'$-typical set 
	$$\mathcal{T}_{\eps'}^{(n)}:\left\{ (x^n,u^n):\,\Big|\frac{|\{i:\,(x_i,u_i)=(x,u)\}|}{n}-P_{XU}(x,u)\Big| \le\eps'\,P_{XU}(x,u)\,,\,\,\forall (x,u)\in\,\cX\times\mathcal{U} \right\}\,.$$
	Then, define $\cF_n^{X^n\to W_1^n}(x^n)$ as the channel of corresponding output random variable $W_1^n$ as follows: if there exist more than one message $m$ such that $(X^n,U^n(m))
	\in \mathcal{T}_{\eps'}^{(n)}$, let $M$ be the smallest such $m$. On the other hand, if no such $m$ exists, let $M=1$. This defines a classical channel of rate $Q_X\ge\frac{1}{n}\,I(U;X)_{\omega}$. This way, denoting by $W_1^n=U^n(M)$, available to the decoder, we get
	\begin{align*}
	\mathbb{P}(W_1^n=u^n)&=\mathbb{P}(U^n(M)=u^n)\\
	&=\mathbb{P}((X^n,u^n)\in\mathcal{T}_{\eps'}^{(n)})
	\,.
	\end{align*}
	Denoting by $\mathcal{N}^{X\to U}$ the classical channel of corresponding conditional probability distribution $P_{U|X}$, this in particular implies that
	\begin{align}
	\|  ((\mathcal{N}^{X\to U}\otimes \id_Y)(\rho_{XY}))^{\otimes n}   - (\cF_n^{X^n\to W_1^n}\otimes \id_{Y^n})(\rho_{X^nY^n}) \|_1&\le \|P_U^{\otimes n}-P_{W_1^n}\|_1\nonumber\\
	&=    \sum_{u^n\in\mathcal{U}^n}\,|P_U^{\otimes n}(u^n)-P_{W_1^n}(u^n)|\nonumber\\
	&= \sum_{u^n\in\mathcal{U}^n}\sum_{x^n\in\cX^n:\,(x^n,u^n)\in \mathcal{T}_{\eps'}^{(n)}}\,P_{XU}^{\otimes n}(x^n,u^n)\nonumber   \\
	&=P_{XU}^{\otimes n}((\mathcal{T}_{\eps'}^{(n)})^c)\nonumber\\
	&\underset{n\to\infty}{\to}0\,,\label{eqtypicalclass}
	\end{align}
	We then perform a fully quantum Slepian Wolf protocol \cite{horodecki2007quantum} (also known as coherent state merging protocol) between the decoder holding the system $W^n_1$ and Bob who holds subsystem $Y^n$, which allows the latter to send his system to the decoder by sending qubits at a rate $\frac{1}{2}(H(Y)_\omega+H(Y|U)_\omega)$ (see Theorem 6 of \cite{khanian2018distributed}). The result follows after minimization over all possible encoding maps $\mathcal{N}^{X\to U}$ of Alice.

	%
	%
	
\end{proof}

\begin{remark}
	The classical counterpart of Theorem \ref{firstorder} was originally proved in \cite{Wyn75,AGK76}. A fully quantum version of this result was also recently found by means of quantum reverse Shannon theorem and the fully quantum Slepian Wolf protocol in \cite{hsieh2016channel,DHW18}.
\end{remark}

In the next theorem, we establish a finite sample size strong converse bound for our c-q WAK problem via quantum reverse hypercontractivity. This result can be seen as a generalization of Theorem 4.12 of \cite{LHV18}, and is a consequence of the c-q image size characterization method introduced in Section \ref{cqimagesize}. 

\begin{theo}[Second order strong converse bound]\label{theosourcecoding}
	Let $\cX$ be a finite alphabet, and $Y$ a quantum system with $|Y|<\infty$. Let also $\eps\in(0,1)$ and $n>3\eta\log\frac{|\cX|}{\eps}$, where $\eta$ is defined as in 
	(\ref{theo:sc_cq}). Then, for any encoding maps $\cF_n^{X^n\to W_1^n}$, $\mathcal{G}_n^{Y^n\to W_2^n}$ and decoder $\cD^{W_1^nW_2^n\to Y^n}$ such that the average square fidelity criterion (\ref{aversquafid}) is satisfied, we have
	\begin{align}
	\frac{\log |W_2^n|}{n}\ge&\, \,\inf_{U:\,U-X-Y}\,\Big(H(Y|U):\,I(U;\,X)\le \frac{1}{n}\log|\mathcal{W}_1^n|   \Big)\nonumber\\
	&-\Big(2\log(|Y|\eta)\sqrt{3\eta\log\frac{4|\cX|}{1-\eps}}+  2\sqrt{|Y|\,\log\frac{2}{1-\eps}} \Big)\frac{1}{\sqrt{n}}-\frac{2\log\frac{4}{1-\eps}}{n}\,.
	\end{align}	
\end{theo}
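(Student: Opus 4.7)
The plan is to reduce the c-q source coding problem to the image-size characterization of Theorem \ref{theoimagesize}(ii), following the strategy of the classical Wyner-Ahlswede-K\"orner converse \cite[Theorem 4.12]{LHV18}.

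First, from the square-fidelity criterion (\ref{aversquafid}) I would construct, for each realization $w_1\in\mathcal{W}_1^n$ of $W_1^n$, a POVM element $T_{Y^n}^{w_1}\in\cB(\cH_Y^{\otimes n})$ with two key properties: (a) $\mathbb{E}_{(X^n,W_1^n)}[\tr(\rho_{Y^n}^{X^n}T_{Y^n}^{W_1^n})]\ge 1-O(\sqrt{\eps})$, which follows from Fuchs--van de Graaf applied to (\ref{aversquafid}); and (b) $\tr(T_{Y^n}^{w_1})\le |\mathcal{W}_2^n|$ uniformly in $w_1$, reflecting that $\cG_n^{Y^n\to W_2^n}$ has output Hilbert space of dimension $|\mathcal{W}_2^n|$. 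Construction (b) uses the Stinespring dilation of $\cD_{w_1}^{W_2^n\to\hat Y^n}$ (projection onto its image) or a gentle-measurement-based recovery built from $\cD_{w_1}\circ\cG_n$. With $\sigma:=\mathbb{I}_Y/|Y|$ as reference state, property (b) gives $\tr(\sigma^{\otimes n}T_{Y^n}^{w_1})\le |\mathcal{W}_2^n|/|Y|^n$.

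Then, Markov's inequality converts (a) into $\sum_{w_1}\mathbb{P}_{X^n}(\tr(\rho_{Y^n}^{X^n}T_{Y^n}^{w_1})\ge\delta)\ge 1-O(\sqrt{\eps})$ for $\delta\sim(1-\eps)/2$. Applying Theorem \ref{theoimagesize}(ii) with reference state $\sigma$ and Lagrangian parameter $c>0$ to each $w_1$, and aggregating over $w_1\in\mathcal{W}_1^n$ via $\sum_{w_1}\tr(\sigma^{\otimes n}T_{Y^n}^{w_1})^c \le |\mathcal{W}_1^n|\cdot\max_{w_1}\tr(\sigma^{\otimes n}T_{Y^n}^{w_1})^c$ introduces the $\log|\mathcal{W}_1^n|$ factor. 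Using the explicit computation
\[
\Delta^*(Q_X,\Lambda,\mathbb{I}_Y/|Y|,c) \;=\; c\log|Y| - \inf_{\mathcal{N}^{X\to U}:\,U-X-Y}\big\{cH(Y|U)_\omega + I(U;X)_\omega\big\}\,,
\]
the resulting image-size inequality simplifies, for each $c>0$, to
\[
\frac{\log|\mathcal{W}_2^n|}{n} \;\ge\; \inf_{\mathcal{N}^{X\to U}}\Big\{H(Y|U)_\omega + \tfrac{1}{c}I(U;X)_\omega\Big\} - \frac{1}{c}\cdot\frac{\log|\mathcal{W}_1^n|}{n} - \frac{A}{c\sqrt{n}} - O\big(\tfrac{1}{n}\big).
\]

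The right-hand side is the Lagrangian relaxation of the constrained optimization $\inf\{H(Y|U):I(U;X)\le \log|\mathcal{W}_1^n|/n\}$ with multiplier $\lambda=1/c$, and by strong convex duality the supremum over $c>0$ recovers exactly that constrained infimum. Choosing the optimal $c$ so that $A/(c\sqrt{n})$ produces the claimed second-order constants involving $|Y|$ and $\eta$ completes the argument. The main technical obstacle is property (b): in the classical case the test $T^{w_1}$ is simply the indicator of the decoder's output set, automatically of cardinality $\le|\mathcal{W}_2^n|$; in the quantum case, partial-trace steps in Stinespring dilations can inflate the rank of $\cD_{w_1}\circ\cG_n$'s image, so one must carefully project onto the low-dimensional intermediate system $W_2^n$ while preserving the fidelity guarantee (a) via gentle-measurement estimates. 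A secondary difficulty is to control $A/(c\sqrt{n})$ uniformly over the range of optimal Lagrange multipliers produced by strong duality so that the explicit second-order constants match those in the statement.
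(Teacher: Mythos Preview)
Your high-level plan coincides with the paper's: reduce to Theorem~\ref{theoimagesize}(ii) with $\sigma=\mathbb{I}_Y/|Y|$, produce for each $w_1$ a test of trace at most $|\mathcal{W}_2^n|$, pick a best $w_1^*$ (equivalently, your aggregation step), identify $\Delta^*(Q_X,\Lambda,\mathbb{I}_Y/|Y|,c)-c\log|Y|$ with the Lagrangian $\sup_{\mathcal{N}}\{cI(U;Y)-I(U;X)\}-cS(\rho_Y)$, and optimize over $c\ge 1$.

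Two concrete differences are worth noting. First, the paper does not go through Fuchs--van de Graaf. It takes $T_{Y^n}^{w_1}:=P_{\hat Y^n}$, the projection onto the support of the decoded state $\rho_{\hat Y^n}=\cD_{w_1}\circ\cG_n(\rho_{Y^n}^{X^n})$, and from the square-fidelity assumption obtains directly
\[
1-\eps\;\le\;\mathbb{E}\big[F^2(\rho_{Y^n}^{X^n},\rho_{\hat Y^n})\big]\;\le\;\mathbb{E}\big[\tr(\rho_{Y^n}^{X^n}P_{\hat Y^n})\big]
\]
via the reverse Araki--Lieb--Thirring inequality (Lemma~\ref{theoALTrev} with $r=\tfrac12$, $a=b=4$) followed by Cauchy--Schwarz. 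Your Fuchs--van de Graaf route would only give $1-O(\sqrt{\eps})$ on the right, so the final constants would carry $\sqrt{\eps}$ rather than $\eps$ and you would not recover the theorem exactly as stated; the reverse-ALT step is what buys the sharp $1-\eps$.

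Second, your ``main technical obstacle'' (b) is handled in one line in the paper: since $P_{\hat Y^n}$ projects onto the support of a state coming out of the $|\mathcal{W}_2^n|$-dimensional system, the paper asserts $\tr(P_{\hat Y^n})\le|\mathcal{W}_2^n|$ ``by the Rank Nullity Theorem applied to the decoding maps.'' Your worry about rank inflation under a general CPTP decoder is legitimate; the paper does not elaborate, and the cleanest justification is indeed the Stinespring reduction you gesture at (replace $\cD_{w_1}$ by its isometric dilation so the decoded state has rank $\le|\mathcal{W}_2^n|$, and run the argument on the enlarged output---the image-size bound only improves since fidelity is monotone under partial trace). So you have correctly located the one place that requires care, and your proposed fix is the right one.
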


\begin{proof}
	Consider the random state ${\rho}_{\hat{Y}^n} :=\cD_{W_1^n}^{W_2^n\to \hat{Y}^n}\circ \mathcal{G}_n^{Y^n\to W_2^n} (\rho_{Y^n}^{X^n})     $. Then, from the average fidelity condition:
	\begin{align*}
	1-\eps\le 	\mathbb{E}\big[ F^2(\rho_{Y^n}^{X^n},\rho_{\hat{Y}^n}))  \big]&\equiv\mathbb{E}\big[\tr\big( \sqrt{\sqrt{\rho_{Y^n}^{X^n}}  \,{\rho}_{\hat{Y}^n}\, \sqrt{\rho_{Y^n}^{X^n}}}  \,\big)^4 \big]\\
	&\le \mathbb{E}\big[ \tr\big( \sqrt{\rho_{Y^n}^{X^n}}\,\sqrt{{\rho}_{\hat{Y}^n}}  \big)^2 \big]\\
	&= \mathbb{E}\big[ \tr\big( \sqrt{\rho_{Y^n}^{X^n}}\,{P}_{\hat{Y}^n}\sqrt{\hat{\rho}_{Y^n}^{X^n}}  \big)^2 \big]\\
	&\le  \mathbb{E}\big[ \tr\big( \rho_{Y^n}^{X^n}\,{P}_{\hat{Y}^n}\big)\,\tr\big(
	{\rho}_{\hat{Y}^n}  \big) \big]\\
	&= \mathbb{E}\big[ \tr\big( \rho_{Y^n}^{X^n}\,{P}_{\hat{Y}^n}\big) \big]\,,
	\end{align*}
	where ${P}_{\hat{Y}^n}$ denotes the random projection onto the support of ${\rho}_{\hat{Y}^n}$. The second line above comes from the Reverse Araki-Lieb-Thirring inequality (Lemma~\ref{theoALTrev}), with $A=\rho_{\hat{Y}_n}$, $B=\rho^{X^n}_{Y^n}$, $r=\frac{1}{2}$ and $a=b=4$. The second inequality comes from an application of the Cauchy-Schwartz inequality. Then, by applying Markov's inequality to $\mathbb{P}_{X^n}(1-\tr(\rho_{Y^n}^{X^n}\,P_{\hat{Y}^n})>\eps')$, where $X^n$ is distributed according to $ Q_X^{\otimes n}$ and $\eps'\in(\eps,1)$, we get that
	\begin{align*}
	\mathbb{P}_{X^n}(\tr(\rho_{Y^n}^{X^n}{P}_{\hat{Y}^n})\ge 1-\eps')\ge 1-\frac{\eps}{\eps'}\,.
	\end{align*}	
	Next, fix $\delta'\in(0,1-\eps/\eps')$ such that $n>3\eta\log\frac{|\cX|}{\delta'}$ and $\mu_n$ as in Theorem \ref{theoimagesize}. Then for $\tilde{X}^n$ distributed according to $\mu_n$:
	\begin{align*}
	\mathbb{P}_{\tilde{X}^n}(\tr(\rho_{Y^n}^{\tilde{X}^n}{P}_{\hat{Y}^n})\ge 1-\eps')\ge 1-\frac{\eps}{\eps'}-\delta'\,.
	\end{align*}
	Since the output space $\cW_1^n$ is finite, there must exist an index $w_1^*\in\mathcal{W}_1^n$ such that
	\begin{align*}
	\mathbb{P}_{\tilde{X}^n}(\tr(\rho_{Y^n}^{\tilde{X}^n}{P}_{\hat{Y}^n}(w_1^*))\ge 1-\eps')\ge \frac{1-\frac{\eps}{\eps'}-\delta'}{|\mathcal{W}_1^n|}\,.
	\end{align*}	
	Next, let $\sigma $ be the completely mixed state on $\cH_{Y}$, so that $\tr(\sigma^{\otimes n} {P}_{\hat{Y}^n} )=|Y|^{-n} \tr({P}_{\hat{Y}^n})\le \,|W_2^n|\,|Y|^{-n}$, by the Rank Nullity Theorem applied to the decoding maps. Applying Theorem (\ref{theoimagesize}) (ii) with $T_{Y^n}={P}_{\hat{Y}^n}$ and $\delta=1-\eps'$ yields
	\begin{align*}
	&\log\frac{1-\frac{\eps'}{\eps}-\delta'}{|\mathcal{W}_1^n|}-c\,\log(|W_2^n|\,|Y|^{-n})\\
	&\le \log\,\mathbb{P}_{\tilde{X}^n}(\tr(\rho_{Y^n}^{\tilde{X}^n}{P}_{\hat{Y}^n}(x^*))\ge 1-\eps')-c\,\log\tr(\sigma^{\otimes n}\,{P}_{\hat{Y}^n}(x^*))\\
	&\le n\,\Delta^*(Q_X,\,\Lambda,\,\sigma,\,c)+A\,\sqrt{n}+c\,\log\frac{1}{1-\eps'}\,,
	\end{align*}
	where $A$ is defined in (\ref{AA}). After a simple rearrangement of the terms, we get:
	\begin{align*}
	\log |\mathcal{W}_1^n|+\,c\,\log|W_2^n|\ge&-n\,\Big(\Delta^*(Q_X,\,\Lambda,\,\sigma,\,c)-c\,\log|Y|   \Big)\\
	&-\sqrt{n}\,\Big( \log(|Y|^c\eta^{c+1})\sqrt{3\eta\,\log\frac{4\,|\cX|}{1-\eps}}+2\,c\,\sqrt{|Y|\log\frac{2}{1-\eps}}  \Big)\\
	&-c\,\log\frac{2}{1-\eps}-\log\frac{4}{1-\eps}\,,
	\end{align*}
	for $n>3\eta\log\frac{4|\cX|}{1-\eps}$, where we choose $\eps'=\frac{1+\eps}{2}$ and $\delta'=\frac{1}{2}\big(1-\frac{\eps}{\eps'}\big)$. We conclude by showing that the first order term is correct, since by definition (\ref{eq:dstar_cqgen}) and (\ref{eq:d*_cq}): 
	\begin{align*}
	\Delta^*(Q_X,\, \Lambda,\,\sigma,\,c)-c\log|Y|&= \sup_{\mathcal{N}^{X\to U}} \left\{ c D( {\sigma}_{Y|U} \| \sigma | P_U)-c\,S(\sigma) - D(P_{X|U} \| Q_X |P_U ) \right\}\\
	&=\sup_{\mathcal{N}^{X\to U}} \left\{ -c\,S(\sigma_{Y|U})  - D(P_{X|U} \| Q_X |P_U ) \right\}\\
	&=\sup_{\mathcal{N}^{X\to U}} \left\{ -c\,S(\sigma_{Y|U})+c\,S(\rho_Y)  - D(P_{X|U} \| Q_X |P_U ) \right\}-c\,S(\rho_Y)\\
	&=\sup_{\mathcal{N}^{X\to U}} \left\{ c\,  I(U;Y)_\omega -I(U;X)_\omega\right\}-c\,S(\rho_Y)\\
	&=	\Delta^\star(Q_X, \Lambda , \rho_Y,c)-S(\rho_Y)\,,
	\end{align*}
	where the second line follows from the fact that $\sigma$ is the completely mixed state on $Y$, which implies that $D(\sigma_{Y|U}\|\sigma|P_U)=-S(\sigma_{Y|U})+S(\sigma)$. 
	The state $\omega$ in the fourth line is defined in \eqref{eq:omega}.
	The result follows after optimizing over $c\ge 1$. Hence, we conclude the proof

	\begin{lemm}[Reverse Araki-Lieb-Thirring inequality, Theorem 2.1 of \cite{iten2017pretty}]\label{theoALTrev}
		Let $A$ and $B$ be non-negative operators. Then, for $r\in(0,1]$ and $a,b\in (0,\infty]$ such that $\frac{1}{2r}=\frac{1}{2}+\frac{1}{a}+\frac{1}{b}$, we have
		\begin{align}
		\tr(B^\frac{1}{2}\,A\,B^{\frac{1}{2}})^{r}\le \Big( \tr(B^{\frac{r}{2}}A^rB^{\frac{r}{2}})  \Big)^r\,\|A^{\frac{1-r}{2}}\|_a^{2r}\,\|B^{\frac{1-r}{2}}\|_b^{2r}\,.
		\end{align} 
	\end{lemm}
	
\end{proof}

\begin{remark}
	We note the existence of a gap between the achievability bound of Theorem \ref{firstorder} and the strong converse bound provided in Theorem \ref{theosourcecoding}.
\end{remark}

\begin{remark}
	The sign of the $\mathcal{O}(\sqrt{n})$ second order term found in Theorem~\ref{theosourcecoding} is reversed in the regime $\eps<1/2$. In the classical case, this issue was corrected in \cite{liu2018dispersion} where the authors combined the reverse hypercontractivity technique with the more traditional method of types to get a dispersion bound of order $-\Omega(\sqrt{n})$ for sufficiently small $\eps$. Finding an analogous result in our present setting is postponed to future work. 
\end{remark}

\section{Conclusions} \label{sec:discussions}

In this paper, we generalize the bivariate distributed hypothesis testing problem with communication constraints studied by Berger, Ahlswede, and Csisz\'ar \cite{Ber78, AC86} to the quantum setting. We first show that the Stein exponent for this problem is given by a regularized quantum relative entropy. In the special case of testing against independence, we prove that the exponent admits a single-letter formula. The proof idea for the latter comes from the operational interpretation of the quantum information bottleneck function \cite{DHW18, hsieh2016channel}. When the underlying state is a classical-quantum state, we further establish that the Stein exponent is independent of the threshold $\eps$ on the type-I error probability and obtain a second-order strong converse bound for it.
The employed technical tool is the tensorization property of quantum reverse hypercontractivity for the generalized depolarizing semigroup \cite{LHV18, BDR18}. This technique is then extended to get a strong converse bound for the task of quantum source coding with classical side information at the decoder by considering the more general problem of classical-quantum image size characterization.

\section*{Acknowledgements}
HC was supported by the Cambridge University Fellowship and the Ministry of Science and Technology Overseas Project for Post Graduate Research (Taiwan) under Grant 108-2917-I-564-042. CR is supported by the TUM University Foundation Fellowship.

\appendix
\section{Expurgation Argument} \label{sec:expurgation}
The goal of this appendix is to explain the expurgation argument referred to in Step 1 of the proof of Theorem~\ref{theo:sc_cq}.
For any test $T_{W^n T^n} := \sum_{w \in \cW^n} |w\rangle\langle w|\otimes T_{Y^n}^w$ satisfying
\begin{align}
\begin{dcases}
\mathrm{Pr}_T\{\mathsf{H}_1|\mathsf{H}_0\} \leq \eps \\
\mathrm{Pr}_T\{\mathsf{H}_0|\mathsf{H}_1\} = \beta_n(T) \equiv \beta
\end{dcases}.
\end{align}
(here, we put a subscript on the probability to highlight its dependence on $T$) and $\eps' \in (0,1-\eps)$, one can construct a new test $\tilde{T}_{W^n Y^n}$ such that 
\begin{align} \label{eq:new_text_parameter}
\begin{dcases}
\mathrm{Pr}_{\tilde{T}}\{\mathsf{H}_1|\mathsf{H}_0\} \leq \eps + \eps', \\
\mathrm{Pr}_{\tilde{T}}\{\mathsf{H}_0|\mathsf{H}_1, w\} = \beta/\eps'; \quad \forall w\in \cW^n.
\end{dcases}
\end{align}
We note that this holds not only for the c-q case but also for the case in which $X^n$ is quantum and $\mathcal{F}^{X^n \to W^n}$ is a quantum to classical map, e.g.~it is characterized by a POVM $(\Pi_{X^n}^{w})_{w\in \cW^n}$. 
In this case, it is not hard to see that the state of $\mathsf{H}_0$, i.e.~$\rho_{XY}^{\otimes n}$, after encoding  is
\begin{align}
\sigma_{W^n Y^n} &= \mathcal{F}^{X^n \to W^n}\otimes \id_{Y^n}(\rho_{XY}^{\otimes n}) = \sum_w \Pr\left\{ w \right\} |w\rangle\langle w| \otimes \sigma_{Y^n}^w, \\
\Pr\left\{ w \right\} &= \Tr\left[\rho_{X}^{\otimes n}  \Pi_{X^n}^w  \right], \\
\sigma_{Y^n}^w &= \frac{\Tr_{X^n}[  \rho_{XY}^{\otimes n} \Pi_{X^n}^w \otimes \id_{Y^n} ]}{ \Tr[ \rho_{X}^{\otimes n} \Pi_{X^n}^w ]},
\end{align}
while that of $\mathsf{H}_1$ after encoding is
\begin{align}
\sigma_{W^n Y^n} &= 
 \sum_w \Pr\{w\}|w\rangle\langle w |\otimes \rho_{Y}^{\otimes n}.
\end{align}

Now, we present the proof of \eqref{eq:new_text_parameter}. Without loss of generality, we may assume that the elements in the set $\cW^n$ are ordered in such a way that 
\begin{align}
\mathrm{Pr}_T\{\mathsf{H}_0|\mathsf{H}_1, w\}  = \Tr\left[ \rho_{Y}^{\otimes n} T_{Y^n}^w \right]
\end{align}
is increasing in $w$.
Let
\begin{align} \label{eq:w_dagger}
{w}^\dagger := \argmin \left\{ \bar{w}\in \cW^n: \sum_{ {w}> \bar{w} } \Pr\{ {w}   \} \equiv \sum_{{w}> \bar{w} } \Tr\left[ \rho_{X}^{\otimes n} \Pi_{X^n}^{{w}} \right]
\leq \eps' \right\}.
\end{align}
We define a new test $\tilde{T}_{W^n Y^n}$ that always declares $\mathsf{H}_1$ upon receiving $w> {w}^\dagger$, and coincides with $T$ otherwise, i.e.
\begin{align}
\tilde{T}_{Y^n}^w := \begin{dcases}
T_{Y^n}^w & w\leq {w}^\dagger \\
0 & w> {w}^\dagger
\end{dcases}.
\end{align}
Then,
\begin{align}
\mathrm{Pr}_{\tilde{T}}\{\mathsf{H}_1|\mathsf{H}_0\}
&= \sum_{w\in {\cW^n} } \Pr\{w\}\mathrm{Pr}_{\tilde{T}}\{\mathsf{H}_1|\mathsf{H}_0, w\}\\
&= \sum_{w\in {\cW^n} } \Pr\{w\} \Tr\left[ \sigma_{Y^n}^w (\mathds{1}_{Y^n} - \tilde{T}_{Y^n}^w) \right] \\
&= \sum_{w\leq \bar{w} } \Pr\{w\} \Tr\left[ \sigma_{Y^n}^w (\mathds{1}_{Y^n} - T_{Y^n}^w) \right] \notag \\
&\quad + \sum_{w> \bar{w}} \Pr\{w\} \Tr\left[ \sigma_{Y^n}^w \mathds{1}_{Y^n} \right]  \\
&\leq \sum_{w\leq \bar{w} } \Pr\{w\} \Tr\left[ \sigma_{Y^n}^w T_{Y^n}^w \right]  + \eps'  \\
&\leq \eps + \eps'.
\end{align}

On the other hand, the new test $\tilde{T}$ immediately yields
\begin{align}
\mathrm{Pr}_{\tilde{T}}\{\mathsf{H}_0|\mathsf{H}_1, w\} 
= 0 \leq \frac{\beta}{\eps'}, \quad \forall\, w> w^\dagger.
\end{align}
It remains to show that 
\begin{align}
\mathrm{Pr}_{\tilde{T}}\{\mathsf{H}_0|\mathsf{H}_1, w^\dagger \} 
\leq \frac{\beta}{\eps'}
\end{align}
due to the fact that $\mathrm{Pr}_{\tilde{T}}\{\mathsf{H}_0|\mathsf{H}_1, w\} = \mathrm{Pr}_{\tilde{T}}\{\mathsf{H}_0|\mathsf{H}_1, w\} $ for all $w\leq \bar{w}$, and the assumption that $\mathrm{Pr}_{\tilde{T}}\{\mathsf{H}_0|\mathsf{H}_1, w\}$ is increasing in $w$.

Let us suppose $\mathrm{Pr}_{ {T}}\{\mathsf{H}_0|\mathsf{H}_1, w^\dagger \} 
>\frac{\beta}{\eps'}$.
Note that by the choice of $w^\dagger$ given in  ~\eqref{eq:w_dagger}, we have
\begin{align}
\sum_{w>w^\dagger}\Pr\left\{ w  \right\} \leq \eps < \sum_{w\geq w^\dagger}\Pr\left\{ w  \right\}.
\end{align}
Then, 
\begin{align}
\beta &= \sum_{w \in \cW^n } \Pr\{w\} \mathrm{Pr}_{ {T}}\{\mathsf{H}_0|\mathsf{H}_1, w\}\\
&\geq \sum_{w\geq w^\dagger } \Pr\{w\}\mathrm{Pr}_{ {T}}\{\mathsf{H}_0|\mathsf{H}_1, w\} \\
&> \sum_{w\geq w^\dagger } \Pr\{w\}  \frac{\beta}{\eps'} \\
&> \beta,
\end{align} 
which leads to a contradiction. Hence, we prove our claim in  ~\eqref{eq:new_text_parameter}.

\section{Single-Letterization} \label{sec:single-letter}

In this appendix, we derive the single letterization studied in \eqref{eq:c-q4} of Step 3 in the proof of Theorem~\ref{theo:sc_cq}.
We first show that for any $c>0$,
\begin{align}\label{eq:deltadelta*}
\Delta^\star (Q_X, \Lambda^{X \to Y}, \rho_Y, c) \leq \Delta (Q_X, \Lambda^{X \to Y}, \rho_Y, c),
\end{align}
and then the single-letterization formula
\begin{align}
\Delta( Q_X^{\otimes n}|_{\mathcal{C}_n},\Lambda^{\otimes n}, \rho_Y^{\otimes n}, c) 
&\leq n \Delta^\star(Q_X, \Lambda^{X \to Y}, \rho_Y, c) + O(\sqrt{n})
\end{align}
for some set $\mathcal{C}_n \subset \mathcal{X}^n$.

For any  classical map $\mathcal{N}_{X\to U} \equiv \mathcal{N}_{X'\to U}$, let
\begin{align}
P_{X|U} := \frac{P_{UX}}{P_U} = \frac{ \mathcal{N}_{X'\to U} (Q_{XX'}) }{ \mathcal{N}_{X\to U} (Q_{X})  },
\end{align}
and a density operator on $Y$ for every $u\in\mathcal{U}$,
\begin{align}
\sigma_{Y|U= u} := \sum_{x} P_{X|U = u} (x)\rho_{Y}^{x}.
\end{align}
Then, it follows that
\begin{align*}
\quad \sum_{u\in\mathcal{U}} P_U(u) \left[ c D( \sigma_{Y|U=u} \| \rho_Y) - D(P_{X|U = u} \| Q_X) \right] &\leq \sup_{u\in\mathcal{U}} \left[ c D( \sigma_{Y|U=u} \| \rho_Y) - D(P_{X|U = u} \| Q_X) \right] \\
&= \sup_{ \{P_{X|U = u} \}_{u\in\mathcal{U}} } \left[ c D( \sigma_{Y|U=u} \| \rho_Y) - D(P_{X|U = u} \| Q_X) \right]
\end{align*}
Here, $\sigma_{Y|U= u}$ also depends on $P_{X|U = u}$.
The optimization set can hence be relaxed to the set of all measures $\tilde{P}_X$ on $X$, i.e.
\begin{align}
\quad \sup_{ \{P_{X|U = u} \}_{u\in\mathcal{U}} } \left[ c D( \sigma_{Y|U=u} \| \rho_Y) - D(P_{X|U = u} \| Q_X) \right] &\leq \sup_{  \tilde{P}_{X} } \left[ c D( \tilde{\sigma}_{Y} \| \rho_Y) - D( \tilde{P}_X \| Q_X) \right] \\
&= \Delta (Q_X, \Lambda^{X \to Y}, \rho_Y, c),
\end{align}
where we let $\tilde{\sigma}_{Y} := \sum_x \tilde{P}_X(x) \rho_Y^x$.
Since this holds for all classical maps $\mathcal{N}^{X\to U}$, we have
\begin{align}
\Delta^\star (Q_X, \Lambda^{X \to Y}, \rho_Y, c) \leq \Delta (Q_X, \Lambda^{X \to Y}, \rho_Y, c), \quad \forall c>0.
\end{align}

On the other hand, we have the following ``reverse inequality".

\begin{theo}
	[{\cite[Theorem B.1]{LHV18}}, c-q version] \label{theo:single-letter_LHV18}
	Let $|\mathcal{X}|<\infty$, $Q_X$ a measure on $\mathcal{X}$, $\rho_Y$ a density operator on $Y$, and $\Lambda^{X \to Y} : x\mapsto \rho_{Y}^x$ a c-q channel.
	Define $\eta := 1/\min_x Q_X(x)$.
	Then, for every $\delta\in(0,1)$ and $n> 3 \eta \log \frac{|\mathcal{X}|}{\delta}$, we may choose a set $\mathcal{C}_n \subseteq \mathcal{X}^n$ with $Q_{X}^{\otimes n}[\mathcal{C}_n] \geq 1-\delta$ such that
	\begin{align} \label{eq::single-letter_LHV18}
	&\quad \Delta(\mu_n,\Lambda^{\otimes n}, \rho_Y^{\otimes n}, c) \leq n \Delta^\star(Q_X, \Lambda^{X \to Y}, \rho_Y, c) + \log (\eta^{c+1}) \cdot \sqrt{  {3n\eta} \log \frac{|\mathcal{X}|}{\delta} }
	\end{align}
	for every $c>0$, where we defined $\mu_n := Q_{X}^{\otimes n}|_{\mathcal{C}_n}$.
\end{theo}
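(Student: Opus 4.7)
The plan is to adapt the proof of Theorem B.1 in \cite{LHV18} to the classical-quantum setting, the key point being that all of the classical single-letterization machinery goes through essentially unchanged because the ``quantum side'' $(\rho_Y^{\otimes n}, \rho_Y^{x^n})$ is still indexed by a classical variable, and the quantum relative entropy admits a clean chain rule when the second argument is a product state.

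I would start by fixing an arbitrary probability measure $\gamma$ on $\cX^n$ with $\gamma\ll\mu_n$ and writing out the objective $c D(\sigma_{Y^n}(\gamma)\|\rho_Y^{\otimes n})-D(\gamma\|\mu_n)$, where $\sigma_{Y^n}(\gamma):=\sum_{x^n}\gamma(x^n)\,\rho_Y^{x_1}\otimes\cdots\otimes\rho_Y^{x_n}$. On the classical side the usual chain rule gives $D(\gamma\|\mu_n)=\sum_i D(\gamma_{X_i\mid X^{i-1}}\|\mu_{n,X_i\mid X^{i-1}}\mid\gamma_{X^{i-1}})$. On the quantum side I would use the identity
\[
D(\sigma_{Y^n}\|\rho_Y^{\otimes n})=\sum_{i=1}^n D\!\left(\sigma_{Y_i Y^{i-1}}\,\|\,\rho_Y\otimes\sigma_{Y^{i-1}}\right),
\]
which follows directly from $-S(\sigma_{Y^n})=-\sum_i S(Y_i\mid Y^{i-1})_\sigma$ together with $\tr[\sigma_{Y^n}\log\rho_Y^{\otimes n}]=\sum_i\tr[\sigma_{Y_i}\log\rho_Y]$. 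A standard time-sharing trick then enters: introduce $T$ uniform on $\{1,\dots,n\}$, independent of everything else, and set the auxiliary single-letter variable $U:=(T,X_{<T})$, so that summing the above per-coordinate relative entropies corresponds exactly to averaging over $T$.

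The main obstacle, and the reason for the set $\mathcal{C}_n$, is that $\mu_n=Q_X^{\otimes n}|_{\mathcal{C}_n}$ is not a product measure, so $\mu_{n,X_i\mid X^{i-1}}(\cdot\mid x^{i-1})$ generally differs from $Q_X$. To close the argument one must choose $\mathcal{C}_n$ so that, uniformly over $x^{i-1}$ appearing in the projection of $\mathcal{C}_n$, the conditional $\mu_{n,X_i\mid X^{i-1}}(\cdot\mid x^{i-1})$ is close to $Q_X$ in total variation, at a rate that allows the per-coordinate error term to contribute only $O(1/\sqrt{n})$ after summation. Concretely, I would take $\mathcal{C}_n$ to be the set of sequences whose prefix empirical statistics stay within a small deviation of $Q_X$, and bound the probability of the complement by a Chernoff--Hoeffding inequality applied to each coordinate followed by a union bound over the at most $n|\cX|$ possible events. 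The minimal--probability parameter $\eta=1/\min_x Q_X(x)$ enters the concentration estimate, which explains the factor $\sqrt{3n\eta\log(|\cX|/\delta)}$, and the factor $\log\eta^{c+1}$ arises as a uniform upper bound on the single-coordinate integrand $c\log(\rho_Y^{x}/\rho_Y)-\log(Q_X)$ (in the classical-variable part it is bounded by $\log\eta$, and the multiplier $c+1$ records the two terms being compared).

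Finally, summing the per-coordinate bounds yields
\[
c D(\sigma_{Y^n}\|\rho_Y^{\otimes n})-D(\gamma\|\mu_n)\ \le\ n\bigl[c D(\sigma_{Y\mid U}\|\rho_Y\mid P_U)-D(P_{X\mid U}\|Q_X\mid P_U)\bigr]+\text{error},
\]
with $P_{X\mid U=(t,x^{t-1})}:=\gamma_{X_t\mid X^{t-1}=x^{t-1}}$, $\sigma_{Y\mid U=(t,x^{t-1})}:=\sum_x P_{X\mid U=(t,x^{t-1})}(x)\rho_Y^x$, and the error of the promised order. Taking the supremum over $\gamma$ on the left and bounding the right-hand side by the supremum over all channels $\mathcal N^{X\to U}$ gives exactly $n\Delta^\star(Q_X,\Lambda,\rho_Y,c)$ plus the advertised $O(\sqrt n)$ remainder. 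The hard part, as indicated, is the uniform control on $\mathcal{C}_n$; once this is in place, the quantum chain rule makes the rest formally identical to the classical Csisz\'ar--K\"orner/LHV argument.
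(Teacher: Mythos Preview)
Your overall plan is right in spirit, but there are two concrete issues, one of which is a genuine gap and the other an unnecessary complication that obscures the clean route the paper takes.

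\medskip
\textbf{The self-inflicted ``hard part''.} You decompose $D(\gamma\|\mu_n)$ with respect to the conditionals $\mu_{n,X_i\mid X^{i-1}}$ and then worry about controlling these uniformly. This is unnecessary: since $\mu_n=Q_X^{\otimes n}\mathbf{1}_{\mathcal{C}_n}$ is the \emph{unnormalized} restriction and $\gamma\ll\mu_n$ forces $\gamma$ to be supported on $\mathcal{C}_n$, one has the exact identity $D(\gamma\|\mu_n)=D(\gamma\|Q_X^{\otimes n})$. The chain rule with reference $Q_X$ is therefore exact at every coordinate, with no approximation needed. The paper accordingly takes $\mathcal{C}_n:=\{x^n:\hat P_{x^n}\le(1+\epsilon_n)Q_X\}$ (full-sequence empirical type, not prefix statistics), and the only place a ``wrong distribution'' appears is at the very end: the time-shared marginal $P_{X_I}=\tfrac1n\sum_i P_{X_i}$ is a mixture of empirical measures of sequences in $\mathcal{C}_n$, hence $P_{X_I}\le(1+\epsilon_n)Q_X$. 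The passage from $\phi(P_{X_I})$ to $\phi(Q_X)=\Delta^\star(Q_X,\Lambda,\rho_Y,c)$ is then handled by a one-line continuity lemma (Lemma~\ref{lemm:cont_LHV18}), which is where the factor $\log(\eta^{c+1})$ actually comes from.

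\medskip
\textbf{The missing quantum step.} Your chain-rule identity $D(\sigma_{Y^n}\|\rho_Y^{\otimes n})=\sum_i D(\sigma_{Y_iY^{i-1}}\|\rho_Y\otimes\sigma_{Y^{i-1}})$ is correct, but each summand is (implicitly) conditioned on the \emph{quantum} register $Y^{i-1}$, whereas your single-letter auxiliary $U=(T,X_{<T})$ conditions on the \emph{classical} $X^{i-1}$. These do not ``correspond exactly''. To bridge them you need two further ingredients that the paper makes explicit: first, conditioning reduces entropy gives $H(Y_i\mid Y^{i-1})_\sigma\ge H(Y_i\mid Y^{i-1}X^{i-1})_\sigma$; second, the Markov chain $Y_i-X^{i-1}-Y^{i-1}$ (which holds because the c-q channel is memoryless, so $\sigma_{Y_iY^{i-1}\mid X^{i-1}=x^{i-1}}$ is a product state) gives $H(Y_i\mid Y^{i-1}X^{i-1})_\sigma=H(Y_i\mid X^{i-1})_\sigma$. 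Together these yield the \emph{inequality}
\[
D(\sigma_{Y^n}\|\rho_Y^{\otimes n})\ \le\ \sum_{i=1}^n D\bigl(\sigma_{Y_i\mid X^{i-1}}\,\big\|\,\rho_Y\ \big|\ P_{X^{i-1}}\bigr),
\]
after which the time-sharing with $U=(I,X^{I-1})$ is legitimate. Without this step your final displayed inequality does not follow from what you have written.
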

Before going into the proof, let us see why this is true.
First note that under the memoryless c-q channel $\Lambda^{\otimes n}$, one has a ``q-c-q" Markov chain $Y_i - (X_1,\ldots, X_{i-1}) - (Y_1, \ldots, Y_{i-1})$.
Then, using the chain rule of conditional entropies, data-processing inequality under partial trace, and the Markovian property, it will be shown that
$d(\mu_n,\Lambda^{\otimes n}, \rho_Y^{\otimes n}, c)  \leq n \phi(\tilde{P}_X)$ for some $\tilde{P}_X$ being a mixture of empirical measures of sequences in the support of $\mu_n$, where
\begin{align}
\phi(\tilde{P}_X) := \sup_{\mathcal{N}^{X\to U}} \left\{ c D( \tilde{\sigma}_{Y|U} \| \rho_Y | P_U) - D( \tilde{P}_{X|U} \| Q_X |P_U )\right\}.
\end{align}
Note that here, $\tilde{P}_{X|U}$ and $\tilde{\sigma}_{Y|U=u} := \sum_x \tilde{P}_{X|U=u} (x) \rho_Y^x$ are built on $\tilde{P}_X$.
To the contrary, in viewing of the definition of $\Delta^\star$ given in  ~\eqref{eq:d*_cq}, 
the corresponding states $P_{X|U}$ and $\sigma_{Y|U}:= \sum_x P_{X|U}(x) \Lambda^{X \to Y}(x)$ are built on $Q_X$, i.e.
\begin{align}
\quad\Delta^\star(Q_X, \Lambda^{X \to Y}, \rho_Y, c)\notag &:= \sup_{\mathcal{N}^{X\to U}} \left\{ c D( {\sigma}_{Y|U} \| \rho_Y | P_U) - D( {P}_{X|U} \| Q_X |P_U )\right\} \\
&\equiv \phi(Q_X).
\end{align}
In other words, $\phi(\tilde{P}_X)$ is a ``wrong'' quantity which depends on a ``wrong" input distribution.
To overcome this, we need to choose a set $\mathcal{C}_n \subseteq \mathcal{X}^n$ and measure $\mu_n|_{\mathcal{C}_n}$ such that $\tilde{P}_X \approx Q_X$ and $\phi( \tilde{P}_X) \approx \phi(Q_X)$ (i.e.~the first-order term is matched).
By a continuity property, Lemma~\ref{lemm:cont_LHV18} below, the second-order term $O(\sqrt{n})$ in  ~\eqref{eq::single-letter_LHV18} actually comes from how far $\tilde{P}_X$ is from $Q_X$.

\begin{lemm}[Continuity {\cite[Lemma B.2]{LHV18}} ] \label{lemm:cont_LHV18}
	If $\tilde{P}_X \leq (1+\epsilon) Q_X$ for some $\epsilon \in (0,1)$, the
	\begin{align}
	\phi( \tilde{P}_X) \leq \phi(Q_X) + \log (\eta^{c+1}) \cdot \epsilon.
	\end{align}
\end{lemm}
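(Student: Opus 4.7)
The plan is to recognize $\phi$ as the upper concave envelope of a single-letter functional, after which the asymmetric pointwise hypothesis $\tilde{P}_X \leq (1+\epsilon) Q_X$ produces the desired bound via a direct convexity argument. Define the single-letter functional $\psi(\mu) := c\,D(\sigma_\mu\|\rho_Y) - D(\mu\|Q_X)$ on the probability simplex over $\mathcal{X}$, where $\sigma_\mu := \sum_x \mu(x)\rho_Y^x$. Every stochastic map $\mathcal{N}^{X\to U}$ provides a decomposition $P_X = \sum_u P_U(u) P_{X|U=u}$, and the objective defining $\phi(P_X)$ is exactly $\sum_u P_U(u)\,\psi(P_{X|U=u})$. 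Since the cardinality of $U$ is not constrained, taking the supremum over $\mathcal{N}$ yields the upper concave envelope $\hat\psi$ of $\psi$, so that $\phi(P_X) = \hat\psi(P_X)$.

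The hypothesis $\tilde{P}_X \leq (1+\epsilon) Q_X$ permits writing $Q_X$ itself as a convex combination involving $\tilde{P}_X$: setting $R := \epsilon^{-1}\bigl((1+\epsilon) Q_X - \tilde{P}_X\bigr)$, one checks immediately that $R \geq 0$ pointwise and $\sum_x R(x) = 1$, hence $R$ is a probability measure and $Q_X = \tfrac{1}{1+\epsilon}\tilde{P}_X + \tfrac{\epsilon}{1+\epsilon} R$. Applying concavity of $\hat\psi$ gives $\phi(Q_X) \geq \tfrac{1}{1+\epsilon}\phi(\tilde{P}_X) + \tfrac{\epsilon}{1+\epsilon}\phi(R)$, which rearranges to $\phi(\tilde{P}_X) \leq \phi(Q_X) + \epsilon\,\bigl[\phi(Q_X) - \phi(R)\bigr]$.

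It remains to bound the range of $\phi$. For any probability measure $\mu$ on $\mathcal{X}$, one has $D(\mu \| Q_X) = -H(\mu) + \mathbb{E}_\mu[\log(1/Q_X)] \leq \log \eta$, while data processing through the c-q channel $\Lambda \colon x \mapsto \rho_Y^x$ gives $D(\sigma_\mu \| \rho_Y) \leq D(\mu \| Q_X) \leq \log \eta$. Therefore $\psi(\mu) \in [-\log \eta,\, c\log \eta]$, and the same range bounds carry over to $\hat\psi$. Thus $\phi(Q_X) - \phi(R) \leq (c+1)\log \eta = \log(\eta^{c+1})$, which combined with the previous step yields the claimed $\phi(\tilde{P}_X) \leq \phi(Q_X) + \log(\eta^{c+1})\,\epsilon$. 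The main conceptual obstacle is spotting this concave-envelope reformulation: the one-sided hypothesis rules out a straightforward channel-by-channel comparison (the reverse ratio $Q_X(x)/\tilde{P}_X(x)$ can blow up where $\tilde{P}_X$ vanishes), but it is precisely the right condition to write $Q_X$ as a convex mixture involving $\tilde{P}_X$ and let concavity do the remaining work.
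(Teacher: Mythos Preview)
Your proof is correct. The paper itself does not prove this lemma; it simply cites \cite[Lemma B.2]{LHV18} and uses the result as a black box in the single-letterization argument. Your concave-envelope approach---identifying $\phi$ as the upper concave envelope of $\psi(\mu) = c\,D(\sigma_\mu\|\rho_Y) - D(\mu\|Q_X)$, writing $Q_X$ as a convex combination of $\tilde P_X$ and a residual $R$, and bounding the oscillation of $\phi$ by $(c+1)\log\eta$ via $0\le D(\mu\|Q_X)\le \log\eta$ together with data processing $D(\sigma_\mu\|\rho_Y)\le D(\mu\|Q_X)$---is precisely the argument used in \cite{LHV18} for the classical case, and it carries over verbatim to the present c-q setting since the only quantum ingredient (monotonicity of relative entropy under the c-q channel $\Lambda$) is standard.
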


\begin{proof}[Proof of Theorem~\ref{theo:single-letter_LHV18}]
	Denote by $\hat{P}_{X^n}$ the empirical measure of $X^n$ distributed by $Q_{X}^{\otimes n}$.
	Let  $n> 3 \eta \log \frac{|\mathcal{X}|}{\delta}$ and define
	\begin{align} \label{eq:set}
	\begin{split}
	\epsilon_n &:= \sqrt{ \frac{3\eta}{n} \log \frac{|\mathcal{X}|}{\delta} } \in (0,1), \\
	\mathcal{C}_n &:= \left\{ x^n : \hat{P}_{x^n} \leq (1+\epsilon_n) Q_X \right\}.
	\end{split}
	\end{align}
	For each $x$,
	\begin{align}
	\Pr\left\{  \hat{P}_{X^n}(x) > (1+\epsilon_n) Q_X(x) 
	\right\} \leq \e^{- \frac{n}{3} Q_X(x) \epsilon_n^2 } \leq \frac{\delta}{|X|}
	\end{align}
	by Chernoff bound for Bernoulli variables (see \cite[Lemma B.3]{LHV18}).
	Then, union bound implies that $Q_{X}^{\otimes n}[\mathcal{C}_n] \geq 1-\delta$. So far, it is all classical.

	Consider any $P_{X^n} \ll \mu_n := Q_{X}^{\otimes n}|_{\mathcal{C}_n}$.
	Let
	\begin{align}
	\sigma_{Y^n} := \sum_{x^n} P_{X^n}(x^n) \rho_{Y^n}^{x^n}.
	\end{align}
	Note that
	\begin{align}
	D(\sigma_{Y^n}\| \rho_{Y}^{\otimes n})
	= - H(Y^n)_\sigma - \Tr\left[ \sigma_{Y^n} \log \rho_{Y}^{\otimes n} \right],
	\end{align}
	and, further,
	\begin{align}
	\Tr\left[ \sigma_{Y^n} \log \rho_{Y}^{\otimes n} \right]
	&= \sum_{i=1}^n \Tr\left[ \sigma_{Y_i} \log  \rho_Y \right].
	\end{align}
	Using the chain rule of conditional entropies, i.e.~$H(Y^n)_\sigma = \sum_{i=1}^n H(Y_i|Y^{i-1})_\sigma$ for $Y^{i-1} := Y_1,\ldots, Y_{i-1}$, we have
	\begin{align}
	D(\sigma_{Y^n}\| \rho_{Y}^{\otimes n}) &= 
	-\sum_{i=1}^n  \left(
	H(Y_i|Y^{i-1})_\sigma + \Tr\left[ \sigma_{Y_i} \log \rho_Y \right] \right) \\
	&\leq -\sum_{i=1}^n  \left(
	H(Y_i|Y^{i-1}, X^{i-1})_\sigma + \Tr\left[ \sigma_{Y_i} \log  \rho_Y \right] \right) \label{eq:single_inequal}\\
	&= -\sum_{i=1}^n  \left(
	H(Y_i|X^{i-1})_\sigma + \Tr\left[ \sigma_{Y_i} \log \rho_Y \right] \right) \label{eq:single_equal}\\
	&= \sum_{i=1}^n D\left( \sigma_{Y_i|X^{i-1} } \| \rho_Y | P_{X^{i-1}} \right), \label{eq:single_last}
	\end{align}
	where 
	\begin{align}
	\sigma_{Y_i|X^{i-1} = x^{i-1} } :=  \sum_{x_i} P_{X_i|X^{i-1} = x^{i-1}}(x_i)\rho_{Y_i}^{x_i}.
	\end{align}
	Inequality \eqref{eq:single_inequal} follows from the fact that conditioning reduces entropies;
	equality \eqref{eq:single_equal} is due ot the Markov chain
	$Y_i - X^{i-1} - Y^{i-1}$ under the memoryless c-q channel $\Lambda^{\otimes n}$.
	To see this, here is an example of $Y_2 - X_1 - Y_1$.
	Note that $\sigma_{Y_1 Y_2 | X_1 = x_1} 
= \sum_{x_2} P_{X_1X_2|X_1 = x_1}(x_1,x_2) \rho_{Y_1}^{x_1} \otimes \rho_{Y_2}^{x_2}
=  \rho_{Y_1}^{x_1} \otimes \tilde{\rho}_{Y_2}^{x_1}$,
where we denote by a state $\tilde{\rho}_{Y_2}^{x_1} := \sum_{x_2} P_{X_1X_2|X_1 = x_1}(x_1,x_2) \rho_{Y_2}^{x_2}$ for simplicity.
Next, we will show that $I(Y_1;Y_2|X_1)_\sigma = I(Y_1; X_1 Y_2)_\sigma - I(Y_1;X_1)_\sigma = 0$.
Using the block-diagonal structure of $X_1$ and the fact that 
$D(A\otimes C \| B\otimes C) = D(A\|B)$ for every $A,B,C\geq 0$ and $\Tr[C] = 1$, it follows that
\begin{align}
D(\sigma_{X_1Y_1Y_2}\| \sigma_{Y_1} \otimes \sigma_{X_1Y_2})
&= \sum_{x_1} \Tr\left[ P_{X_1}(x_1) \rho_{Y_1}^{x_1} \otimes \tilde{\rho}_{Y_2}^{x_1}
\left( \log P_{X_1}(x_1) \rho_{Y_1}^{x_1} \otimes \tilde{\rho}_{Y_2}^{x_1}
- \log P_{X_1}(x_1) \sigma_{Y_1} \otimes \tilde{\rho}_{Y_2}^{x_1} \right) \right] \\
&=\sum_{x_1} \Tr\left[ P_{X_1}(x_1) \rho_{Y_1}^{x_1} 
\left( \log P_{X_1}(x_1) \rho_{Y_1}^{x_1} 
- \log P_{X_1}(x_1) \sigma_{Y_1} \right) \right] \\
&= D(\sigma_{X_1Y_1}\| \sigma_{X_1} \otimes \sigma_{Y_1})
\end{align}
as desired.
The more general cases of $Y_i - X^{i-1} - Y^{i-1}$ follows similarly.
	In the last line \eqref{eq:single_last}, we have used the fact $\sum_{x^{i-1}} P_{X^{i-1}} (x^{i-1}) \sigma_{Y_i|X^{i-1} = x^{i-1} } = \sigma_{Y_i}$.

	By similar arguments, one can verify that
	\begin{align}
	D(P_{X^n}\| Q_X^{\otimes n}) &= 
	\sum_{i=1}^n D\left(P_{X_i|X^{i-1}} \| Q_X | P_{X^{i-1}}\right).
	\end{align}
	Now, we have
	\begin{align}
	\quad c D(\sigma_{Y^n}\| \rho_{Y}^{\otimes n}) - D(P_{X^n}\| Q_X^{\otimes n}) &\leq \sum_{i=1}^n c D\left( \sigma_{Y_i|X^{i-1} } \| \rho_Y | P_{X^{i-1}} \right)
	- \sum_{i=1}^n D\left(P_{X_i|X^{i-1}} \| Q_X | P_{X^{i-1}}\right) \\
	&= n \left[ c D\left( \sigma_{Y_I|I X^{I-1} } \| \rho_Y | P_{X^{I-1}} \right) 
	- D\left(P_{X_I|I X^{I-1}} \| Q_X | P_{I X^{I-1}}\right) \right] \\
	&\leq \phi(P_{X_I}),
	\end{align}
	where  we have introduced a random variable $I$ uniformly distributed on $\{1,\ldots, n\}$.
	Since $P_{X_I} = \frac1n \sum_{i=1}^n P_{X_i}$ and $P_{X^n}$ is supported on $\mathcal{C}_n$, $P_{X_I}$ is a mixture of empirical measures of sequences in $\mathcal{C}_n$.
	Then, $P_{X_I} \leq (1+\epsilon_n) Q_X$.
	
	Now, invoking Lemma~\ref{lemm:cont_LHV18}, we have
	\begin{align}
	\quad d(\mu_n,\Lambda^{\otimes n}, \rho_Y^{\otimes n}, c)  &= \sup_{P_{X^n}\ll \mu_n } \left\{ cD(\sigma_{Y^n}\| \rho_Y^{\otimes n}) - D(P_{X^n}\| \mu_n) \right\} \\
	&\leq n \phi(P_{I}) \\
	&\leq n \phi( Q_X) + n \epsilon_n \\
	&= n \Delta^\star(Q_X, \Lambda^{X \to Y}, \rho_Y, c) + \log (\eta^{c+1}) \cdot n \epsilon_n,
	\end{align}
	which completes the proof.
\end{proof}

\section{A Variational Formula} \label{sec:variational}

\begin{prop}
	[A Variational Formula for $\Delta$] \label{prop:vaiational}
	Given any positive measure $\mu$ on $\mathcal{X}$, $\Lambda^{X \to Y} : \mathcal{X} \to \mathcal{D(H)}$, positive definite operator $\nu$ on $\mathcal{H}$, and constant $c>0$, it holds that
	\begin{align} \label{eq:variational}
	\Delta(\mu, \Lambda^{X \to Y}, \nu, c) = \sup_{ T > 0 } \left\{
	\log \sum_x \mu(x) \e^{c \Tr[ \rho_Y^x \log T]} - c \log \Tr\left[ \e^{\log \nu + \log T} \right]
	\right\},
	\end{align}
	where 
	$\Delta$ is defined in \eqref{eq:d_cq}.
\end{prop}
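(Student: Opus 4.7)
The plan is to reduce the proposition to two well-known variational representations of the relative entropy---one quantum (Petz's formula) and one classical (Donsker--Varadhan duality)---and then interchange the order of the resulting suprema.

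First I would recall Petz's variational formula for the quantum relative entropy: for any state $\rho$ and any positive operator $\nu$,
\begin{equation}
D(\rho \| \nu) \;=\; \sup_{T > 0}\,\Big\{\Tr[\rho \log T] \,-\, \log \Tr\big[\e^{\log \nu + \log T}\big]\Big\}.
\end{equation}
Applying this to the quantum relative entropy $D\!\left(\sum_x \rho_Y^x\, \gamma_X(x)\,\big\|\,\nu\right)$ appearing in the definition of $\Delta$, and using linearity of the trace to rewrite $\Tr\!\left[\sum_x \rho_Y^x \gamma_X(x) \log T\right] = \sum_x \gamma_X(x)\,\Tr[\rho_Y^x \log T]$, we have
\begin{equation}
c\,D\!\left(\sum_x \rho_Y^x \gamma_X(x)\,\Big\|\,\nu\right) \;=\; \sup_{T > 0} \Big\{ c \sum_x \gamma_X(x)\,\Tr[\rho_Y^x \log T] \,-\, c \log \Tr\big[\e^{\log \nu + \log T}\big]\Big\}.
\end{equation}

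Next I would substitute this into the defining expression for $\Delta$ and interchange the two suprema (over $\gamma_X \ll \mu$ and over $T > 0$), which is justified since the objective is jointly concave-convex in an appropriate sense (or simply because the order of two suprema can always be exchanged). This produces, for each fixed $T > 0$, the inner optimization
\begin{equation}
\sup_{\gamma_X \ll \mu}\Big\{\sum_x \gamma_X(x)\, f_T(x) \,-\, D(\gamma_X \| \mu)\Big\}, \qquad f_T(x) := c\,\Tr[\rho_Y^x \log T].
\end{equation}
This is precisely the classical Donsker--Varadhan duality (for a general positive reference measure $\mu$), whose value equals $\log \sum_x \mu(x)\, \e^{f_T(x)}$. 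Substituting back yields
\begin{equation}
\Delta(\mu, \Lambda^{X\to Y}, \nu, c) = \sup_{T>0}\Big\{ \log \sum_x \mu(x)\, \e^{c\,\Tr[\rho_Y^x \log T]} \;-\; c \log \Tr\big[\e^{\log \nu + \log T}\big]\Big\},
\end{equation}
which is the claimed identity~\eqref{eq:variational}.

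The main (mild) obstacles are bookkeeping: making sure Petz's formula is invoked in the right form (in particular, that the parameterisation $H = \log T$ with $T > 0$ ranges over all Hermitian $H$), verifying that the classical Donsker--Varadhan duality applies to the possibly non-probability positive measure $\mu$ (it does, with the same proof via the elementary inequality $\gamma(x)\log(\gamma(x)/(\mu(x)\e^{f(x)})) \ge \gamma(x) - \mu(x)\e^{f(x)}$), and handling the support condition $\gamma_X \ll \mu$ (automatic, since $T > 0$ ensures $f_T$ is finite on the whole alphabet and the optimiser of the classical dual is proportional to $\mu(x)\e^{f_T(x)}$, which is supported on $\operatorname{supp}(\mu)$). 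Once these pieces are in place, the identity follows directly.
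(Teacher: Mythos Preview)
Your proposal is correct and uses the same two ingredients as the paper's proof: Petz's variational formula for the quantum relative entropy and the classical Donsker--Varadhan duality. The only difference is organizational: you invoke both formulas as \emph{equalities} and then simply swap the two suprema (which is legitimate since $\sup_{\gamma_X}\sup_T = \sup_T\sup_{\gamma_X}$), whereas the paper proves the two directions of \eqref{eq:variational1} separately---for ``$\ge$'' it fixes $T$, chooses $P_X$ to be the Gibbs measure $P_X(x)\propto \mu(x)\e^{c\Tr[\rho_Y^x\log T]}$, and uses one direction of Petz's formula; for ``$\le$'' it fixes $P_X$, chooses $T=\e^{\log\sigma_Y-\log\nu}$ (the optimizer in Petz's formula), and uses one direction of Donsker--Varadhan. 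Your presentation is more streamlined, but the underlying argument is identical.
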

\begin{proof}
	[Proof of Proposition~\ref{prop:vaiational}]
	Recalling  ~\eqref{eq:d_cq}, we will prove 
	\begin{align} \label{eq:variational1}
	\sup_{P_X \ll \mu } \left\{ c D(\sigma_Y\|\nu) - D(P_X\|\mu)
	\right\}
	=
	\sup_{ T> 0 } \left\{
	\log \sum_x \mu(x) \e^{c \Tr[ \rho_Y^x \log T]} - c \log \Tr\left[ \e^{\log \nu + \log T} \right]
	\right\},
	\end{align}
	where we denote by $\sigma_Y := \sum_x P_X(x) \rho_Y^x$, and the supremum on the left-hand side is taken over all probability measures on $\mathcal{X}$.
	
	We commence the proof by showing ``$\geq$" in  ~\eqref{eq:variational1}.
	For any $T > 0$, we let 
	\begin{align} \label{eq:variational2}
	P_X(x) := \frac{\mu(x) \e^{c \Tr[ \rho_Y^x \log T]}  }{ \sum_{\bar x} \mu(\bar x) \e^{c \Tr[ \rho_Y^{\bar x} \log T]} }, \quad \forall x \in \mathcal{X}.
	\end{align}
	Applying the variational formula of the quantum relative entropy $D(\sigma_Y\|\nu)$ given in Proposition~\ref{prop:variational_relative} below with $G = T$ yields
	\begin{align}
	 \log \sum_x \mu(x) &\e^{c \Tr[ \rho_Y^x \log T]} - c \log \Tr\left[ \e^{\log \nu + \log T} \right]\nonumber \\
	&\leq \log \sum_x \mu(x) \e^{c \Tr[ \rho_Y^x \log T]} + D(\sigma_Y\|\nu) - c \Tr\left[ \sigma_Y \log T \right]. \label{eq:variational3}
	\end{align}
	On the other hand, by the construction in  ~\eqref{eq:variational2}, one has
	\begin{align}
	D(P_X\| \mu) &= \sum_x P_X(x) \log \frac{P_X(x)}{\mu(x)} \\
	&= \sum_x P_X(x) c \Tr[ \rho_Y^x \log T] - \log \left( \sum_{\bar x} \mu(\bar x) \e^{c \Tr[ \rho_Y^{\bar x} \log T]} \right) \\
	&= c \Tr\left[ \sigma_Y \log T \right] - \log \left( \sum_{\bar x} \mu(\bar x) \e^{c \Tr[ \rho_Y^{\bar x} \log T]} \right). \label{eq:variational4}
	\end{align}
	Hence,  ~\eqref{eq:variational3} together with \eqref{eq:variational4} give
	\begin{align}
	\log \sum_x \mu(x) \e^{c \Tr[ \rho_Y^x \log T]} - c \log \Tr\left[ \e^{\log \nu + \log T} \right] 
	\leq c D(\sigma_Y \| \nu) - D(P_X\|\mu)
	\end{align}
	as desired.
	
	Next, we prove ``$\geq$" in  ~\eqref{eq:variational1}.
	For any probability measure $P_X \ll \mu$ on $\mathcal{X}$, we let 
	\begin{align}
	G(x) &:= \e^{c \Tr\left[ \rho_Y^x \log T\right]}, \quad \forall x \in \mathcal{X}; \\
	T &:= \e^{ \log \sigma_Y - \log \nu  } \in \mathcal{P(H)}; \\
	\sigma_Y &:= \sum_x P_X(x) \rho_Y^x \in \mathcal{D(H)}.
	\end{align}
	Here, it is not hard to see that $G(x) > 0 $ for all $x\in\mathcal{X}$.

	Now, we apply the variational formula of the classical relative entropy $D(P_X\|\mu)$ in Proposition~\ref{prop:variational_relative} again to obtain
	\begin{align}
	c D(\sigma_Y \| \nu) - D(P_X \|\mu) &\leq
	c D(\sigma_Y \| \nu) - \sum_x P_X(x) \log G(x) + \log \sum_x \mu(x) G(x) \\
	&= c D(\sigma_Y \| \nu) - \sum_x P_X(x) c \Tr\left[ \rho_Y^x \log T \right] + \log \sum_x \mu(x) \e^{c \Tr\left[ \rho_Y^x \log T\right]} \\
	&= c D(\sigma_Y \| \nu) - c \Tr\left[ \sigma_Y \log T \right] + \log \sum_x \mu(x) \e^{c \Tr\left[ \rho_Y^x \log T\right]}. \label{eq:variational5}
	\end{align}
	Moreover, by the choice of $\log T = \log \sigma_Y - \log \nu$ and noting that $\sigma_Y \in \mathcal{D(H)}$, it holds that
	\begin{align}
	c D(\sigma_Y \| \nu) -  c \Tr\left[ \sigma_Y \log T \right] & =0 \\
	&= -c \log \Tr\left[ \sigma_Y \right] \\
	&= -c \log \Tr\left[ \e^{\log \nu + \log T }\right]. \label{eq:variational6}
	\end{align}
	Therefore,  ~\eqref{eq:variational5} and \eqref{eq:variational6} lead to
	\begin{align}
	c D(\sigma_Y \| \nu) - D(P_X \|\mu) &\leq 
	\log \sum_x \mu(x) \e^{c \Tr\left[ \rho_Y^x \log T\right]} - c \log \Tr\left[ \e^{\log \nu + \log T}\right],
	\end{align}
	which completes the proof.
	
\end{proof}

\begin{prop}
	[A Variational Formula for Quantum Relative Entropy {\cite{Pet88}}] \label{prop:variational_relative}
	For any $\rho\in\mathcal{D(H)}$ and $\sigma \in \mathcal{P(H)}$ such that $\rho \ll \sigma$, it holds that
	\begin{align}
	D(\rho\|\sigma) = \sup_{G \gg \sigma} \left\{
	\Tr\left[ \rho \log G \right] - \log \Tr\left[ \e^{\log \sigma + \log G} \right]
	\right\},
	\end{align}
	where the supremum is taken over all positive semi-definite operators on $\mathcal{H}$ whose support contain that of~$\sigma$.
	
\end{prop}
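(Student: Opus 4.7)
The plan is to derive this identity as a direct application of the Gibbs variational principle (Klein's inequality),
\[
\log\Tr\bigl[\e^{H}\bigr] \;=\; \sup_{\omega\in\mathcal{D}(\mathcal{H})} \bigl\{\Tr[\omega H] - \Tr[\omega \log\omega]\bigr\},
\]
valid for every self-adjoint $H$ and itself an immediate consequence of the non-negativity of quantum relative entropy applied to the pair $\omega$ and the Gibbs state $\e^{H}/\Tr[\e^H]$. The ``$\ge$'' direction of the claimed formula then drops out by taking $H = \log\sigma + \log G$ and evaluating the variational identity at the admissible choice $\omega = \rho$ (permissible because $\rho \ll \sigma$), yielding
\[
\log \Tr\bigl[\e^{\log\sigma + \log G}\bigr] \;\ge\; \Tr[\rho\log\sigma] + \Tr[\rho\log G] - \Tr[\rho\log\rho] \;=\; \Tr[\rho\log G] - D(\rho\|\sigma),
\]
which rearranges to $\Tr[\rho\log G] - \log\Tr[\e^{\log\sigma+\log G}] \le D(\rho\|\sigma)$ for every admissible $G$; taking the supremum over $G$ closes out this half.

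For the matching direction I would exhibit a saturating choice, namely $G := \e^{\log\rho - \log\sigma}$, for which $\log G = \log\rho - \log\sigma$ and consequently $\log\sigma + \log G = \log\rho$ as a self-adjoint matrix identity (no non-commutativity issue arises, since $\log\sigma$ is being added to its own negative). A direct computation then gives $\Tr[\rho\log G] = \Tr[\rho\log\rho] - \Tr[\rho\log\sigma] = D(\rho\|\sigma)$ and $\log\Tr[\e^{\log\sigma + \log G}] = \log\Tr[\rho] = 0$, so the quantity inside the supremum equals $D(\rho\|\sigma)$ exactly and the supremum is attained.

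The only subtlety I anticipate is bookkeeping with supports when $\sigma$ (and possibly $\rho$) are not of full rank: Klein's inequality must be applied within $\operatorname{supp}(\sigma)$, and the explicit saturator $G = \e^{\log\rho - \log\sigma}$ satisfies $\operatorname{supp}(G) \supseteq \operatorname{supp}(\sigma)$ automatically only in the equal-support case. In the strict regime $\operatorname{supp}(\rho)\subsetneq\operatorname{supp}(\sigma)$, I would regularize via $G_\eps := \e^{\log(\rho + \eps P_\sigma) - \log\sigma}$, where $P_\sigma$ projects onto $\operatorname{supp}(\sigma)$, and pass $\eps \downarrow 0$ using continuity of all quantities on the relevant subspace. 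Beyond these routine support considerations I foresee no genuine obstacle; the result is a textbook instance of convex duality for the quantum relative entropy, with $G$ and $\rho$ being the dual pair linked by the exponential map.
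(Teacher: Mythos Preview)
Your argument is correct and is precisely the standard proof of Petz's variational formula: one direction from the Gibbs variational principle (equivalently, non-negativity of relative entropy against the Gibbs state $\e^{\log\sigma+\log G}/\Tr[\e^{\log\sigma+\log G}]$), the other from the explicit saturator $G=\e^{\log\rho-\log\sigma}$, with the $\eps$-regularization handling the strict-support case. The paper does not supply its own proof of this proposition; it is stated as a cited result from \cite{Pet88} and used as a black box in the proof of Proposition~\ref{prop:vaiational}. Your write-up therefore fills in what the paper omits, and does so by the same route the original reference takes.
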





\end{document}